\documentclass[12pt]{article}

\usepackage[margin=1in]{geometry}
\usepackage{url}
\usepackage{amssymb, amsmath, mathtools, amsfonts, amsthm, dsfont}
\usepackage{float, xcolor, xspace, graphicx, subfigure, appendix, enumitem, booktabs, multirow}
\usepackage{algorithm, algpseudocode}
\usepackage[normalem]{ulem}
\usepackage{array, longtable, setspace}
\usepackage[authoryear,round]{natbib}
\usepackage[bookmarks, colorlinks=true, plainpages=false,
            citecolor=red,
            linkcolor=blue,
            anchorcolor=red,
            urlcolor=blue]{hyperref}
\usepackage{cleveref}
\hypersetup{
    pdftitle={Screening AI-Equipped Applicants},
    pdfauthor={Pengyu Qian and Sophie H. Yu}
}

\makeatletter
\algrenewcommand\alglinenumber[1]{\footnotesize #1. }
\algrenewcommand\algorithmiccomment[1]{\hfill\textcolor{blue}{\(\triangleright\) #1}}
\makeatother
\floatname{algorithm}{Policy}

\usepackage{prettyref}
\newrefformat{eq}{(\ref{#1})}
\newrefformat{sec}{Section~\ref{#1}}
\newrefformat{alg}{Algorithm~\ref{#1}}
\newrefformat{fig}{Figure~\ref{#1}}
\newrefformat{tab}{Table~\ref{#1}}
\newrefformat{rmk}{Remark~\ref{#1}}
\newrefformat{clm}{Claim~\ref{#1}}
\newrefformat{claim}{Claim~\ref{#1}}
\newrefformat{def}{Definition~\ref{#1}}
\newrefformat{assump}{Assumption~\ref{#1}}
\newrefformat{ass}{Assumption~\ref{#1}}
\newrefformat{cor}{Corollary~\ref{#1}}
\newrefformat{lmm}{Lemma~\ref{#1}}
\newrefformat{lem}{Lemma~\ref{#1}}
\newrefformat{prop}{Proposition~\ref{#1}}
\newrefformat{app}{Appendix~\ref{#1}}
\newrefformat{thm}{Theorem~\ref{#1}}

\theoremstyle{plain}
\newtheorem{theorem}{Theorem}
\newtheorem{lemma}{Lemma}
\newtheorem{proposition}{Proposition}
\newtheorem{corollary}{Corollary}

\theoremstyle{definition}
\newtheorem{definition}{Definition}

\newtheorem{remark}{Remark}

\newtheorem{assumption}{Assumption}

\providecommand{\E}{\mathbb E}
\providecommand{\indc}[1]{\mathbf1\{#1\}}

\definecolor{RevisionDarkGreen}{RGB}{0,90,45}

\DeclareRobustCommand{\revredel}[1]{{\color{red}%
\hypersetup{citecolor=red,linkcolor=red,urlcolor=red}\sout{#1}}}
\colorlet{SORevisionBlue}{black}

\usepackage{tikz}
\usetikzlibrary{arrows.meta}

\begin{document}

\title{Screening AI-equipped applicants 
}
\author{
    Pengyu Qian\thanks{
        Questrom School of Business, Boston University.  Email address: \texttt{pqian@bu.edu}.
    }
    \and
    Sophie H.\ Yu\thanks{
        The Wharton School, University of Pennsylvania.  Email address: \texttt{hysophie@wharton.upenn.edu}.
    }
}
\date{}

\maketitle

\begin{abstract}
Generative AI can lower the cost of producing polished application materials without improving applicants' abilities to do the job.
We study an employer who observes qualifications and work history, summarized by observable quality, but must infer other job-relevant attributes from reports that applicants can manipulate at costs that depend on privately known manipulation ability.
The employer commits to a hiring rule under a capacity constraint, and we evaluate rules by match value net of applicants' manipulation
costs.
We show that, as manipulation ability becomes increasingly dispersed among applicants with the same observable quality, socially optimal hiring converges to selection on observable quality alone, a phenomenon we call \emph{strategic feature collapse}.
To capture the remaining welfare gain from reports, we construct a simple dual-track mechanism that directly admits applicants with high observable quality and offers some otherwise excluded applicants a report-based route to admission.
Despite the complexity of multidimensional screening, this mechanism is asymptotically optimal under additional regularity, matching the unrestricted optimum's welfare gain from reports through second order.
\end{abstract}

\onehalfspacing

\setcounter{tocdepth}{2}
\tableofcontents

\section{Introduction}\label{sec:introduction}

Generative AI can make job applicants look better on paper without improving their ability to do the job.
Although AI can also help applicants communicate their strengths more clearly, it can make application materials less informative about the productive qualities employers seek to assess. 
Recent studies document declining employer valuation of customized applications and weaker associations between cover-letter tailoring and hiring outcomes \citep{GaldinSilbert2025,cui2025signaling}.
We study how employer screening should respond as differences in applicants' ability to produce favorable application materials at low cost grow.

Employers observe some applicant characteristics, such as credentials and work history, but must infer other productive qualities. To do so, they also rely on materials applicants produce, such as cover letters and take-home assignments, which we refer to as \emph{reports}.
What these reports reveal is heavily influenced by screening requirements employers impose.
Applicants respond to these requirements, so a favorable report may reflect either productive quality or an advantage in producing such reports at low cost, which we call \emph{manipulation ability}.
Ignoring reports can overlook productive qualities that observable qualifications, such as credentials and work history, do not capture.
Requiring reports, however, can impose costs on applicants even when they contribute little to selection.
The question is when reports are still worth their costs and which applicants should be asked to provide them.

We study an employer who hires a fixed fraction of applicants.
Each applicant has observable quality, a scalar summary of observable qualifications, and latent quality, which captures productive attributes not directly observed by the employer.
Applicants submit reports about latent quality and can inflate their reports at a cost.
This cost depends on {their privately known manipulation ability}, which does not affect the value of the match between the applicant and the employer.
The employer commits to a hiring rule that assigns hiring probabilities based on observable quality and reports, and applicants then choose their reports in response.
There are no monetary transfers.
We evaluate hiring rules by social welfare: the value of the resulting matches less applicants' manipulation costs.
We later consider a private employer who maximizes match value alone, without accounting for applicants' manipulation costs.

Because applicants privately know both their latent quality and manipulation ability, optimal screening is a multidimensional mechanism design problem.
In related settings, optimal mechanisms can require complex menus of randomized options \citep{DaskalakisEtAl2017}.
The key insight of our paper is that growing dispersion in manipulation ability yields a sharp answer.
Socially optimal hiring converges to selection on observable quality alone, a phenomenon we call \emph{strategic feature collapse}, even without restricting the class of screening rules.
Reports can nevertheless improve welfare when dispersion is large.
We construct a simple \emph{dual-track} mechanism that {directly admits applicants with sufficiently high observable quality and offers some otherwise excluded applicants an optional report-based route to admission.
Under additional regularity, this mechanism captures the optimal welfare gain from reports to leading order.}

{To understand why dispersion matters, consider how applicants respond to a report requirement.}
{Applicants with higher latent quality need to inflate their reports less, while those with greater manipulation ability can do so at lower cost.}
We assume that, among applicants with the same observable quality, manipulation ability is independent of latent quality.\footnote{This assumption is plausible in many settings where the skills used to improve application materials differ from those needed for the job. For example, among surgeons with comparable qualifications and experience, proficiency in using AI to polish a resume need not reflect surgical ability.} 
Thus, manipulation ability helps applicants produce stronger reports {without itself signaling latent quality.} 
As manipulation ability becomes more dispersed while the distribution of latent quality remains fixed, it becomes increasingly difficult for screening rules to distinguish productive quality from inexpensive manipulation.

Our first main result shows that this tension constrains every feasible screening mechanism.
Under a general condition requiring manipulation ability to become sufficiently dispersed among applicants with the same observable quality, the additional welfare obtainable from reports vanishes.
Moreover, any asymptotically socially optimal screening mechanisms must select essentially the same applicants as hiring based on observable quality alone.
We establish these results by bounding attainable welfare across all screening rules, including mechanisms that offer randomized hiring probabilities.

Strategic feature collapse shows that the welfare gain from reports vanishes in the large-dispersion limit, but leaves open how much value reports can still generate when dispersion is finite.
We call the optimal welfare gain from using reports relative to hiring based on observable quality alone \emph{residual welfare}.
To quantify this gain, we consider a sequence of markets indexed by $k$, in which manipulation ability follows a location--scale family with a dispersion scale $s_k$ tending to infinity.
A mechanism sequence is \emph{second-order optimal} if its welfare loss relative to the unrestricted optimum is $o(1/s_k)$.

Our second main result shows that, under additional regularity, unrestricted optimal screening and the dual-track mechanism both achieve a welfare gain over observable-only hiring of $D/s_k+o(1/s_k)$, where $D\geq0$ is an instance-dependent constant.
Thus, despite its simple structure, the dual-track mechanism is second-order optimal.

The dual-track mechanism consists of a direct-admission track and a report-based screening track. 
Applicants with sufficiently high observable quality are admitted directly, without a report requirement.
Some applicants below the cutoff for observable-only hiring can instead qualify through the screening track.
Within this track, applicants choose from a common menu of hiring probabilities, with higher probabilities requiring higher reports. The report requirements are adjusted for observable quality.
The two tracks jointly fill capacity, where reports provide an opportunity for otherwise-excluded applicants rather than a requirement imposed on everyone.
As $k$ grows, the screening track fills a vanishing share of positions, of order at most $\frac{1}{s_k}$, 
yet captures all the welfare gains at leading order.

To understand why the screening track can remain informative even when manipulation ability becomes highly dispersed, consider applicants with the same observable quality whose manipulation abilities lie near the upper bound of their group.
This upper bound limits how high a report an applicant of a given latent quality is willing to produce.
An appropriately chosen report requirement, adjusted for observable quality, can therefore deter applicants with low latent quality while remaining attainable for higher latent-quality applicants whose manipulation abilities are near the upper bound.
As dispersion grows, the share of applicants who can qualify through this route shrinks, but those admitted can be better matches than the applicants displaced from direct admission.
The mass of applicants near this boundary determines how much selection the screening track can generate.
Its welfare contribution then depends on the additional match value of the applicants it identifies net of their reporting costs.

Finally, screening rules that generate the same hires can impose different costs on applicants.
An employer concerned only with match value is indifferent between such rules, even if one requires costly reports and another does not.
We show that, as manipulation ability becomes increasingly dispersed, rules approaching the employer-optimal screening select almost the same applicants as socially optimal screening, while their welfare loss equals applicants' manipulation costs up to a vanishing term.
This loss can approach the entire value of admission to applicants. When the common improvement in manipulation ability outpaces its dispersion, an exactly employer-optimal rule can hire the same applicants as observable-only ranking while imposing manipulation costs close to that value. In this example, a costless implementation is also privately optimal.
Thus, even when reports cease to affect who is hired, removing the requirement to produce them can still improve social welfare.

\subsection{Related Literature}\label{subsec:introduction-literature}

Classical signaling theory links the informativeness of an action to differences in its cost across worker types \citep{Spence1973}. 
AI assistance can alter this relationship by changing how applicants produce signals and how employers interpret them, and recent evidence suggests that some conventional application signals are losing value.
\citet{GaldinSilbert2025} document declining employer valuation of customized applications.
\citet{AbbasNejadEtAl2025} and \citet{cui2025signaling}
provide evidence that AI assistance can make cover letters less informative about applicants' qualities. 
Yet the effects of AI writing assistance are not uniformly negative: \citet{WilesEtAl2025} find that nongenerative writing assistance increases hiring by improving the clarity of resumes, and \citet{cowgill2026does} show that AI can improve or impair evaluation depending on how it changes the signals of applicants with different expertise.
Together, these studies suggest that better applications need not lead to better selection, as assistance may weaken the connection between application materials and applicants' productive qualities. We study how employers should optimally design screening rules in response to growing heterogeneity in nonproductive manipulation ability.

The closest work on AI and hiring is the concurrent paper by \citet{AshlagiEtAl2026}, who study costly evaluation by a Bayesian firm using experience and noisy application materials. 
As AI makes these materials less informative, the firm places greater weight on experience and may reject inexperienced applicants who would be good matches. 
They show that a relatively inexpensive intermediate assessment can restore screening opportunities for these applicants.
Our model differs both in the source of information loss and in the employer's response. 
Rather than taking the informativeness of application materials as exogenously given as in \cite{AshlagiEtAl2026}, we allow applicants to choose reports strategically in response to a committed hiring rule.
We show how increasing dispersion in manipulation ability leads to information loss of every feasible screening rule, and how a dual-track mechanism can recover the remaining value of reports without requiring an additional assessment.
Because applicants incur reporting costs, we also distinguish the employer's private objective from social welfare.

Our strategic feature collapse result is closely related to research on decision rules that use manipulable data. 
\citet{FrankelKartik2019} show how heterogeneity in gaming ability muddles the information conveyed by agents' actions. 
\citet{FrankelKartik2022} show that a designer who commits to a linear rule should underuse manipulable data. The optimal weight on the data vanishes as gaming ability becomes more dispersed relative to the natural action. \citet{Ball2025} studies linear scoring with privately known distortion costs and characterizes which features should be underweighted or overweighted. 
Both papers optimize within the class of linear rules. Our strategic feature collapse result instead bounds the value of every feasible screening rule and reporting equilibrium, including randomized menus, subject to a capacity constraint.
This extension matters because nonlinear menus can exploit features of the manipulation ability distribution, such as its upper endpoint, even when linear scoring becomes ineffective.

{Allowing general report-based rules also connects our analysis to mechanism design under costly falsification.}
\citet{PerezRichetSkreta2022} show that optimal tests may offer approval to some agents whose true characteristics would not warrant it, thereby reducing incentives to falsify. \citet{PerezRichetSkreta2023} characterize allocation rules without transfers that make truthful reporting optimal and give conditions under which such rules maximize welfare.
Our mechanisms need not induce truthful reports: applicants choose strategically among report requirements, and the employer must screen along two privately known dimensions, latent quality and manipulation ability.
We characterize the limits of this self-selection and its residual welfare gains.

The same strategic response lies at the heart of strategic classification, which studies costly changes to classifier inputs \citep{HardtEtAl2016}, and performative prediction, which studies feedback from deployed rules to the data distribution \citep{perdomo2020performative}.
\citet{CohenEtAl2025} consider hiring with heterogeneous access to LLM manipulation, while \citet{BravermanGarg2020} demonstrate the role of randomization when manipulation costs enter efficiency.
\citet{LiuGargBorgs2022} study capacity-constrained admission by rank when effort is costly, and \citet{MilliEtAl2019} emphasize the social burden of strategic manipulation. 
Our welfare criterion likewise counts applicants' costs. We show that privately optimal screening can impose substantial reporting costs even when reports no longer improve selection.
More generally, we establish that no screening rule, however nonlinear or randomized, can sustain a nonvanishing gain in match value from reports as manipulation ability becomes increasingly dispersed.

The cost of eliciting information also connects our analysis to allocation without money. Ordeals can improve targeting when they burden unintended recipients more \citep{NicholsZeckhauser1982}, while the desirability of screening rather than lotteries or priority rules depends on the distribution of values and the costs of signaling 
\citep{HartlineRoughgarden2008,Condorelli2012,ChakravartyKaplan2013}. 
\citet{YangDworczakAkbarpour2023} compare screening devices by their targeting effectiveness and the rents they leave to agents. In a model with transfers, \citet{Yang2025} gives conditions under which a principal can dispense with socially costly screening instruments. 
Related targeting problems arise when agents' willingness to pay or ability does not coincide with the designer's valuation \citep{CheGaleKim2013,Condorelli2013}.
In our setting, manipulation ability similarly affects the cost of qualifying without directly affecting match value. We show that the resulting private-social welfare gap can approach applicants' entire benefit from admission.

Our dual-track mechanism balances the value and cost of reporting by directly admitting applicants with high observable quality and offering otherwise excluded applicants an optional report-based route.
Direct admission relates to tagging based on observable characteristics \citep{Akerlof1978}, while the alternative screening route resembles costly verification \citep{BenPorathEtAl2014,KattwinkelKnoepfle2023} and test-optional admissions \citep{DesseinEtAl2025,GargLiMonachou2026}.
Unlike verification, however, our mechanism introduces no separate assessment; it induces self-selection through existing report requirements.
We also characterize when a single report hurdle suffices and when intermediate admission probabilities can improve welfare, a distinction related to the role of cost curvature in optimal prize design \citep{MoldovanuSela2001}.

Finally, our second-order analysis contributes to research on when simple mechanisms perform as well as unrestricted multidimensional screening.
Related work studies the optimality of pure or nested bundling \citep{haghpanah2021pure,yang2025nested}.
A particularly close methodological precedent is \citet{FrickIijimaIshii2026}, who show that when a seller's information about a buyer becomes increasingly precise, selling the grand bundle achieves the optimal convergence rate to first-best revenue, whereas separate sales converge more slowly.
Our informational limit runs in the opposite direction: as manipulation ability becomes increasingly dispersed, reports lose their value and observable-only hiring becomes optimal to first order.
We show that a simple dual-track mechanism nonetheless matches the unrestricted optimum's leading residual welfare gain.

\section{The Model}\label{sec:model}

\subsection{Applicants and match values}

We consider a hiring market. There is a unit mass of applicants, and the employer fills a measure $\rho\in(0,1)$ of positions.
Each applicant is characterized by observable quality $Z\in\mathbb{R}$, latent quality $X\in\mathbb{R}$, and manipulation ability $\Theta\in\mathbb{R}_{++}$.
The employer observes $Z$, while $X$ and $\Theta$ are private information.

If matched with an applicant of type $(z,x,\theta)$, the employer obtains match value $v(z,x)$.
Manipulation ability is nonproductive: it affects the cost of producing a report but does not directly affect match value.

\begin{assumption}[Type distribution]\label{ass:conditional-independence}
The type vector $(Z,X,\Theta)$ has a joint density that factorizes as $$f_{Z,X,\Theta}\left(z,x,\theta\right)=f_Z(z)g\left(x\mid z\right)f\left(\theta\mid z\right),\qquad \left(z,x,\theta\right)\in\mathbb R^2\times\left(0,\infty\right),$$ where $g$ and $f$ are jointly measurable conditional densities of latent quality and manipulation ability given observable quality.
The density $f_Z$ is continuous and strictly positive on $\mathbb R$.
Moreover, $\E[X^2+v(Z,X)^2]<\infty$.
\end{assumption}

The factorization in Assumption~\ref{ass:conditional-independence} imposes $X\perp\Theta\mid Z$.
Thus, after conditioning on the information observable to the employer, manipulation ability carries no additional information about latent quality.
Observable and latent quality may nevertheless be correlated. The assumption rules out manipulation ability that varies systematically with latent quality within an observable group. Evidence that AI assistance helps less able writers and lower-quality applicants most \citep{NoyZhang2023,AbbasNejadEtAl2025} motivates considering such dependence. These studies, however, do not directly test our conditional independence assumption.
We refer to applicants with the same observable quality $z$ as an observable group.
Conditional distributional statements below are understood to hold for almost every $z$ under the distribution of $Z$.

We impose no particular functional form on $v$, and define the expected match value based only on observable quality by $m(z)\triangleq \E[v(z,X)\mid Z=z]$.

\begin{assumption}[Monotone quality]\label{ass:type-monotonicity}
The match value $v(z,x)$ is weakly increasing in $x$ for every $z$.
The conditional expectation
$m(z)=\E[v(z,X)\mid Z=z]$ admits a continuous,
strictly increasing version on $\mathbb R$, which we use throughout.
\end{assumption}

Economically, Assumption~\ref{ass:type-monotonicity} says that higher latent quality is never harmful to the employer, while higher observable quality corresponds to a higher expected match value before screening.
Absent additional screening, the employer therefore ranks applicants by observable quality.
Continuity ensures that small changes in observable quality produce small changes in expected match value.

\subsection{Manipulation}

After the employer observes $z$, the applicant submits a report $r\in\mathbb{R}$ about latent quality $x$.
Manipulation cost is one-sided: reporting below latent quality is costless, while reporting above latent quality is costly.
An applicant with latent quality $x$ and manipulation ability $\theta$ who submits report $r$ incurs cost
\begin{equation}
    c\left(\frac{(r-x)_+}{\theta}\right). \label{eq:manipulation-cost}
\end{equation}
Higher $\theta$ therefore makes any given upward manipulation less costly.

\begin{assumption}[Manipulation cost]\label{ass:manipulation-cost}
The function $c:[0,\infty)\to[0,\infty)$ is continuous,
strictly increasing, and strictly convex. It satisfies
$c(0)=0$ and $\lim_{u\to\infty}c(u)=\infty$.
\end{assumption}

There are no monetary transfers.
Allocation means admission to a position.
An applicant obtains benefit one from admission and zero otherwise.
Thus, if submitting report $r$ results in allocation probability $q\in[0,1]$, the applicant's expected payoff is \( q-c\left(\frac{(r-x)_+}{\theta}\right).\)

\subsection{The screening game}\label{subsec:screening-game}

The interaction between the employer and the applicants is a two-stage commitment game.
First, the employer publicly commits to an allocation rule contingent on observable quality and the report.
Second, after observing the rule, each applicant decides which report to submit.

A screening rule is a Borel measurable function
$$
Q:\mathbb{R}\times\mathbb{R}\to\left[0,1\right],
$$
where $Q(z,r)$ is the admission probability for an applicant with observable quality $z$ who submits report $r$.
For a given screening rule $Q$, define the applicant's payoff from a report by
\begin{equation}\label{eq:agentproblem}
u_Q\left(r;z,x,\theta\right)
\triangleq
Q\left(z,r\right)
-
c\left(\frac{(r-x)_+}{\theta}\right),
\qquad
r\in\mathbb{R}.
\end{equation}

\begin{definition}[Reporting equilibrium]\label{def:reporting-equilibrium}
Fix a screening rule $Q$.
A reporting equilibrium under $Q$ is a measurable function
$
R_Q:\mathbb{R}^2\times (0,\infty)\to\mathbb{R}
$
such that
$$
R_Q\left(z,x,\theta\right)
\in
\arg\max_{r\in\mathbb{R}}
u_Q\left(r;z,x,\theta\right)
$$
for every type $(z,x,\theta)$.
A screening rule is admissible if a reporting equilibrium exists.
\end{definition}

Given an admissible rule $Q$ and a reporting equilibrium $R_Q$, define the type-dependent outcomes by $$A_Q\left(z,x,\theta\right)\triangleq Q\left(z,R_Q\left(z,x,\theta\right)\right),\qquad C_Q\left(z,x,\theta\right)\triangleq c\left(\frac{(R_Q\left(z,x,\theta\right)-x)_+}{\theta}\right).$$
Write $A_Q\triangleq A_Q(Z,X,\Theta)$ and $C_Q\triangleq C_Q(Z,X,\Theta)$ for the corresponding random outcomes.

Applicants do not interact after the employer commits to $Q$, so a reporting equilibrium consists of their measurable best responses.
The employer anticipates these responses when choosing the screening rule.

A pair $(Q,R_Q)$ is feasible if $Q$ is admissible, $R_Q$ is a reporting equilibrium under $Q$, and the induced allocation satisfies
\begin{equation}\label{eq:capacity}
\E[A_Q]=\rho.
\end{equation}

The optimal normalized social welfare is
\begin{equation}\label{eq:value}
W^\star
\triangleq
\sup_{\left(Q,R_Q\right)}
\left\{
\E\left[v\left(Z,X\right)A_Q-C_Q\right]:
\left(Q,R_Q\right)\text{ is feasible}
\right\}.
\end{equation}
Applicants' allocation benefits contribute the constant $\E[A_Q]=\rho$, which is omitted from normalized social welfare.
A private employer that does not internalize applicants' manipulation costs instead maximizes \begin{equation}\label{eq:employer-utility}U\triangleq \sup_{\left(Q,R_Q\right)}\left\{\E\left[v\left(Z,X\right)A_Q\right]:\left(Q,R_Q\right)\text{ is feasible}\right\}.\end{equation}

For each observable quality $z$, the screening rule $Q$ offers the menu
\[
\left\{(Q(z,r),r):r\in\mathbb{R}\right\}.
\]
Each contract $(q,r)$ gives admission probability $q$ upon submission
of report $r$.
Applicants choose among these contracts to maximize their admission
benefit net of manipulation cost.

\subsection{Technological change and the asymptotic analysis}\label{sec:asymptotic-regime}

Improvements in manipulation technology change what an employer can learn from applicants' reports.
For a given distribution of manipulation ability, the employer chooses a screening rule, and applicants select among its contracts using both their latent quality and their manipulation ability.
Latent quality determines the point above which reporting becomes costly, while manipulation ability determines the cost of raising a report above that point.
These two dimensions of private information govern self-selection across the menu, which can contain a continuum of contracts.
This is a multidimensional screening problem in which allocation decisions and incentive constraints must be determined jointly \citep{RochetChone1998}.
We study how the optimal screening rule changes as the distribution of manipulation ability changes.

Consider a sequence of hiring markets indexed by $k$.
The joint distribution of $(Z,X)$, the match-value function $v$, the manipulation cost function $c$, and capacity $\rho$ remain fixed.
Let $\Theta_k>0$ denote manipulation ability in market $k$. Its conditional distribution given $Z$ may vary with $k$.
Assumption~\ref{ass:conditional-independence} is maintained in every market with $\Theta$ replaced by $\Theta_k$.
The comparison is across market distributions. It does not require identifying applicants across markets.

In market $k$, an applicant with latent quality $x$ and manipulation ability $\theta_k$ who submits report $r$ incurs cost $$c\left(\frac{(r-x)_+}{\theta_k}\right).$$
Changes in the distribution of $\Theta_k$ therefore affect both how far applicants can manipulate and how much this capability differs across applicants with the same observable quality.
The distinction matters for screening: a common improvement makes a given report requirement easier to meet, while greater dispersion changes the information that passing the requirement conveys about latent quality.

The employer chooses a screening rule $Q_k$ in each market and anticipates the reporting responses induced by that rule.
Report requirements, allocation probabilities, and menu complexity can all change with the technology.
Let $W_k^\star$ and $U_k$ denote, respectively, the optimal normalized social welfare and the employer's optimal expected match value, as defined in \eqref{eq:value} and \eqref{eq:employer-utility} for market $k$.

The first-order analysis identifies when observable quality alone determines the limiting optimal allocation, a phenomenon we call {strategic feature collapse}.
Under additional regularity, the second-order analysis characterizes the optimal welfare coefficient at scale $1/s_k$ and constructs a sequence of menus attaining it.
Together, these results characterize the limiting allocation and the remaining role of report-based screening.

\section{First-Order Behavior: Strategic Feature Collapse}\label{sec:collapse}

A high report can reflect either high latent quality or high manipulation ability.
The employer values the former but must screen applicants using a report influenced by both.
We show that, as manipulation ability becomes sufficiently dispersed, the welfare gain from report-based screening vanishes.
Ranking applicants by observable quality is asymptotically socially optimal, and the allocation induced by every asymptotically socially optimal sequence approaches this ranking.
We call this phenomenon strategic feature collapse.

\subsection{Dispersion in manipulation ability}

Consider the sequence of markets defined in
Section~\ref{sec:asymptotic-regime}.
The relevant dispersion is the variation in manipulation ability
that remains after conditioning on the employer's observable
information.
For $d>0$, define the conditional L\'evy concentration function by
\begin{equation}\label{eq:collapse-concentration}
\omega_k(d\mid z)
\triangleq
\sup_{a\in\mathbb R}
\Pr\{a\leq\Theta_k\leq a+d\mid Z=z\}.
\end{equation}
This is the largest share of applicants with observable quality $z$
whose manipulation abilities lie within an interval of width $d$.
The interval may be chosen separately for each group and market.

\begin{assumption}[Conditional anti-concentration]
\label{ass:levy-anticoncentration}
For every fixed $d>0$,
\begin{equation}\label{eq:collapse-levy}
\E[\omega_k(d\mid Z)]
\longrightarrow 0
\qquad\text{as }k\to\infty.
\end{equation}
\end{assumption}

The expectation in \eqref{eq:collapse-levy} averages the within-group concentration measures according to the population distribution of $Z$.
The assumption requires this average to vanish for every fixed interval width.
Equivalently, for any fixed $d>0$ and $\varepsilon>0$, the share of applicants belonging to groups with $\omega_k(d\mid z)\geq\varepsilon$ must converge to zero.
Thus, groups in which a fixed-width interval contains a substantial fraction of applicants must account for a vanishing share of the population.

For example, if $\Theta_k\mid Z=z$ is uniform on $[1,k+1]$ in every observable group, then $\omega_k(d\mid z)=\min\{1,d/k\}$, and the assumption holds.
If instead $\Theta_k\mid Z=z$ is uniform on $[k,k+1]$, then $\omega_k(1\mid z)=1$, and the assumption fails.
In the latter case, manipulation ability increases for everyone, but the within-group distribution retains its fixed width.
A common increase in manipulation ability alone therefore does not imply conditional anti-concentration.
\subsection{Strategic feature collapse}

We call admission in descending order of $Z$ observable-only ranking.
Let $F_Z$ be the distribution function of $Z$ and set $z_0\triangleq F_Z^{-1}(1-\rho)$ and $\lambda\triangleq m(z_0)$.
Since $f_Z$ is strictly positive on $\mathbb R$, $z_0$ is the unique observable-quality cutoff satisfying $\Pr\{Z>z_0\}=\rho$.
Strict increase of $m$ gives $\Pr\{m(Z)>\lambda\}=\rho$.
Thus $z_0$ is the cutoff in observable quality, while $\lambda$ is the corresponding cutoff in expected match value.

The benchmark allocation and its value are
\begin{equation}\label{eq:collapse-benchmark}
\begin{aligned}
a^0(z)&\triangleq \mathbf1\{m(z)>\lambda\},\\
V^0&\triangleq \E[m(Z)a^0(Z)].
\end{aligned}
\end{equation}
The rule admits applicants above the observable-quality cutoff.
The screening rule $Q_k(z,r)\triangleq a^0(z)$ implements it at zero manipulation cost in every market.

For a feasible pair $(Q_k,R_{Q_k})$, write $A_k(z,x,\theta)\triangleq A_{Q_k}(z,x,\theta)$ and $C_k(z,x,\theta)\triangleq C_{Q_k}(z,x,\theta)$ for its type-dependent outcomes.
Let $A_k\triangleq A_k(Z,X,\Theta_k)$ and $C_k\triangleq C_k(Z,X,\Theta_k)$, and denote its private payoff and social welfare by
\[
\pi_k\triangleq \E[v\left(Z,X\right)A_k],
\qquad
W_k\triangleq \pi_k-\E[C_k].
\]
Thus $W_k$ denotes the welfare of the specified feasible pair, while $W_k^\star$ is the supremum over all feasible pairs.
For any feasible pair, define its loss from departing from observable-only ranking by
\[
L_k\triangleq \E\big[|m(Z)-\lambda|\,|A_k-a^0(Z)|\big].
\]
A feasible sequence is \emph{asymptotically socially optimal} if
\[
W_k^\star-W_k\longrightarrow0
\qquad\text{as }k\to\infty.
\]
The difference $W_k^\star-V^0$, which we call residual welfare, is the welfare gain available from using reports in addition to observable quality, holding the information environment and manipulation technology fixed. The corresponding gain for a private employer is $U_k-V^0$.

\begin{theorem}[Strategic feature collapse]\label{thm:strategic-feature-collapse}
Under the assumptions of Section~\ref{sec:model} and
Assumption~\ref{ass:levy-anticoncentration},
\begin{equation}\label{eq:collapse-value}
W_k^\star-V^0\longrightarrow0,
\qquad
U_k-V^0\longrightarrow0
\qquad\text{as }k\to\infty.
\end{equation}
Every asymptotically socially optimal sequence satisfies
\begin{equation}\label{eq:collapse-allocation}
\begin{aligned}
\E\big[|A_k-a^0(Z)|\big]&\longrightarrow0,\\
\E[C_k]&\longrightarrow0,
\end{aligned}
\qquad\text{as }k\to\infty.
\end{equation}
\end{theorem}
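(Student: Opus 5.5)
The plan is to reduce everything to one within-group estimate: for almost every $z$, the allocation $A_k(z,\cdot,\cdot)$ is nearly uncorrelated with $X$. The reduction goes through an accounting identity. For any feasible pair, write $v(Z,X)A_k=(v(Z,X)-m(Z))A_k+(m(Z)-\lambda)A_k+\lambda A_k$. The capacity constraint gives $\E[A_k]=\E[a^0(Z)]=\rho$. Hence
\begin{equation*}
\pi_k-V^0=\E\big[(v(Z,X)-m(Z))A_k\big]+\E\big[(m(Z)-\lambda)(A_k-a^0(Z))\big].
\end{equation*}
Since $a^0(z)=\mathbf 1\{m(z)>\lambda\}$, the second term equals $-L_k$. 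Thus $\pi_k-V^0\le \Delta_k-L_k$, where $\Delta_k\triangleq\E[(v(Z,X)-m(Z))A_k]$. It then suffices to show $\sup_{(Q_k,R_{Q_k})}\Delta_k\to0$ uniformly over feasible pairs.

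This suffices for \eqref{eq:collapse-value}. We have $V^0\le W_k^\star\le U_k=\sup\pi_k\le V^0+\sup\Delta_k$, where the first inequality holds because $a^0$ is implementable at zero cost. For \eqref{eq:collapse-allocation}, an asymptotically socially optimal sequence has $W_k\ge W_k^\star-o(1)\ge V^0-o(1)$. Therefore $L_k+\E[C_k]\le \Delta_k+o(1)\to0$. Finally, $L_k\to0$ implies $\E|A_k-a^0(Z)|\to0$. To see this, split on the event $\{|m(Z)-\lambda|\le\eta\}$. Its probability tends to $0$ as $\eta\to0$, because $m$ is continuous and strictly increasing and $Z$ has a density. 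Off this event, $|A_k-a^0|\le L_k/\eta$ in expectation.

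To bound $\Delta_k$, I would fix $z$ and condition on it. Write $h(x)=v(z,x)-m(z)$, which is nondecreasing with conditional mean zero. Truncate: by square integrability, types with $|X|>M$ or $|h(X)|>M$ contribute $o_M(1)$ uniformly in the mechanism, since $A_k\in[0,1]$. On the bounded range, $\E[h(X)A_k\mid z]$ is at most $\E[h(X)(\bar A(X)-\bar A_{\min})]$, where $\bar A(x)=\E[A_k(z,x,\Theta_k)\mid z]$ averages over $\Theta_k$ using independence. It is therefore at most $2M\sup_{|x|,|x'|\le M}(\bar A(x')-\bar A(x))$. The key claim is an approximate shift inequality: for $x<x'\le x+2M$ and all $\theta$, type $(x,\theta+d)$ does almost as well as type $(x',\theta)$. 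Suppose $(x',\theta)$ chooses $r$ with normalized inflation $a=(r-x')_+/\theta$. Then $(x,\theta+d)$ can submit the same $r$ at normalized cost argument $(a\theta+2M)/(\theta+d)$. This is at most $a$ when $a\ge 2M/d$, and at most $a+2M/d$ otherwise. So the shifted type's utility loss is at most $\varepsilon_d\triangleq\sup_{a\le c^{-1}(1)}[c(a+2M/d)-c(a)]$, which tends to $0$ as $d\to\infty$ by uniform continuity of $c$ on compacts. Only $a\le c^{-1}(1)$ matters, since no type inflates at a cost above the unit admission benefit.

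The main obstacle is converting this utility comparison into an allocation comparison. I would try first to exploit single crossing. Because $c$ is strictly convex, cost differences $c((r'-x)_+/\theta)-c((r-x)_+/\theta)$ for $r'>r$ are decreasing in both $\theta$ and $x$. Hence chosen reports, and with them chosen admission probabilities, are monotone in $(x,\theta)$ under a monotone-selection argument. I would use the strict convexity of $c$ together with the near-equality of utilities to show $A_k(z,x,\theta+d)\ge A_k(z,x',\theta)-\varepsilon'_d$ outside a small set, with $\varepsilon'_d\to0$. If that argument fails, the fallback is to work with indirect utility throughout and recover allocations from its monotone difference quotients in $\theta$.

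Given the shift inequality, average over $\Theta_k$. This gives $\bar A(x')-\bar A(x)\le \E[A_k(z,x,\Theta_k+d)-A_k(z,x,\Theta_k)\mid z]+\varepsilon'_d$. Now use monotonicity in $\theta$. The expected increment of a $[0,1]$-valued nondecreasing function of $\Theta_k$ under a shift of width $d$ is at most $\omega_k(d\mid z)$, up to a factor depending only on how the increment is telescoped across a partition into width-$d$ cells. Integrating over $Z$ gives
\begin{equation*}
\Delta_k\le o_M(1)+2M\big(\varepsilon'_d+C\,\E[\omega_k(d\mid Z)]\big).
\end{equation*}
Let $k\to\infty$ using Assumption~\ref{ass:levy-anticoncentration}, then $d\to\infty$, then $M\to\infty$. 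This gives $\sup\Delta_k\to0$.
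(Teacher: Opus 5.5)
\textbf{Where the gap is.} There is a genuine gap, and it sits exactly at the step you flag as the main obstacle. Your shift inequality is a one-sided bound on indirect utility, $B(x,\theta+d)\ge B(x',\theta)-\varepsilon_d$. By itself it says nothing about which contract the shifted type selects: that type may reach or exceed this utility with a cheaper, lower-probability contract. You therefore have no ``near-equality of utilities'' to combine with strict convexity. The envelope fallback is not worked out, and it is delicate for arbitrary, possibly infinite menus. So the pointwise comparison $A_k(z,x,\theta+d)\ge A_k(z,x',\theta)-\varepsilon'_d$, on which your whole bound rests, is not established.

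\textbf{The missing idea.} Use both types' incentive constraints at once rather than one type's utility. Let type 1 be $(x',\theta)$, choosing $(q_1,r_1)$, and type 2 be $(x,\theta+d)$, choosing $(q_2,r_2)$, with $K_i(r)=c((r-x_i)_+/\theta_i)$. Revealed preference gives $K_1(r_1)-K_1(r_2)\le q_1-q_2\le K_2(r_1)-K_2(r_2)$. If $q_1>q_2$, then $r_1>r_2$, and the outer inequality reads $D(r_1)\ge D(r_2)$ with $D=K_2-K_1$. Both costs equal $c((x'-x)/d)$ at the crossing report $r^*=x'+\theta(x'-x)/d$. Moreover, $D\ge 0$ below $r^*$ and $D$ is strictly decreasing above it, by strict convexity and $\theta<\theta+d$. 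Hence $r_1\le r^*$, and so $q_1-q_2\le K_2(r_1)\le K_2(r^*)=c((x'-x)/d)\le c(2M/d)$. This is the paper's Lemma~\ref{lem:collapse-comparison}. It holds pointwise for every best-response selection, with no exceptional set, and it makes your utility computation unnecessary.

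Its case $x=x'$ also gives monotonicity of allocations in ability for every selection, which your averaging step needs. You justify that monotonicity only through a ``monotone-selection'' argument. That is not enough, because the theorem quantifies over all reporting equilibria, and reports themselves need not be monotone.

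\textbf{How your route compares once this lemma is in place.} The rest of your plan then goes through; it differs from the paper only in the averaging step. Your decomposition $\pi_k-V^0=\Delta_k-L_k$, the reduction to $\sup\Delta_k\to0$, and the cutoff argument for $\E|A_k-a^0(Z)|$ are the paper's. The paper averages by comparing two independent ability draws and applying the averaging inequality (Lemma~\ref{lem:collapse-averaging}), which gives $|\bar A_k(z,x)-\bar A_k(z,x')|\le\sqrt{\eta_k(|x-x'|\mid z)}$. It then symmetrizes with an independent copy $X'$, obtaining an explicit modulus $\varepsilon_k$ valid at every $k$.

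Your route instead uses a deterministic shift, monotonicity in $\theta$, and telescoping over width-$d$ cells. This gives $\bar A_k(z,x')-\bar A_k(z,x)\le c((x'-x)_+/d)+2\omega_k(d\mid z)$, so the constant is $C=2$. It uses the concentration function of Assumption~\ref{ass:levy-anticoncentration} directly and avoids the square root. The price is the truncation and the iterated limits in $k$, $d$, $M$, in place of a closed-form bound such as Proposition~\ref{prop:collapse-rate}.
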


The theorem identifies both who is selected and the cost of selection.
As manipulation ability disperses, the additional match value obtainable from reports vanishes for both social and private objectives.
An asymptotically socially optimal mechanism therefore approaches observable-only ranking and imposes vanishing aggregate manipulation costs.
\citet{FrankelKartik2022} reach the analogous conclusion for linear rules: the optimal weight on manipulable data vanishes as gaming ability becomes more dispersed. Here the value bound is uniform over feasible screening rules and reporting equilibria, including randomized menus.

The following bound quantifies the gains from report-based screening.
It compares two independent applicants with the same observable quality, measuring their latent-quality difference relative to their manipulation-ability difference and weighting the comparison by their difference in match value.

\begin{proposition}[Loss bound for observable-only screening]\label{prop:collapse-rate}
Under the assumptions of Theorem~\ref{thm:strategic-feature-collapse}, let $(X',\Theta_k')$ be an independent copy of $(X,\Theta_k)$ conditional on $Z$, and write
\[
\Delta X\triangleq |X-X'|,
\qquad
\Delta\Theta_k\triangleq |\Theta_k-\Theta_k'|,
\qquad
\Delta v\triangleq |v\left(Z,X\right)-v\left(Z,X'\right)|.
\]
For a fixed latent-quality gap $d\geq0$, define
\begin{equation}\label{eq:collapse-pairwise-kernel}
\eta_k\left(d\mid z\right)
\triangleq \E\left[\min\left\{1,c\left(\frac{d}{\Delta\Theta_k}\right)\right\}
\middle|Z=z\right],
\end{equation}
where the minimum is defined to equal one when $\Delta\Theta_k=0$.
Define
\begin{equation}\label{eq:collapse-modulus}
\varepsilon_k
\triangleq \frac12\E\left[\Delta v\sqrt{\eta_k\left(\Delta X\mid Z\right)}\right].
\end{equation}
Then, for every $k$,
\begin{equation}\label{eq:collapse-rate}
V^0\leq W_k^\star\leq U_k\leq V^0+\varepsilon_k.
\end{equation}
Moreover, $\varepsilon_k\to0$ as $k\to\infty$.
\end{proposition}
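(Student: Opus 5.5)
The chain $V^0\le W_k^\star\le U_k$ is immediate. The rule $Q_k(z,r)=a^0(z)$ is feasible at zero manipulation cost, so $W_k^\star\ge V^0$. For any feasible pair, $W_k=\pi_k-\E[C_k]\le\pi_k\le U_k$ because $C_k\ge0$. The substance is the bound $U_k\le V^0+\varepsilon_k$, which I would prove for an arbitrary feasible pair $(Q_k,R_{Q_k})$, followed by the limit $\varepsilon_k\to0$.

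\textbf{Step 1: decomposition.} Write $\bar a(z)\triangleq\E[A_k\mid Z=z]$ and split
\[
\pi_k=\E[m(Z)\bar a(Z)]+\E\big[\operatorname{Cov}(v(Z,X),A_k\mid Z)\big].
\]
The first term is at most $V^0$. Indeed, $\E[\bar a(Z)]=\rho=\E[a^0(Z)]$, so $\E[m(Z)(\bar a-a^0)]=\E[(m(Z)-\lambda)(\bar a-a^0)]$. This is $\le0$ pointwise because $a^0=\mathbf 1\{m>\lambda\}$ and $\bar a\in[0,1]$. For the covariance term, use the independent copy $(X',\Theta_k')$ and write $A_k'=A_k(Z,X',\Theta_k')$. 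Then
\[
\operatorname{Cov}(v,A_k\mid Z)=\tfrac12\E\big[(v(Z,X)-v(Z,X'))(A_k-A_k')\mid Z\big]\le\tfrac12\E\big[\Delta v\,|A_k-A_k'|\mid Z\big].
\]
It therefore suffices to show that, conditional on $(Z,X,X')$, the quantity $\E[|A_k-A_k'|\mid Z,X,X']$ is at most $\sqrt{\eta_k(\Delta X\mid Z)}$.

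\textbf{Step 2: pairwise incentive bound (main obstacle).} Fix $z$ and latent qualities $x\ge x'$ with gap $d=x-x'$, and compare the two types $(x,\theta)$ and $(x',\theta')$. I would use the incentive constraints in both directions, each type mimicking the other's report. The aim is a bound of the form
\[
|A(x,\theta)-A(x',\theta')|^2\le\min\{1,c(d/|\theta-\theta'|)\}.
\]
If that is too strong pointwise, the fallback is to prove it in expectation over $(\Theta_k,\Theta_k')$ and then apply Jensen/Cauchy--Schwarz, $\E|A-A'|\le(\E|A-A'|^2)^{1/2}$; this is where the square root should come from. The mechanism I expect to drive the bound is as follows. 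The upper envelope of utilities $U(x,\theta)=\max_r\{Q(z,r)-c((r-x)_+/\theta)\}$ is monotone in both arguments. A type that wants to reach a report $r$ above the other type's latent quality $x$ trades an extra gap $d$ in latent quality against the ability difference. Convexity of $c$ together with $c(0)=0$ lets me rescale: the relevant trade-off is $c\big((r-x')_+/\theta'\big)$ versus $c\big((r-x)_+/\theta\big)$. The difference between these should be controlled by $c(d/|\theta-\theta'|)$ once I use the fact that reports with positive allocation satisfy $c((r-x)_+/\theta)\le1$, that is $(r-x)_+\le\theta\,c^{-1}(1)$.

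I am unsure whether the clean route is directly through the two types' incentive constraints, or through first symmetrizing. The symmetrization would exploit the fact that swapping $\Theta_k\leftrightarrow\Theta_k'$ preserves the law, which kills the pure-ability part of $A-A'$. I would try the two-sided incentive-constraint comparison first, and fall back on symmetrization if the direct comparison does not close.

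\textbf{Step 3: $\varepsilon_k\to0$.} Fix $d>0$ and $M>0$. Conditioning on $\Theta_k'$ gives $\Pr\{\Delta\Theta_k\le M\mid Z=z\}\le\omega_k(2M\mid z)$. Splitting on the event $\{\Delta\Theta_k\le M\}$ then yields
\[
\eta_k(d\mid z)\le c(d/M)+\omega_k(2M\mid z).
\]
By Assumption~\ref{ass:levy-anticoncentration}, $\omega_k(2M\mid Z)\to0$ in $L^1$, and hence in probability. On $\{\Delta X=0\}$ the kernel is controlled by $c(0)=0$ except on $\{\Delta\Theta_k=0\}$, an event also handled by anti-concentration. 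Choosing $M$ large makes $c(d/M)$ small, since $c$ is continuous with $c(0)=0$, so $\eta_k(\Delta X\mid Z)\to0$ in probability. Finally, $\sqrt{\eta_k}\le1$ and $\Delta v$ is integrable by the moment condition in Assumption~\ref{ass:conditional-independence}, so dominated convergence (along subsequences) gives $\varepsilon_k\to0$.
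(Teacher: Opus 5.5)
Your chain $V^0\le W_k^\star\le U_k$, the bound $\E[m(Z)\bar a(Z)]\le V^0$, and Step~3 are fine. Step~3 uses convergence in probability plus domination, where the paper splits on $\{\Delta v>M\}$ and applies Jensen; either works. The gap is the reduction at the end of Step~1 together with Step~2.

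\textbf{The reduction is false.} The sufficient condition $\E[|A_k-A_k'|\mid Z,X,X']\le\sqrt{\eta_k(\Delta X\mid Z)}$ fails for some feasible rules. So do your pointwise target $|A(x,\theta)-A(x',\theta')|^2\le\min\{1,c(d/|\theta-\theta'|)\}$ and its mean-square version. Take a single hurdle $Q(z,r)=\mathbf 1\{r\ge R\}$ in some groups, completing capacity by direct admission. Type $(x,\theta)$ is admitted iff $\theta\ge(R-x)_+/u_*$ with $u_*=c^{-1}(1)$, so allocation varies with ability at fixed latent quality. Choose $R$ so that this event has conditional probability $1/2$. Then $\E[|A(x,\Theta_k)-A(x',\Theta_k')|\mid Z=z]$ is close to $1/2$ for $x'$ near $x$. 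Meanwhile $\eta_k(d\mid z)\to\Pr\{\Delta\Theta_k=0\mid Z=z\}=0$ as $d\downarrow0$. Putting the absolute value inside the expectation before averaging over ability destroys the cancellation of the pure-ability variation in $A_k$, which is uncorrelated with $v$ given $Z$. No two-sided incentive comparison can exist either. At equal latent quality a higher-ability applicant can obtain strictly more allocation, as in the hurdle example, so only the lower-ability type's advantage admits a bound.

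\textbf{The repair.} The symmetrization you list as a fallback is the right first move, but it must be done over latent quality alone. By conditional independence, $\E[(v(Z,X)-m(Z))A_k]=\E[(v(Z,X)-m(Z))\bar A_k(Z,X)]$ with $\bar A_k(z,x)\triangleq\E[A_k(z,x,\Theta_k)\mid Z=z]$. Symmetrizing over $X,X'$ then reduces everything to bounding the difference of means $|\bar A_k(z,x)-\bar A_k(z,x')|$. For that you need two ingredients you do not supply.

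(i) A one-sided cross-type comparison: if $\theta_1<\theta_2$, then $q_1-q_2\le c((x_1-x_2)_+/(\theta_2-\theta_1))$. This follows from the two incentive constraints. The cost difference $c((r-x_2)_+/\theta_2)-c((r-x_1)_+/\theta_1)$ is nonnegative up to, and strictly decreasing beyond, the report $r^*=x_1+\theta_1(x_1-x_2)_+/(\theta_2-\theta_1)$, at which both costs equal the bound. This forces type~1's report to satisfy $r_1\le r^*$.

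(ii) An inequality converting a bound on ordered pairs into a bound on means: for measurable $f,g:(0,1)\to[0,1]$, $(\int_0^1(f-g))_+^2\le2\int_0^1\int_u^1(f(u)-g(v))_+\,dv\,du$. Apply it with $f,g$ the allocations at $x,x'$ composed with the conditional quantile function of $\Theta_k$, then use (i) on $\{\Theta_k<\Theta_k'\}$ and symmetry. This yields $|\bar A_k(z,x)-\bar A_k(z,x')|\le\sqrt{\eta_k(|x-x'|\mid z)}$.

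The square root in $\varepsilon_k$ therefore comes from the Cauchy--Schwarz step inside (ii), not from Jensen applied to $\E|A_k-A_k'|^2$. That step is a layer-cake argument: each set $\{g\le t<f\}$ of measure $s(t)$ contributes ordered pairs of area $s(t)^2/2$.
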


For fixed $z,x,x'$, the factor $\sqrt{\eta_k(|x-x'|\mid z)}$ bounds the difference between the two latent-quality types' admission probabilities after averaging each over $\Theta_k\mid Z=z$.
The factor $\Delta v$ measures their difference in match value.
Their weighted average bounds the additional match value obtainable by using reports to distinguish applicants with the same observable quality.

The convergence of $\varepsilon_k$ follows from the different behavior of manipulation-ability gaps and latent-quality gaps.
Conditional anti-concentration makes manipulation-ability gaps diverge in probability as $k\to\infty$, averaging over observable groups, while latent-quality gaps have a fixed distribution. The bounded cost comparison therefore vanishes on average as $k\to\infty$.
Integrability of match values controls the contribution of large $\Delta v$, so the weighted bound also vanishes as $k\to\infty$.
The final part of the proposition's proof gives the details.

\begin{proof}[Proof of Theorem~\ref{thm:strategic-feature-collapse}]
Proposition~\ref{prop:collapse-rate} gives
\[
0\leq W_k^\star-V^0\leq U_k-V^0\leq\varepsilon_k,
\qquad
\varepsilon_k\to0\quad\text{as }k\to\infty.
\]
The squeeze theorem therefore proves both limits in
\eqref{eq:collapse-value}.

Now consider an asymptotically socially optimal sequence.
The welfare decomposition \eqref{eq:collapse-decomposition} and the information bound \eqref{eq:collapse-information} imply
\[
\begin{aligned}
0\leq L_k+\E[C_k]
&=\E[(v\left(Z,X\right)-m(Z))A_k]-(W_k-V^0)\\
&\leq\varepsilon_k+(W_k^\star-W_k),
\end{aligned}
\]
where we used $W_k^\star\geq V^0$. The right-hand side tends to zero as $k\to\infty$. Since both terms on the left are nonnegative,
\begin{equation}\label{eq:collapse-weighted-allocation}
L_k\longrightarrow0,
\qquad
\E[C_k]\longrightarrow0
\qquad\text{as }k\to\infty.
\end{equation}

To obtain unweighted allocation convergence, split applicants according to the distance of $m(Z)$ from $\lambda$. For every $\delta>0$,
\[
\E[|A_k-a^0(Z)|]
\leq\Pr\{|m(Z)-\lambda|\leq\delta\}+\frac{L_k}{\delta}.
\]
The first term bounds disagreement near the cutoff, while the second bounds disagreement elsewhere. For fixed $\delta$, $L_k/\delta\to0$ as $k\to\infty$. Since $Z$ is atomless and $m$ is strictly increasing, $m(Z)$ has no atom at $\lambda$, so
\[
\Pr\{|m(Z)-\lambda|\leq\delta\}\longrightarrow0
\qquad\text{as }\delta \to 0.
\]
Taking $k\to\infty$ first and then $\delta\to0$ proves
\[
\E[|A_k-a^0(Z)|]\longrightarrow0
\qquad\text{as }k\to\infty,
\]
completing the proof.
\end{proof}

\begin{remark}[Private employers]
Call a feasible sequence \emph{asymptotically privately optimal} if $U_k-\pi_k\to0$ as $k\to\infty$.
For such a sequence, \eqref{eq:collapse-information} and \eqref{eq:collapse-decomposition}, with $W_k=\pi_k-\E[C_k]$, give
\[
0\leq L_k\leq\varepsilon_k+(U_k-\pi_k).
\]
Since both terms on the right tend to zero as $k\to\infty$, it follows that
\[
L_k\longrightarrow0
\qquad\text{as }k\to\infty.
\]
The same cutoff argument therefore gives
\[
\E[|A_k-a^0(Z)|]\longrightarrow0
\qquad\text{as }k\to\infty.
\]
This argument does not imply that applicants' manipulation costs vanish as $k\to\infty$.
\end{remark}

\subsubsection{Proof of Proposition~\ref{prop:collapse-rate}}
\label{sec:proof-collapse-rate}

Fix any feasible screening rule and reporting equilibrium in market $k$.
Since reporting $r=X$ is costless, optimality implies $0\leq C_k\leq1$.
Together with $\E[|v(Z,X)|]<\infty$, this ensures that the payoff and welfare terms below are integrable.

We first compare the allocations of two applicants facing the same rule.

\begin{lemma}[Cross-type allocation comparison]\label{lem:collapse-comparison}
Fix a screening rule and consider applicants of types $(z,x_1,\theta_1)$ and $(z,x_2,\theta_2)$, where $\theta_2>\theta_1>0$. Suppose their optimal reports induce allocation probabilities $q_1$ and $q_2$, respectively. Then
\begin{equation}\label{eq:collapse-comparison}
q_1-q_2
\leq
c\left(\frac{(x_1-x_2)_+}{\theta_2-\theta_1}\right).
\end{equation}
In particular, holding both observable and latent quality fixed, equilibrium allocation is nondecreasing in manipulation ability.
\end{lemma}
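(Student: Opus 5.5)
The plan is a standard revealed-preference comparison: type 1 can always imitate type 2's report, and what imitation costs type 1 can be bounded using the fact that type 2 chose that report optimally. Let $r_1,r_2$ be the optimal reports, so $q_i=Q(z,r_i)$. If $q_1\le q_2$, there is nothing to prove, since the right-hand side of \eqref{eq:collapse-comparison} is nonnegative. Assume $q_1>q_2$.

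The first step uses the optimality of $r_1$ for type 2. Type 2 weakly prefers $r_2$ to $r_1$:
\[
q_2-c\left(\frac{(r_2-x_2)_+}{\theta_2}\right)\ge q_1-c\left(\frac{(r_1-x_2)_+}{\theta_2}\right),
\]
so $q_1-q_2\le c\bigl((r_1-x_2)_+/\theta_2\bigr)$. Similarly, type 1 can always report $x_1$ at zero cost, so $q_1-c\bigl((r_1-x_1)_+/\theta_1\bigr)\ge Q(z,x_1)\ge 0$. This gives $c\bigl((r_1-x_1)_+/\theta_1\bigr)\le q_1\le 1$.

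The second step bounds type 2's cost of imitating $r_1$. Write $u=(r_1-x_1)_+$. Then $(r_1-x_2)_+\le u+(x_1-x_2)_+$, and
\[
\frac{u+(x_1-x_2)_+}{\theta_2}=\frac{\theta_1}{\theta_2}\cdot\frac{u}{\theta_1}+\frac{\theta_2-\theta_1}{\theta_2}\cdot\frac{(x_1-x_2)_+}{\theta_2-\theta_1},
\]
which is a convex combination. Since $c$ is increasing and convex,
\[
c\left(\frac{(r_1-x_2)_+}{\theta_2}\right)\le \frac{\theta_1}{\theta_2}\,c\left(\frac{u}{\theta_1}\right)+\frac{\theta_2-\theta_1}{\theta_2}\,c\left(\frac{(x_1-x_2)_+}{\theta_2-\theta_1}\right).
\]

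The third step, which I expect to be the delicate one, removes the first term. I would not use $c(u/\theta_1)\le 1$ directly. Instead, I would combine the inequalities with type 1's own incentive comparison, $q_1-c(u/\theta_1)\ge q_2-c\bigl((r_2-x_1)_+/\theta_1\bigr)\ge q_2-c(\cdots)$. A cleaner route may be to change which report type 2 imitates. Note that $q_1-c(u/\theta_1)\ge 0$. Type 2 also could report $r_1$, and at worst this costs the convex combination above. Substituting into the step-one bound gives
\[
q_1-q_2\le \tfrac{\theta_1}{\theta_2}c(u/\theta_1)+\tfrac{\theta_2-\theta_1}{\theta_2}c(\cdot).
\]
Using $c(u/\theta_1)\le q_1$ and rearranging then yields
\[
q_1\Bigl(1-\tfrac{\theta_1}{\theta_2}\Bigr)-q_2\le \tfrac{\theta_2-\theta_1}{\theta_2}\,c(\cdot).
\]
This is weaker than the claim when $q_2>0$. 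To close the gap I would instead apply convexity to the point $r_1$ versus the zero-cost report. That is, I would compare type 2's payoffs at $r_1$ and at $r_2$ using type 1's payoff at $r_1$ net of $q_2$, namely $q_1-c(u/\theta_1)\ge q_2$ (type 1 weakly prefers $r_1$ to $r_2$, and $r_2$ costs type 1 something nonnegative only if …). This suggests replacing $q_1$ by the net surplus $q_1-q_2$ in the bound $c(u/\theta_1)\le q_1-q_2$ whenever type 1 could obtain $q_2$ costlessly. That holds in particular if $x_1\ge r_2$, and more generally I would check that it suffices via the monotonicity of $Q$ along best responses. With $c(u/\theta_1)\le q_1-q_2$, the rearrangement gives $(q_1-q_2)(\theta_2-\theta_1)/\theta_2\le \frac{\theta_2-\theta_1}{\theta_2}c(\cdot)$, which is exactly \eqref{eq:collapse-comparison}.

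The final claim follows by taking $x_1=x_2$: the right-hand side is $c(0)=0$, so $q_1\le q_2$.
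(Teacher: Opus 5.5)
Your first two steps are correct; in particular the convex-combination bound on type 2's cost of imitating $r_1$ is valid. The argument is never closed, however. Everything rests on the inequality $c\bigl((r_1-x_1)_+/\theta_1\bigr)\le q_1-q_2$. Writing $K_i(r)\triangleq c\bigl((r-x_i)_+/\theta_i\bigr)$, type 1's incentive constraint only gives $K_1(r_1)\le q_1-q_2+K_1(r_2)$, and the extra term vanishes only when $r_2\le x_1$. Your appeal to ``monotonicity of $Q$ along best responses'' cannot remove it, because the inequality is false in general.

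Here is a counterexample. Take $c(u)=u^2$, $(x_1,\theta_1)=(1,1)$, $(x_2,\theta_2)=(0,2)$, and the menu $Q(z,r)=0$ for $r<1.2$, $Q(z,r)=1/2$ for $1.2\le r<1.4$, and $Q(z,r)=5/8$ for $r\ge 1.4$. Type 1 reports $r_1=1.4$, with payoff $0.465$, versus $0.46$ at $r=1.2$ and $0$ at $r=x_1$. Type 2 reports $r_2=1.2$, with payoff $0.14$, versus $0.135$ at $r=1.4$ and $0$ at $r=x_2$. Then $q_1-q_2=1/8$, while $c\bigl((r_1-x_1)_+/\theta_1\bigr)=0.16$. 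Without that inequality your route stops at $(1-\theta_1/\theta_2)q_1-q_2\le(1-\theta_1/\theta_2)\,c(\cdot)$. As you observed, this is weaker than the claim whenever $q_2>0$.

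The missing idea is to use both incentive constraints jointly to locate $r_1$. Subtracting them gives $D(r_1)\ge D(r_2)$ for $D\triangleq K_2-K_1$, and type 2's constraint also forces $r_1>r_2$. The paper introduces the crossing point $r^*\triangleq x_1+\theta_1(x_1-x_2)_+/(\theta_2-\theta_1)$, at which $K_1(r^*)=K_2(r^*)=c\bigl((x_1-x_2)_+/(\theta_2-\theta_1)\bigr)$. It shows that $D\ge 0$ below $r^*$ and that $D$ is strictly decreasing above it, which forces $r_1\le r^*$. Hence $q_1-q_2\le K_2(r_1)-K_2(r_2)\le K_2(r^*)$. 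Once $r_1\le r^*$ is known, your level bound combined with $K_1(r_1)\le K_1(r^*)$ would also finish the proof.

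If you want to keep your interpolation idea without locating $r_1$, use its incremental version instead. The function $E\triangleq K_2-(\theta_1/\theta_2)K_1$ is nonnegative and nondecreasing on $(-\infty,r^*]$ and nonincreasing on $[r^*,\infty)$. Therefore $K_2(r_1)-K_2(r_2)\le\frac{\theta_1}{\theta_2}[K_1(r_1)-K_1(r_2)]+\frac{\theta_2-\theta_1}{\theta_2}\,c\bigl((x_1-x_2)_+/(\theta_2-\theta_1)\bigr)$. Now insert type 1's full incentive constraint, $K_1(r_1)-K_1(r_2)\le q_1-q_2$, rather than its participation constraint. Combining this with type 2's constraint then yields the lemma.
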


Fix $z$ for which the conditional ability distribution is atomless.
This holds for almost every $z$. Define the average allocation at $(z,x)$ by
\[
\bar A_k\left(z,x\right)
\triangleq \E\left[A_k\left(z,x,\Theta_k\right)\mid Z=z\right].
\]
Let $\Theta_k,\Theta_k'$ be independent draws from the conditional ability distribution given $Z=z$. Given $\Theta_k<\Theta_k'$,
Lemma~\ref{lem:collapse-comparison} gives
\begin{equation}\label{eq:collapse-ineq-1}
A_k\left(z,x,\Theta_k\right)-A_k\left(z,x',\Theta_k'\right)
\leq c\left(\frac{(x-x')_+}{\Theta_k'-\Theta_k}\right).
\end{equation}

To average this comparison over manipulation ability, we use the following inequality.

\begin{lemma}[Averaging inequality]\label{lem:collapse-averaging}
For any measurable functions $f,g:(0,1)\to[0,1]$,
\begin{equation}\label{eq:collapse-monotone-pair}
\left(\int_0^1[f(u)-g(u)]\,du\right)_+^2
\leq
2\int_0^1\int_u^1(f(u)-g(v))_+\,dv\,du.
\end{equation}
\end{lemma}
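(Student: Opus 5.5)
The plan is to reduce the inequality to indicator functions by a layer-cake decomposition, and then handle indicators with an elementary rearrangement bound.

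\emph{Step 1 (layer cake).} For $a,b\in[0,1]$ we have $(a-b)_+=\int_0^1\mathbf 1\{b<s<a\}\,ds=\int_0^1\big(\mathbf 1\{a>s\}-\mathbf 1\{b>s\}\big)_+\,ds$. Set $A_s\triangleq\{u:f(u)>s\}$ and $B_s\triangleq\{v:g(v)>s\}$. Apply the identity with $a=f(u)$, $b=g(v)$ and use Tonelli; everything is nonnegative and bounded. This gives
\[
J\triangleq\int_0^1\!\!\int_u^1(f(u)-g(v))_+\,dv\,du=\int_0^1 J_s\,ds,
\qquad
J_s\triangleq\int_{A_s}\big|\{v>u:\ v\notin B_s\}\big|\,du,
\]
where $|\cdot|$ is Lebesgue measure. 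Similarly $\int_0^1(f-g)=\int_0^1(|A_s|-|B_s|)\,ds$ by the layer-cake formula.

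\emph{Step 2 (Jensen).} The map $x\mapsto x_+^2$ is convex, so Jensen's inequality on $s\in(0,1)$ yields $\big(\int_0^1(f-g)\big)_+^2\le\int_0^1(|A_s|-|B_s|)_+^2\,ds$. It therefore suffices to prove, for measurable sets $A,B\subseteq(0,1)$,
\[
\int_A\big|\{v>u: v\notin B\}\big|\,du\;\ge\;\tfrac12(|A|-|B|)_+^2 .
\]
Integrating this over $s$ with $A=A_s$, $B=B_s$ and doubling then gives \eqref{eq:collapse-monotone-pair}.

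\emph{Step 3 (indicator case).} Fix $u\in A$. The interval $(u,1)$ has length $1-u$, and at most $|B|$ of it lies in $B$, so $|\{v>u:v\notin B\}|\ge(1-u-|B|)_+$. The function $u\mapsto(1-u-|B|)_+$ is nonincreasing, so among sets of measure $|A|$ its integral is minimized by the rightmost set $(1-|A|,1)$. Let $\delta=|A|-|B|>0$. The minimum is then
\[
\int_{1-|A|}^{1-|B|}(1-|B|-u)\,du=\frac{\delta^2}{2}.
\]
The rearrangement claim is the standard bathtub fact: for a nonincreasing weight $w$ and $|A|$ fixed, $\int_A w\ge\int_{(1-|A|,1)}w$. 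This follows by comparing $A\setminus(1-|A|,1)$ with $(1-|A|,1)\setminus A$, which have equal measure and lie respectively to the left and right of $1-|A|$.

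\emph{Main obstacle.} The crux is Step 1. A direct approach splitting $\int\!\!\int(f(u)-g(v))$ into the regions $v>u$ and $v<u$ loses the square, and naive threshold arguments only give the constant $4$. The example $f=\mathbf 1_{(1-a,1)}$, $g=\mathbf 1_{(0,1-a)}$ shows the constant $2$ is sharp. The layer-cake identity for $(a-b)_+$, which is exact rather than an inequality, together with Jensen, is what reduces the problem to this sharp set case.
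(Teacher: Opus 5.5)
Your proof is correct, and its skeleton matches the paper's. Both use a layer-cake decomposition of $(f(u)-g(v))_+$ over a threshold $t$, a per-level lower bound of the form $\tfrac12(\text{mass})^2$, and Jensen in $t$ to move the square outside (the paper writes this step as Cauchy--Schwarz).

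Where you differ is the per-level step. The paper's set $S_t=\{g\le t<f\}$ is exactly your $A_t\setminus B_t$. The paper keeps only pairs $u<v$ with both $u,v\in S_t$. Such pairs automatically satisfy $f(u)>t\ge g(v)$, and by symmetry they occupy area $|S_t|^2/2$. Since $|A\setminus B|\ge(|A|-|B|)_+$, this is a one-line proof of your Step~3 with no rearrangement argument. It also gives a stronger per-level quantity, so the paper first obtains the intermediate bound $2\int_0^1\int_u^1(f(u)-g(v))_+\,dv\,du\ge\big(\int_0^1(f-g)_+\big)^2$ and only then passes to $\big(\int_0^1(f-g)\big)_+^2$. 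Your pointwise estimate $|(u,1)\setminus B|\ge(1-u-|B|)_+$ discards where $B$ sits, and you then recover the loss through the bathtub worst case. That route is valid, but it does more work and yields a weaker intermediate inequality.

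One small aside on your sharpness remark. Your example gives $(2a-1)^2$ on the left versus $a^2$ on the right, so it is tight only at $a=1$. Taking $g\equiv0$ and $f=\mathbf 1_{(1-a,1)}$ gives equality for every $a$.
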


An order-preserving quantile change of variables applies Lemma~\ref{lem:collapse-averaging} to the atomless conditional ability distribution. Combining it with the preceding allocation comparison gives
\[
\begin{aligned}
\left(\bar A_k\left(z,x\right)-\bar A_k\left(z,x'\right)\right)_+^2
&\leq2\E\left[
\min\left\{1,c\left(\frac{|x-x'|}{\Delta\Theta_k}\right)\right\}
\mathbf1\{\Theta_k<\Theta_k'\}\,\middle|\,Z=z\right]\\
&=\eta_k\left(|x-x'|\mid z\right),
\end{aligned}
\]
where the inequality uses \eqref{eq:collapse-ineq-1} and $0\leq A_k\leq1$, and the equality follows by symmetry. Exchanging $x$ and $x'$ gives
\begin{equation}\label{eq:collapse-conditional-comparison}
|\bar A_k\left(z,x\right)-\bar A_k\left(z,x'\right)|
\leq\sqrt{\eta_k\left(|x-x'|\mid z\right)}.
\end{equation}

Conditional independence and symmetrization give
\begin{align}
\E[(v\left(Z,X\right)-m(Z))A_k]
& \overset{(a)}{=}\E[(v\left(Z,X\right)-m(Z))\bar A_k\left(Z,X\right)] \nonumber \\
&  \overset{(b)}{=} \frac12\E\Big[
\left(v\left(Z,X\right)-v\left(Z,X'\right)\right) \nonumber \\
&\qquad\qquad\times
\left(\bar A_k\left(Z,X\right)-\bar A_k\left(Z,X'\right)\right)\Big]\nonumber \\
&  \overset{(c)}{\le}\frac12\E\left[\Delta v\sqrt{\eta_k\left(\Delta X\mid Z\right)}\right]
=\varepsilon_k, \label{eq:collapse-information}
\end{align}
where $(a)$ follows from iterated expectations and conditional independence of $X$ and $\Theta_k$ given $Z$. Equality $(b)$ follows by expanding the product and using that $X$ and $X'$ are independent and identically distributed conditional on $Z$. Inequality $(c)$ holds by \eqref{eq:collapse-conditional-comparison}.

To compare payoffs with observable-only ranking, write
\[
\begin{aligned}
\pi_k-V^0
&=\E[(v\left(Z,X\right)-m(Z))A_k]
  +\E[(m(Z)-\lambda)(A_k-a^0(Z))]\\
&=\E[(v\left(Z,X\right)-m(Z))A_k]-L_k\\
&\leq\varepsilon_k,
\end{aligned}
\]
where the first equality uses $V^0=\E[m(Z)a^0(Z)]$ and $\E[A_k-a^0(Z)]=0$, the second uses $a^0(Z)=\mathbf1\{m(Z)>\lambda\}$ and the definition of $L_k$, and the inequality follows from \eqref{eq:collapse-information} and $L_k\geq0$.
Since $W_k=\pi_k-\E[C_k]$, this also gives
\begin{equation}\label{eq:collapse-decomposition}
W_k-V^0
=\E[(v\left(Z,X\right)-m(Z))A_k]-L_k-\E[C_k].
\end{equation}

Since this bound holds for every feasible screening rule and reporting equilibrium, the optimal private payoff satisfies $U_k\leq V^0+\varepsilon_k$.
Nonnegative costs imply $W_k^\star\leq U_k$, and observable-only ranking attains $V^0$ at zero cost. Hence
\[
V^0\leq W_k^\star\leq U_k\leq V^0+\varepsilon_k,
\]
which proves \eqref{eq:collapse-rate}.

It remains to show that $\varepsilon_k\to0$ as $k\to\infty$.
By the definition of $\omega_k$ in \eqref{eq:collapse-concentration}, for every fixed $d>0$,
\[
\Pr(\Delta\Theta_k\leq d)
\leq\E[\omega_k\left(2d\mid Z\right)],
\]
because, conditional on $Z$ and $\Theta_k'$, the event on the left restricts $\Theta_k$ to an interval of length $2d$.
The right-hand side tends to zero as $k\to\infty$.
Thus $\Delta\Theta_k\xrightarrow{p}\infty$ as $k\to\infty$.
Since the distribution of $\Delta X$ does not depend on $k$, $\frac{\Delta X}{\Delta\Theta_k}\xrightarrow{p}0$ as $k\to\infty$.

For any $M>0$,
\[
\begin{aligned}
2\varepsilon_k
&\leq M\E\left[\sqrt{\eta_k\left(\Delta X\mid Z\right)}\right]
   +\E[\Delta v\,\mathbf1\{\Delta v>M\}]\\
&\leq M\left(
\E\left[\min\left\{1,c\left(
\frac{\Delta X}{\Delta\Theta_k}\right)\right\}\right]
\right)^{1/2}
   +\E[\Delta v\,\mathbf1\{\Delta v>M\}],
\end{aligned}
\]
where the first inequality splits the expectation at $\Delta v=M$ and uses $\eta_k\leq1$, and the second follows from Jensen's inequality and conditional independence.

For fixed $M$, the first term in the last line tends to zero as $k\to\infty$, because $\Delta X/\Delta\Theta_k\xrightarrow{p}0$ as $k\to\infty$, $c$ is continuous at zero with $c(0)=0$, and the integrand is bounded by one.
Consequently,
\[
0\leq\limsup_{k\to\infty}\varepsilon_k
\leq\frac12\E[\Delta v\,\mathbf1\{\Delta v>M\}].
\]
Since $\E[\Delta v]\leq2\E[|v(Z,X)|]<\infty$, the right-hand side tends to zero as $M\to\infty$. Hence $\varepsilon_k\to0$ as $k\to\infty$.

We are left to prove the two lemmas.
\begin{proof}[Proof of Lemma~\ref{lem:collapse-comparison}]
Let $r_i$ be applicant $i$'s chosen report and write $K_i(r)\triangleq c((r-x_i)_+/\theta_i)$.
It suffices to consider $q_1>q_2$. Optimality gives
\begin{equation}\label{eq:collapse-pairwise-ic}
K_1(r_1)-K_1(r_2)
\leq q_1-q_2
\leq K_2(r_1)-K_2(r_2),
\end{equation}
where the second inequality implies $r_1>r_2$, since $K_2$ is nondecreasing.

Let $d\triangleq (x_1-x_2)_+$ and define
\[
r^*\triangleq x_1+\frac{\theta_1d}{\theta_2-\theta_1},
\qquad
K_1(r^*)=K_2(r^*)
=c\left(\frac{d}{\theta_2-\theta_1}\right).
\]
Write $D(r)\triangleq K_2(r)-K_1(r)$.
Since $c$ is convex and nondecreasing, $K_1$ and $K_2$ are convex and hence locally absolutely continuous.
Thus $D$ is locally absolutely continuous.
For $r\leq r^*$, applicant 1's normalized manipulation is weakly smaller, so $D(r)\geq0$. For $r>r^*$, this ordering reverses, and convexity of $c$ together with $\theta_1<\theta_2$ gives $D'(r)<0$ almost everywhere.
Thus $D$ decreases strictly from $D(r^*)=0$ above $r^*$.
If $r_1>r^*$, we would therefore have $D(r_1)<D(r_2)$, contradicting \eqref{eq:collapse-pairwise-ic}, which implies $D(r_1)\geq D(r_2)$. Hence $r_1\leq r^*$, and
\[
q_1-q_2
\leq K_2(r_1)-K_2(r_2)
\leq K_2(r^*)
=c\left(\frac{(x_1-x_2)_+}{\theta_2-\theta_1}\right),
\]
where the first inequality follows from \eqref{eq:collapse-pairwise-ic}, the second uses $r_1\leq r^*$ and the nonnegativity and monotonicity of $K_2$, and the equality follows from the choice of $r^*$.
When $x_1=x_2$, the bound is zero, proving the stated monotonicity.
\end{proof}

\begin{proof}[Proof of Lemma~\ref{lem:collapse-averaging}]
For $t\in[0,1]$, let $S_t\triangleq \{u:g(u)\leq t<f(u)\}$ and let $s(t)$ denote its Lebesgue measure. If $u,v\in S_t$, then $g(v)\leq t<f(u)$, and the set of such pairs with $u<v$ has area $s(t)^2/2$ by symmetry. Integrating over $t$ and applying Cauchy--Schwarz therefore gives
\[
\begin{aligned}
2\int_0^1\int_u^1(f(u)-g(v))_+\,dv\,du
&\geq\int_0^1s(t)^2\,dt
\geq\left(\int_0^1s(t)\,dt\right)^2\\
&=\left(\int_0^1(f(u)-g(u))_+\,du\right)^2
\geq\left(\int_0^1[f(u)-g(u)]\,du\right)_+^2.
\end{aligned}
\]
The equality follows by integrating the definition of $S_t$ in the opposite order. This proves \eqref{eq:collapse-monotone-pair}.
\end{proof}

\subsection{Common improvements and growing dispersion}
\label{subsec:asymptotic-regime}

Technological progress can make manipulation easier for everyone while also widening differences in manipulation ability.
We use a location--scale family to separate these two changes.
Fix a reference joint distribution of $(Z,X,\Theta)$ satisfying
Assumption~\ref{ass:conditional-independence}, and specify manipulation ability in market $k$ by
\begin{equation}\label{eq:scaled-ability}
\left(\Theta_k\mid Z=z\right)
\stackrel{d}{=}\left(\ell_k+s_k\Theta\mid Z=z\right),
\qquad \ell_k\geq0,\quad s_k>0.
\end{equation}
The location parameter $\ell_k$ captures the common improvement, while the scale parameter $s_k$ controls dispersion.
The reference distribution remains fixed as $k$ varies.

The following corollary shows that growing dispersion is sufficient for strategic feature collapse, regardless of the size of the common improvement.

\begin{corollary}[Location--scale growth]
\label{cor:collapse-location-scale}
Suppose Assumptions~\ref{ass:conditional-independence}, \ref{ass:type-monotonicity}, and \ref{ass:manipulation-cost} hold, and manipulation ability follows \eqref{eq:scaled-ability}.
If $s_k\to\infty$ as $k\to\infty$, then
\[
W_k^\star\longrightarrow V^0,
\qquad
U_k\longrightarrow V^0
\qquad\text{as }k\to\infty.
\]
Thus observable-only ranking is asymptotically socially and privately optimal for every nonnegative sequence $(\ell_k)$, regardless of its growth relative to $(s_k)$.
\end{corollary}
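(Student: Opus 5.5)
The plan is to reduce the corollary to Theorem~\ref{thm:strategic-feature-collapse} by verifying Assumption~\ref{ass:levy-anticoncentration} for the location--scale family \eqref{eq:scaled-ability}. The assumptions of Section~\ref{sec:model} hold in every market: the factorization in Assumption~\ref{ass:conditional-independence} is preserved because $\Theta_k$ is a deterministic affine image of $\Theta$ with $\ell_k\geq0$, $s_k>0$. Hence $\Theta_k>0$ and $X\perp\Theta_k\mid Z$. The density of $\Theta_k$ given $Z=z$ is $s_k^{-1}f((\theta-\ell_k)/s_k\mid z)$, which is still jointly measurable. Once anti-concentration is checked, \eqref{eq:collapse-value} gives both limits.

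The key computation is translation and scale invariance of the Lévy concentration function. For any $a$,
\[
\Pr\{a\leq \ell_k+s_k\Theta\leq a+d\mid Z=z\}=\Pr\{(a-\ell_k)/s_k\leq\Theta\leq (a-\ell_k)/s_k+d/s_k\mid Z=z\}.
\]
Taking the supremum over $a$ yields $\omega_k(d\mid z)=\omega(d/s_k\mid z)$, where $\omega$ is the concentration function of the reference $\Theta\mid Z=z$. Thus $\ell_k$ drops out entirely, which is the content of the final sentence of the corollary.

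It remains to show $\E[\omega(\delta\mid Z)]\to0$ as $\delta\downarrow0$; since $s_k\to\infty$, we have $d/s_k\to0$. For almost every $z$, $\Theta\mid Z=z$ has a density $f(\cdot\mid z)$, so it is atomless. I claim $\omega(\delta\mid z)\to0$ as $\delta\to0$. The distribution function $F(\cdot\mid z)$ is continuous on $\mathbb R$ and has limits $0$ and $1$ at $\mp\infty$, so it is uniformly continuous. Therefore $\sup_a [F(a+\delta\mid z)-F(a\mid z)]\to0$; the closed endpoints do not matter because there are no atoms. Since $\omega(\cdot\mid z)\le1$, dominated convergence over $Z$ gives $\E[\omega(d/s_k\mid Z)]\to0$ for each fixed $d>0$. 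One also needs measurability of $z\mapsto\omega(\delta\mid z)$. I would handle this by restricting the supremum to rational $a$, which is legitimate because $F(\cdot\mid z)$ is continuous.

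The main obstacle is modest. It is precisely this pointwise uniform-continuity step and its measurability; the rest is bookkeeping. With Assumption~\ref{ass:levy-anticoncentration} verified, Theorem~\ref{thm:strategic-feature-collapse} delivers $W_k^\star\to V^0$ and $U_k\to V^0$ for every nonnegative $(\ell_k)$.
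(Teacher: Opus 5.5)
Your proposal is correct and follows the paper's proof essentially step for step. Like the paper, you use the substitution $b=(a-\ell_k)/s_k$ to get $\omega_k(d\mid z)=\omega(d/s_k\mid z)$, invoke uniform continuity of the atomless conditional distribution function, apply dominated convergence using $0\leq\omega_k\leq1$, and then appeal to Theorem~\ref{thm:strategic-feature-collapse}. Your additional checks are details the paper leaves implicit: that Assumption~\ref{ass:conditional-independence} survives the affine map, and that $z\mapsto\omega(\delta\mid z)$ is measurable via rational $a$.
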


\begin{proof}
For every fixed $d>0$, the definition of $\omega_k$ gives
\begin{equation}\label{eq:location-scale-anticoncentration}
\begin{aligned}
\omega_k\left(d\mid z\right)
&=\sup_{a\in\mathbb R}
  \Pr\left\{a\leq\ell_k+s_k\Theta\leq a+d\mid Z=z\right\}\\
&=\sup_{b\in\mathbb R}
  \Pr\left\{b\leq\Theta\leq b+d/s_k\mid Z=z\right\},
\end{aligned}
\end{equation}
where the second equality substitutes $b\triangleq (a-\ell_k)/s_k$.
By Assumption~\ref{ass:conditional-independence}, $\Theta\mid Z=z$ has a density for almost every $z$, so its distribution function is continuous and hence uniformly continuous.
Since $d/s_k\to0$, it follows that $\omega_k(d\mid z)\to0$ for almost every $z$ as $k\to\infty$.
The bound $0\leq\omega_k\leq1$ and dominated convergence then give
\[
\E[\omega_k\left(d\mid Z\right)]\longrightarrow0
\qquad\text{as }k\to\infty,
\]
which establishes Assumption~\ref{ass:levy-anticoncentration}.
Theorem~\ref{thm:strategic-feature-collapse} then gives the stated limits.
\end{proof}

Equation~\eqref{eq:location-scale-anticoncentration} shows that the common shift $\ell_k$ leaves fixed-width concentration unchanged.
Thus increasing $\ell_k$ alone, with $s_k$ fixed, does not imply the anti-concentration condition. Growing dispersion supplies that condition.

\subsection{The remaining value of screening}

Strategic feature collapse establishes that $W_k^\star-V^0\to0$ as $k\to\infty$, but leaves open how quickly the value of report-based screening vanishes.
The following proposition shows that convergence can be arbitrarily slow across admissible distributions of manipulation ability, even when both the location and scale parameters grow linearly in $k$.

\begin{proposition}[Distribution-dependent rates]
\label{prop:collapse-slow-rate}
Along the path $\ell_k=k-1$ and $s_k=k$, there are primitives satisfying Assumptions~\ref{ass:conditional-independence}, \ref{ass:type-monotonicity}, \ref{ass:manipulation-cost}, and \ref{ass:levy-anticoncentration} for which $W_k^\star-V^0$ converges to zero more slowly than every negative power of $k$ as $k\to\infty$.
Moreover, for every prescribed positive sequence $r_k\to0$ as $k\to\infty$, there exists a choice of primitives satisfying these assumptions for which
\[
\lim_{k\to\infty}\frac{W_k^\star-V^0}{r_k}=\infty.
\]
The reference distribution of manipulation ability may depend on the prescribed sequence $(r_k)$, but remains fixed as $k$ varies.
\end{proposition}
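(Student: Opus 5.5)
We prove the proposition by explicit construction. The primitives are chosen so that observable quality carries almost no information near the cutoff, and the only remaining lever is a report hurdle aimed at the upper tail of the reference ability distribution. The rate is then governed by how much mass the reference density puts near its upper endpoint. A thin, slowly decaying tail makes the screening track pay off very slowly.

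Concretely, take $Z$ standard normal and independent of everything else. Take $X$ binary-like with a smooth density concentrated near $\{0,1\}$; for the plan we treat $X\in\{0,1\}$ with probability $1/2$ each, smoothing at the end. Take $v(z,x)=\Phi(z)+x$, which gives $m(z)=\Phi(z)+1/2$, strictly increasing and continuous. Take $c(u)=u^2$. The reference ability $\Theta$ has density $h$ on $(0,1]$, independent of $Z$, so that $\Theta_k=k-1+k\Theta$ is supported on $(k-1,2k-1]$. Assumption~\ref{ass:levy-anticoncentration} holds by Corollary~\ref{cor:collapse-location-scale}.

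The lower bound on $W_k^\star-V^0$ comes from a dual-track menu. Admit directly all $z$ above a cutoff $z_0+\delta_k$. For $z\in(z_0-\delta_k,z_0]$, offer the single contract $(1,r_k^\dagger)$ with
\[
r_k^\dagger=\theta^{\max}_k\,c^{-1}(1)=2k-1 .
\]
At this hurdle, a type $x=0$ applicant with any $\theta<2k-1$ strictly prefers not to report, while a type $x=1$ applicant with $\theta\ge \theta^\dagger_k$ is willing, where $\theta^\dagger_k$ solves
\[
\frac{2k-2}{\theta^\dagger_k}=1 .
\]
In reference units, the fraction of qualifying high types is $p_k\triangleq\Pr\{\Theta\ge 1-1/k\}$, up to constants. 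Each such applicant adds match value of about $1/2$ relative to a marginal directly admitted applicant, less a cost of at most $1$.

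The hurdle is too crude as it stands, because the cost nearly equals the benefit, which is one. We therefore refine it to $r=(1-\gamma)(2k-1)$ for a fixed small $\gamma$, trading a small $\Pr\{\Theta\ge 1-\gamma\text{-ish}\}$-type leakage of low types against the gain. Equivalently, we choose the hurdle so that high types qualify on $\{\Theta\ge 1-a/k\}$ while low types need $\Theta$ above the endpoint. This yields net gain per qualifier bounded below by a constant $\kappa>0$, and hence
\[
W_k^\star-V^0\gtrsim \kappa\, p_k \cdot(\text{mass near }z_0),
\]
with $\delta_k$ chosen so that capacity balances, namely $\delta_k\asymp p_k$.

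It remains to choose $h$ and to control the loss from reshuffling near the cutoff. Since the displaced applicants lie at distance $O(p_k)$ from $z_0$, their loss is $O(p_k^2)$, so the net gain is of order $p_k$. It therefore suffices to choose $h$ with
\[
\Pr\{\Theta\ge 1-1/k\}\gg r_k .
\]
Given $r_k\to0$, define the distribution function of $\Theta$ near $1$ by
\[
1-F(1-t)=\sqrt{\bar r(1/t)} ,
\]
where $\bar r$ is a continuous, decreasing majorant of $r_k$ with $\bar r\to0$. This is a valid continuous distribution with a density after mollification, and it gives $p_k/r_k\ge 1/\sqrt{\bar r(k)}\to\infty$. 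For the first claim, take $r_k=1/\log k$, which beats every negative power.

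The main obstacle is the incentive calculation for the hurdle. We must verify that low types, who have $x=0$ and $\theta$ up to $2k-1$, cannot profitably clear a hurdle that high types near the top endpoint clear at a cost bounded away from one, uniformly in $k$. The check is $c((r)/\theta)$ versus $c((r-1)/\theta)$ at $\theta\approx 2k$. Here the gap is only of order $1/k$, so we expect to need the hurdle located where the type $x=1$ versus type $x=0$ difference is resolved by the location shift $\ell_k=k-1$. If this turns out to be too delicate, we would instead let the latent gap $\Delta X$ scale suitably. Alternatively, we would use the upper bound of Proposition~\ref{prop:collapse-rate} only as a sanity check, and pick $X$ with a heavy enough spread that a fixed fraction of high types sits beyond reach of low types.
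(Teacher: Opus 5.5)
There is a genuine gap at the step where you assert a net gain per qualifier bounded below by some $\kappa>0$. With $v(z,x)=\Phi(z)+x$ and $X\in\{0,1\}$, a qualifier from a group just below $z_0$ adds at most about $1/2$ in match value over the displaced marginal direct admit, whose value is about $\lambda$. But any report requirement that preserves selection between $x=1$ and $x=0$ must lie within $o(k)$ of the endpoint $H_ku_*=2k-1$. Excluding all low types needs $r\geq 2k-1$. Every qualifying high type then has normalized effort $(r-1)/\theta_k\geq(2k-2)/(2k-1)\to u_*$, so her cost tends to the full admission benefit $c(u_*)=1$. Each qualifier therefore contributes roughly $1/2-1<0$, and your mechanism lowers welfare.

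Your refinement $r=(1-\gamma)(2k-1)$ does not rescue this. The participation thresholds of the two latent types then differ by one unit of market ability, i.e.\ by only $1/k$ in reference units. Both types therefore qualify on essentially the same event $\{\Theta\gtrsim1-2\gamma\}$. Qualifiers are pooled, their average match value is about $m(z)\leq\lambda$, and they still pay costs bounded away from zero. In fact, the ``main obstacle'' you formulate cannot be overcome. With a fixed latent gap and abilities of order $k$, no requirement is cleared by high types at cost bounded away from one while excluding low types of comparable ability. Letting $\Delta X$ scale with $k$ is not available either, because the distribution of $(Z,X)$ is held fixed across markets.

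The missing idea is to accept costs near one and choose primitives whose latent-quality premium exceeds the admission benefit, i.e.\ $v(z,x)-\lambda-1>0$ for those who qualify. The paper takes $X\sim\operatorname{Unif}[0,1]$, $v(z,x)=\Phi(z)+8x$, $\rho=1/2$, and reference ability on $[1,2]$. It offers \emph{every} excluded group certain admission for $r\geq3k-1+3/4$. Qualifying then forces $x\geq3/4$, so $v\geq6$ with cost at most one, while displaced direct admits are worth about $\lambda=9/2$. This gives a gain of at least $\tfrac14M_k\geq\tfrac1{64}G(1/(8k))$.

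Your band construction also has a bookkeeping problem. Screening only $z\in(z_0-\delta_k,z_0]$ admits mass of order $\delta_kp_k$, but raising the direct cutoff by $\delta_k$ removes mass of order $\delta_k$, so exact capacity fails. Moreover, with $\delta_k\asymp p_k$ the gain is of order $p_k^2$, for which $p_k=\sqrt{\bar r_k}$ no longer beats $r_k$. You should screen a fixed set of groups and calibrate the direct cutoff to the screened mass. With that change, your tail choice $1-F(1-t)=\sqrt{\bar r(1/t)}$ matches the paper's $G(1/(8k))=\sqrt{\bar r_k}$ and completes the argument.
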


The construction offers an additional screening opportunity to applicants excluded by ob\-serv\-able-only ranking. Those who qualify have high latent quality and reference manipulation ability near its upper endpoint.
Although the relevant interval of reference ability has width proportional to $1/k$, its probability mass can decline arbitrarily slowly.
This explains why the baseline assumptions do not provide a common convergence rate across ability distributions.

The next section imposes additional distributional regularity to characterize the remaining value of screening and the mechanisms that attain it.

\subsubsection{Proof of Proposition~\ref{prop:collapse-slow-rate}}

Let $X\sim\operatorname{Unif}[0,1]$ and $Z\sim N(0,1)$ be independent, and let $\Phi$ denote the standard normal cumulative distribution function. Set $\rho=1/2$, $v(z,x)=\Phi(z)+8x$, and $c(u)=u^2$.
Since $\Phi(Z)$ is uniform on $[0,1]$, we have $m(z)=\Phi(z)+4$, and observable-only ranking admits applicants with $\Phi(z)>1/2$.

Take reference ability $\Theta$ independent of $(X,Z)$ and supported on $[1,2]$, with absolutely continuous upper-tail distribution
\[
\Pr\{2-\Theta\leq u\}=G(u),
\qquad 0\leq u\leq1,
\]
where $G(0)=0$ and $G(1)=1$.
We first derive a welfare bound valid for every such $G$, then choose its behavior near zero.
These primitives satisfy the baseline model assumptions.
Along the path $\Theta_k=(k-1)+k\Theta$, Corollary~\ref{cor:collapse-location-scale} gives conditional anti-concentration and $W_k^\star-V^0\to0$ as $k\to\infty$.

Since reference ability $\Theta$ is supported on $[1,2]$, manipulation ability $\Theta_k$ is supported on $[2k-1,3k-1]$.
For applicants with $\Phi(z)<1/2$, offer certain allocation after report $r\geq3k-1+3/4$.
Since $c(u)=u^2$ and the report requirement exceeds $x\in[0,1]$, the minimum cost of meeting it is
\[
c\left(\frac{3k-1+\frac34-x}{\Theta_k}\right)
=
\left(\frac{3k-1+\frac34-x}{(k-1)+k\Theta}\right)^2.
\]
This cost is at most one if and only if
$
3k-1+\frac34-x\leq(k-1)+k\Theta,
$
or, equivalently,
$
k(2-\Theta)\leq x-\frac34.
$
Applicants choose admission when the cost is at most one, with admission selected at indifference.
Thus admission through this route requires $x\geq3/4$.
Using independence and the uniform distributions of $X$ and $\Phi(Z)$, the added allocation mass is
\begin{equation}\label{eq:M_k_1_8}
M_k=\frac12\int_{3/4}^1
G\left(\frac{x-3/4}{k}\right)\,dx
\leq\frac18,
\end{equation}
where the inequality uses $0\leq G\leq1$.
Every applicant admitted through this route has match value $v(z,x)=\Phi(z)+8x\geq6$, while her manipulation cost is at most one.

Restore exact capacity by admitting automatically only those with $\Phi(z)>1/2+M_k$ and rejecting the intervening observable band.
The direct-admission and screening routes admit masses $1/2-M_k$ and $M_k$, respectively.
In the screening group, reporting at the threshold when its cost is at most one, and truthfully otherwise, gives a measurable best response.
Truthful reporting is optimal in the other groups. 
The resulting mechanism is therefore feasible.

Writing $y\triangleq \Phi(z)$, the expected match value of applicants displaced from automatic admission is $m(z)=y+8\E[X\mid Z=z]=y+4$, since $X$ is independent of $Z$ and uniform on $[0,1]$.
Because $\Phi(Z)$ is uniform, the match value removed is
\[
\int_{1/2}^{1/2+M_k}(y+4)\,dy
=\frac92M_k+\frac12M_k^2.
\]
Consequently,
\begin{equation}\label{eq:slow-rate-lower-bound}
\begin{aligned}
W_k^\star-V^0
&\overset{(a)}{\geq}
5M_k-\left(\frac92M_k+\frac12M_k^2\right)\\
&=\frac12M_k-\frac12M_k^2\\
&\overset{(b)}{\geq}\frac14M_k
\overset{(c)}{\geq}\frac1{64}G\left(\frac1{8k}\right).
\end{aligned}
\end{equation}
Here $(a)$ follows because optimal welfare is at least the welfare of the constructed feasible mechanism: additional admissions contribute at least $5M_k$ after manipulation costs, while displaced automatic admissions remove the displayed match value.
Inequality $(b)$ uses $M_k\leq1/8<1/2$ from \eqref{eq:M_k_1_8}.
For $(c)$, restricting the integral defining $M_k$ to $x\in[7/8,1]$ and using monotonicity of $G$ gives $M_k\geq G(1/(8k))/16$.

To prove the second claim, fix any positive sequence $r_k\to0$ as $k\to\infty$.
The idea is to choose $G(1/(8k))$ equal to the square root of the benchmark.
To ensure that these values are consistent with a nondecreasing distribution function, define
\[
\bar r_k\triangleq \sup_{n\geq k}r_n.
\]
Then $0<r_k\leq\bar r_k$ and $\bar r_k\downarrow0$ as $k\to\infty$.
Choose an integer $K$ such that $\bar r_K<1$, and prescribe
\[
G\left(\frac1{8k}\right)=\sqrt{\bar r_k},
\qquad k\geq K.
\]
Define $G$ by linear interpolation on each interval
$
\left[\frac1{8(k+1)},\frac1{8k}\right],
$ for $k\geq K.$
Set $G(0)=0$, and complete the definition on $[1/(8K),1]$ by linear interpolation to $G(1)=1$.

The prescribed values decrease to zero as $k$ increases, so $G$ is nondecreasing and continuous, including at zero.
Its nonnegative slopes define an integrable density: their integrals sum to
\[
\sum_{n=K}^{\infty}
\left(\sqrt{\bar r_n}-\sqrt{\bar r_{n+1}}\right)
+\left(1-\sqrt{\bar r_K}\right)
=1.
\]
Thus $G$ is an absolutely continuous distribution function.
It depends on the prescribed sequence $(r_k)$, but defines one reference distribution that remains fixed across markets.

For every $k\geq K$, \eqref{eq:slow-rate-lower-bound} gives
\[
\begin{aligned}
\frac{W_k^\star-V^0}{r_k}
\geq\frac{G(1/(8k))}{64r_k}=\frac{\sqrt{\bar r_k}}{64r_k} \geq\frac1{64\sqrt{\bar r_k}},
\end{aligned}
\]
where the last inequality uses $r_k\leq\bar r_k$.
The final expression diverges as $k\to\infty$.
Therefore,
\[
\lim_{k\to\infty}\frac{W_k^\star-V^0}{r_k}=\infty,
\]
which establishes the stated limit claim.

For the first claim, apply this construction with $r_k=1/\log(k+1)$.
This sequence is decreasing, so $\bar r_k=r_k$.
The resulting fixed distribution therefore gives, for every $p>0$ and all sufficiently large $k$,
\[
\frac{W_k^\star-V^0}{k^{-p}}
\geq
\frac{k^p}{64\sqrt{\log(k+1)}}.
\]
The right-hand side diverges as $k\to\infty$.
Together with $W_k^\star-V^0\to0$, this proves convergence slower than every negative power of $k$.

\section{Second-Order Optimality: Dual-Track Screening}
\label{sec:second-order}

Observable-only ranking captures the first-order welfare term, but it completely overlooks applicants whose latent quality compensates for weaker observable quality.
This section characterizes the optimal welfare expansion at scale $1/s_k$ and constructs a dual-track mechanism that attains it.
We call the rule dual-track screening.
Under this rule, the direct-admission track admits applicants above a cutoff for observable quality, and the \emph{screening track} offers the applicants a menu of admission probabilities and corresponding thresholds for a function of reports and observable quality.

We use the location--scale family in \eqref{eq:scaled-ability},
\[
\left(\Theta_k\mid Z=z\right)
\stackrel{d}{=}
\left(\ell_k+s_k\Theta\mid Z=z\right),
\qquad \ell_k\geq0,\quad s_k>0,
\]
with $s_k\to\infty$ as $k\to\infty$.
Here $\Theta$ denotes reference manipulation ability, whose distribution remains fixed as $k$ varies, and $\Theta_k$ denotes manipulation ability in market $k$.

We write $\theta_k$ for an applicant's realized ability in market $k$ and $\theta\triangleq(\theta_k-\ell_k)/s_k$ for the corresponding reference ability.
For observable group $z$, let $\overline\theta(z)$ denote the essential supremum of the conditional reference distribution.
The corresponding upper endpoint in market $k$ is
\[
H_k(z)\triangleq \ell_k+s_k\overline\theta(z).
\]
Thus $\theta$ and $\overline\theta(z)$ are measured in reference-ability coordinates, while $\theta_k$ and $H_k(z)$ are measured in market-$k$ ability coordinates.

Theorem~\ref{thm:strategic-feature-collapse} establishes that $W_k^\star-V^0\to0$ as $k\to\infty$.
Proposition~\ref{prop:collapse-slow-rate} shows, however, that the baseline assumptions do not imply a common convergence rate across reference ability distributions, even along the path $\ell_k=k-1$ and $s_k=k$.
Under the additional regularity conditions below, we characterize this gap at scale $1/s_k$ and construct feasible rules satisfying $W_k^\star-W_k=o(s_k^{-1})$ as $k\to\infty$.
We call such a sequence second-order optimal. The second-order analysis uses the following additional regularity conditions.

\begin{assumption}[Additional second-order regularity]
\label{ass:second-order-regularity}
The model primitives satisfy the following conditions.
\begin{itemize}
\item \textbf{Manipulation ability bounded away from zero.} The reference manipulation ability satisfies
$\Theta\geq\underline\theta$ almost surely for some constant
$\underline\theta>0$.

\item \textbf{Uniformly bounded total variation.}
There exists a finite constant $\overline V_f$ such that
$\operatorname{TV}(f(\cdot\mid z))\leq\overline V_f$
for almost every $z$.

\item \textbf{Zero marginal cost at zero.}
The manipulation cost satisfies $c'(0)=0$, and $c$ is twice continuously differentiable on $[0,\infty)$.
\end{itemize}
\end{assumption}

We use a jointly measurable representative of the conditional density whose sections are of bounded variation for almost every $z$. Total variation is computed after extending the density by zero outside its support, so it includes jumps at support boundaries.
Its uniform bound controls local concentrations of manipulation ability without requiring bounded support.
The lower bound on $\Theta$ and the condition $c'(0)=0$ make the selection gains from contracts with vanishing admission probabilities negligible at scale $1/s_k$.

The remaining screening opportunities depend on the density near the conditional upper endpoint $\overline\theta(z)$.
Define its endpoint density by
\begin{equation}\label{eq:second-edge-density}
\beta(z)\triangleq
\begin{cases}
\displaystyle
\lim_{\theta\uparrow\overline\theta(z)}f(\theta\mid z),
& \overline\theta(z)<\infty,\\
0,
& \overline\theta(z)=\infty.
\end{cases}
\end{equation}
Bounded variation ensures that the limit exists. The endpoint density $\beta(z)$ determines the first-order probability mass in short intervals immediately below $\overline\theta(z)$.

\subsection{Dual-track screening}

Dual-track screening combines direct admission based on observable quality with a report-based screening opportunity for selected applicants who would otherwise be rejected.
The employer chooses the screening menu and adjusts the direct-admission cutoff so that the two tracks jointly respect capacity.
The tracks are alternative routes to admission. The screening track uses applicants' strategic reports, while direct admission requires no report.

The employer chooses a measurable set of eligible observable groups
\[
\mathcal S_k\subseteq
\left\{z:m(z)<\lambda,\ \overline\theta(z)<\infty\right\}.
\]
Thus eligible applicants are excluded by observable-only ranking and belong to groups with a finite upper endpoint of manipulation ability.
Groups with $\overline\theta(z)=\infty$ have unbounded conditional ability support and are excluded from the screening track.
Applicants in these groups are admitted through the direct-admission track if their observable quality exceeds the adjusted cutoff defined below. Otherwise, they are rejected. Recall that $H_k(z)=\ell_k+s_k\overline\theta(z)$ is the corresponding endpoint in market $k$.

The employer chooses a finite menu of admission probabilities $0<q_1<\cdots<q_{N_k}=1$, common across eligible groups.
Using the manipulation-cost function in \eqref{eq:manipulation-cost}, set $u_i\triangleq c^{-1}(q_i)$.
Thus $u_i$ is the normalized effort whose cost equals $q_i$. 
Let $u_0\triangleq 0$ and $u_*\triangleq c^{-1}(1)$.
The latter is the largest normalized effort consistent with nonnegative utility, since admission yields a payoff of at most $1$.
Then $0=u_0<u_1<\cdots<u_{N_k}=u_*$.
The menu and its number of contracts may vary with $k$, and we suppress this dependence in $q_i$ and $u_i$.

For each eligible group $z$, the employer chooses a measurable
intercept $a_k(z)$ and offers the contracts
\begin{equation}\label{eq:second-optimal-menu}
\left(q_i,r_{i,k}(z)\right)
\triangleq
\left(c(u_i),\,a_k(z)+H_k(z)u_i\right),
\qquad i=1,\ldots,N_k.
\end{equation}
Higher admission probabilities require higher reports.
These requirements give the hypothetical boundary applicant with $x=a_k(z)$ and $\theta_k=H_k(z)$ utility $q_i-c(u_i)=0$ from every contract.
No positive mass of applicants at this boundary is required.

Consider an eligible applicant with latent quality $x$ and manipulation ability $0<\theta_k\leq H_k(z)$.
The applicant weakly prefers contract $i$ to rejection when its admission benefit covers its manipulation cost.
Using \eqref{eq:manipulation-cost}, this condition becomes
\[
\begin{aligned}
q_i-c\left(
\frac{(r_{i,k}(z)-x)_+}{\theta_k}
\right)\geq0
&\Longleftrightarrow
c\left(
\frac{(a_k(z)+H_k(z)u_i-x)_+}{\theta_k}
\right)\leq c(u_i)\\
&\Longleftrightarrow
(a_k(z)+H_k(z)u_i-x)_+\leq\theta_k u_i\\
&\Longleftrightarrow
x-a_k(z)\geq u_i\left(H_k(z)-\theta_k\right).
\end{aligned}
\]
The first equivalence substitutes $q_i=c(u_i)$ and $r_{i,k}(z)=a_k(z)+H_k(z)u_i$ from \eqref{eq:second-optimal-menu}.
The second uses the strict monotonicity of $c$ in Assumption~\ref{ass:manipulation-cost} and $\theta_k>0$ by Assumption~\ref{ass:second-order-regularity}.
For the last equivalence, $\theta_k u_i>0$ implies that the positive-part inequality is equivalent to $a_k(z)+H_k(z)u_i-x\leq\theta_k u_i$.

The quantity $H_k(z)-\theta_k$ measures how far the applicant's manipulation ability lies below the upper endpoint.
The condition therefore says that latent quality above the intercept must compensate for this ability shortfall.
It determines whether contract $i$ is preferable to rejection, and the applicant must also compare it with the other contracts.

To implement the menu, the employer assigns the highest admission probability whose report requirement is met, or zero if none is met. Thus, for $z\in\mathcal S_k$, the screening rule is
\begin{equation}\label{eq:second-common-schedule}
Q_k\left(z,r\right)
\triangleq
\max\left(
\{0\}\cup
\left\{q_i:r\geq r_{i,k}(z),\ i=1,\ldots,N_k\right\}
\right).
\end{equation}
Because the report requirements increase with $i$, a report satisfying $r_{i,k}(z)\leq r<r_{i+1,k}(z)$ receives probability $q_i$.
Reports below the first requirement receive probability zero, while reports meeting the final requirement receive probability one.
Applicants observe this rule and choose their reports to maximize admission probability minus manipulation cost.

Observable-only ranking already fills capacity: $\Pr\{m(Z)>\lambda\}=\rho$.
Admissions through the screening track must therefore displace an equal mass of direct admissions.
For a given screening design and reporting equilibrium, let $M_k\triangleq \E[A_k\mathbf1\{Z\in\mathcal S_k\}]$ denote the mass admitted through screening.
For designs with $M_k<\rho$, raise the direct-admission cutoff to $\lambda+\delta_k$, choosing $\delta_k\geq0$ so that
\begin{equation}\label{eq:second-capacity-calibration}
\Pr\{\lambda<m(Z)\leq\lambda+\delta_k\}=M_k.
\end{equation}
The maintained assumptions make the distribution of $m(Z)$ continuous, so such a cutoff exists.

Policy~\ref{alg:dual-track} summarizes the complete rule.
\begin{algorithm}[H]
\caption{Dual-track screening}\label{alg:dual-track}
\begin{algorithmic}[1]
\Require Eligible set $\mathcal S_k$; screening schedule $Q_k(z,r)$ from \eqref{eq:second-common-schedule}; and a cutoff adjustment $\delta_k\geq0$ satisfying
\eqref{eq:second-capacity-calibration}.
\Statex \textit{For an applicant with observable quality $z$:}
\If{$m(z)>\lambda+\delta_k$}
    \State Admit with probability one, without a report requirement.
\ElsIf{$z\in\mathcal S_k$}
    \State Upon receiving report $r$, admit with probability $Q_k(z,r)$.
\Else
    \State Reject.
\EndIf
\end{algorithmic}
\end{algorithm}

The cutoff adjustment removes exactly $M_k$ admissions from the bottom of the direct-admission track.
Because $\mathcal S_k$ lies below the original cutoff, screening admissions remain unchanged.
Total admission mass is therefore $(\rho-M_k)+M_k=\rho$.

\begin{remark}
This construction rejects applicants with $\lambda<m(z)\leq\lambda+\delta_k$ and excludes groups with unbounded conditional ability support from screening.
Theorem~\ref{thm:second-order} shows that suitable choices within this class attain second-order optimality.
Broader screening may improve welfare in finite markets, but any improvement over a second-order optimal sequence in this class is $o(s_k^{-1})$ as $k\to\infty$.
\end{remark}

\subsection{The second-order welfare gain}
We now choose the common screening menu and the group-specific intercepts in dual-track screening.

Consider first a single report requirement for certain admission in an eligible group $z$, with a fixed intercept $a$.
For an applicant with $x>a$, participation requires
\[
H_k(z)-\frac{x-a}{u_*}\leq\theta_k\leq H_k(z) \,.
\]
Because admission is certain whenever this condition holds, $\bar A_k(z,x)$ equals the probability of qualifying, averaged over manipulation ability.
Under \eqref{eq:scaled-ability}, the corresponding interval for reference ability has width $(x-a)/(s_k u_*)$.
Thus, for fixed $x$ and $a$,
\[
\begin{aligned}
\bar A_k\left(z,x\right)
&=\int_{\overline\theta(z)-(x-a)_+/(s_k u_*)}^{\overline\theta(z)}
f\left(t\mid z\right)\,dt\\
&=
\underbrace{\beta(z)}_{
\substack{\text{reference density}\\
          \text{at the upper endpoint}}}
\;\times\;
\underbrace{\frac{(x-a)_+}{s_k u_*}}_{
\substack{\text{length of the qualifying}\\
          \text{reference-ability interval}}}
+o(s_k^{-1})
\qquad\text{as }k\to\infty \,,
\end{aligned}
\]
where the approximation uses the endpoint-density limit defining $\beta(z)$.
The positive part accounts for applicants with $x\leq a$, whose average admission probability is zero.

For fixed $x>a$ and $a$, the participation condition keeps $H_k(z)-\theta_k$ bounded.
Since $H_k(z)\to\infty$, normalized effort $(H_k(z)u_*+a-x)/\theta_k$ approaches $u_*$, and manipulation cost approaches $c(u_*)=1$ as $k\to\infty$.
Each screening admission also displaces a direct admission whose expected match value approaches $\lambda$.
The leading expected welfare contribution for latent quality $x$ is therefore
\[
\underbrace{v\left(z,x\right)-\lambda-1}_{
\text{net value per admission}}
\;\times\;
\underbrace{\frac{\beta(z)}{s_k u_*}(x-a)_+}_{
\text{leading admission probability}}.
\]
Averaging this product over $X$ gives the leading contribution
$\beta(z)J_z(a)/(s_k u_*)$, where
\[
J_z(a)
\triangleq \E\left[
(v\left(z,X\right)-\lambda-1)(X-a)_+
\mid Z=z
\right].
\]
Raising the intercept concentrates screening on higher latent qualities, while lowering it reaches more applicants.
Since the factor $\beta(z)/(s_k u_*)$ does not depend on $a$, the intercept choice is governed by $J_z(a)$.
Allowing the employer to leave a group unscreened, define \begin{equation}\label{eq:second-marginal-value}
J_z^\star\triangleq
\displaystyle
\max\left\{0,\sup_{a\in\mathbb R}J_z(a)\right\}
\indc{m(z)<\lambda}\,.
\end{equation}
For $m(z)\geq\lambda$, the right-hand side is defined to be zero without evaluating the supremum.

The menu determines how far below the ability endpoint screening reaches and the admission probabilities offered 
there. 
For a finite partition
$P=(0=u_0<u_1<\cdots<u_N=u_*)$ of $[0,u_*]$,
where $u_*=c^{-1}(1)$ and $N\geq1$, define
\begin{equation}\label{eq:second-cost-coefficient}
K(P)
\triangleq
\sum_{i=1}^N
[c(u_i)-c(u_{i-1})]d_i,
\end{equation}
where
\begin{equation}\label{eq:second-frontier-slopes}
d_i\triangleq
\frac{c'(u_i)-c'(u_{i-1})}
{u_i c'(u_i)-u_{i-1}c'(u_{i-1})},
\qquad i=1,\ldots,N.
\end{equation}
We call $K(P)$ the screening yield of the partition~$P$.
The following proposition shows how $K(P)$ determines the leading welfare gain of a fixed dual-track design.
Its proof derives this coefficient from applicants' contract choices.

\begin{proposition}[Fixed-design performance of dual-track screening]
\label{prop:second-dual-track-performance}
Under Assumptions~\ref{ass:conditional-independence}, \ref{ass:type-monotonicity}, \ref{ass:manipulation-cost}, and \ref{ass:second-order-regularity}, suppose manipulation ability follows \eqref{eq:scaled-ability}, with $s_k\to\infty$.
Fix a finite partition $P=(0=u_0<u_1<\cdots<u_N=u_*)$ whose slopes in \eqref{eq:second-frontier-slopes} satisfy $d_1\geq\cdots\geq d_N$, a bounded measurable intercept function $a$, and a measurable eligible set
\[
\mathcal S\subseteq
\left\{
z:m(z)<\lambda,\ \beta(z)J_z^\star>0,\
J_z(a(z))\geq0
\right\}.
\]
Then there exists a feasible sequence of dual-track screening rules as in Policy~\ref{alg:dual-track}, using this fixed design for all sufficiently large $k$, with $M_k=O(s_k^{-1})$ and $\delta_k\to0$.
Let $W_k^{\mathrm{DT}}(P,a,\mathcal S)$ denote the social welfare of the resulting rule and reporting equilibrium in market $k$.
Then
\begin{equation}\label{eq:second-fixed-design-limits}
\lim_{k\to\infty}s_k
\left(W_k^{\mathrm{DT}}\left(P,a,\mathcal S\right)-V^0\right)
=
K(P)\E\!\left[
\beta(Z)J_Z(a(Z))\mathbf1\{Z\in\mathcal S\}
\right].
\end{equation}
\end{proposition}

The proposition separates the common menu choice, through $K(P)$, from the group-specific intercept and eligibility choices. Let
\[
K_c\triangleq\sup_{P\in\mathcal P}K(P),
\]
where $\mathcal P$ denotes the set of all finite partitions of $[0,u_*]$. Define
\begin{equation}\label{eq:second-welfare-coefficient}
D\triangleq K_c\E[\beta(Z)J_Z^\star].
\end{equation}
The following theorem shows that $D$ is the optimal welfare coefficient at scale $1/s_k$ among all feasible screening rules.
Recall that $W_k^\star$ denotes the optimal social welfare over all feasible screening rules and reporting equilibria in market $k$.
\begin{theorem}[Second-order optimality of dual-track screening]
\label{thm:second-order}
Under Assumptions~\ref{ass:conditional-independence}, \ref{ass:type-monotonicity}, \ref{ass:manipulation-cost}, and \ref{ass:second-order-regularity}, suppose manipulation ability follows \eqref{eq:scaled-ability}, with $s_k\to\infty$.
Then $D<\infty$, and optimal welfare satisfies
\begin{equation}\label{eq:second-value}
W_k^\star =V^0+\frac{D}{s_k}+o(s_k^{-1}).
\end{equation}
There exists a feasible sequence of dual-track screening rules as in Policy~\ref{alg:dual-track}, using in each market a finite menu of admission probabilities common across eligible groups.
Let $W_k^{\mathrm{DT}}$ denote the welfare of the resulting rule in market $k$.
This sequence satisfies $M_k=O(s_k^{-1})$, $\delta_k\to0$, and
\[
W_k^\star-W_k^{\mathrm{DT}}=o(s_k^{-1}).
\]
\end{theorem}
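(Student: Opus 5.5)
The proof splits into an achievability half, which gives $W_k^\star\geq V^0+D/s_k-o(s_k^{-1})$ and the dual-track sequence, and a converse half, which gives $W_k^\star\leq V^0+D/s_k+o(s_k^{-1})$. Finiteness of $D$ comes first. For $K_c$, the identity $c(u_i)-c(u_{i-1})\leq 1$ summed against $d_i$ is not enough by itself. Instead I will bound $d_i$ uniformly: $d_i$ is the ratio of increments of $c'$ and of $uc'(u)$, and $(uc')'=c'+uc''\geq c''$, so $d_i\leq 1/u_{i-1}$ is too weak near zero. Near zero I use $uc'(u)-u_{i-1}c'(u_{i-1})\geq u_{i-1}(c'(u_i)-c'(u_{i-1}))+u_i c'(u_i)-u_i c'(u_{i-1})$ together with $c'(0)=0$; this should give $d_i\leq 1/u_i+\ldots$, weighted by $c(u_i)-c(u_{i-1})\leq c'(u_i)(u_i-u_{i-1})$, so that $K(P)\leq \int_0^{u_*}c'(u)/u\,du$ up to constants. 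That integral is finite since $c\in C^2$ with $c'(0)=0$. For the other factor, $\beta\leq\overline V_f$ by the total variation bound, and $J_z^\star\leq \E[|v-\lambda-1|\,|X|+\ldots]$ is handled via the integrability assumptions, since $\sup_a J_z(a)$ is attained with $a$ near the support of $X$ and is bounded by $\E[(v-\lambda-1)_+(X-a)_+]$ with $a$ bounded below on $\{m<\lambda\}$. Some truncation argument will likely be needed here.

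For achievability I use Proposition~\ref{prop:second-dual-track-performance} together with a diagonal approximation. Fix $\epsilon>0$. Choose a partition $P$ with $K(P)\geq K_c-\epsilon$, after first checking that refining or monotonizing so that $d_1\geq\cdots\geq d_N$ loses nothing; a convex-hull (ironing) argument on the points $(uc'(u),c'(u))$ should show the sup is approached by partitions with decreasing slopes. Next choose a bounded measurable intercept $a(z)$ with $J_z(a(z))\geq J_z^\star-\epsilon$ on a set $\mathcal S$ where $\beta J^\star>0$, truncating to $\{|z|\leq M\}$ so that measurable selection and boundedness hold and dominated convergence controls the discarded part. The proposition then gives $\liminf s_k(W_k^{\mathrm{DT}}-V^0)\geq D-C\epsilon$. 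Letting $\epsilon\downarrow 0$ slowly along $k$ yields a single sequence with gap $o(s_k^{-1})$ relative to $D/s_k$, with $M_k=O(s_k^{-1})$ and $\delta_k\to0$ inherited from the proposition. Combined with the converse, this gives $W_k^\star-W_k^{\mathrm{DT}}=o(s_k^{-1})$.

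The converse is the main obstacle. Fix any feasible pair and use the decomposition \eqref{eq:collapse-decomposition}: $W_k-V^0=\E[(v-m)A_k]-L_k-\E[C_k]$. I would reorganize this as $\E[(v(Z,X)-\lambda)A_k-C_k]$ over the groups $m(z)<\lambda$, plus displacement terms, which are controlled since displaced mass is $O(1/s_k)$ and the displaced match value tends to $\lambda$. Within a fixed group I apply the envelope theorem along $\theta_k$. Utility $U(x,\theta_k)$ is convex and nondecreasing in $\theta_k$ with derivative $c'(e)e/\theta_k$, where $e$ is normalized effort. Since $U\leq 1$ and the allocation $A$ satisfies $A-C=U$, an applicant can have $A$ bounded away from zero only if effort is near $c^{-1}(A)$ and $\theta_k$ is within $O(1)$ of $H_k(z)$, once the requirement is scaled to deter low types. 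Contracts with vanishing probability contribute $o(1/s_k)$ by $c'(0)=0$ and $\Theta\geq\underline\theta$. Rescaling $t=H_k(z)-\theta_k$, so that the reference mass is $\beta(z)\,dt/s_k$ up to $o$ terms by bounded variation, the group's contribution becomes $\frac{\beta(z)}{s_k}\sup\int\!\!\int[(v(z,x)-\lambda)A-C]\,dt\,g(x\mid z)dx$ over limit allocations $A(x,t)$ satisfying the limiting incentive constraints. Here $H_k\to\infty$, so the type is effectively $x-tu$ on reports, which makes the problem one-dimensional in $x-a$ and $t$. The remaining task is to show that this limit problem has value at most $K_c J_z^\star$. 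My plan is to show that any limit menu is a family of contracts $(q,a+H u)$ whose per-$x$ integral over $t$ equals $\int (x-a)_+$ times a coefficient bounded by $K(P)$ for the induced effort schedule. The pieces are the following. Optimal choices are monotone in $t$ by Lemma~\ref{lem:collapse-comparison}. Integrating $q-c$ over $t$ then produces exactly the sum $\sum[c(u_i)-c(u_{i-1})]d_i$. The value splits as $(v-\lambda-1)$ times the qualifying mass plus rent, the rent being $(1-c)$ integrated, which is where the cost factor enters. Finally, optimizing over $a$ gives $J_z^\star$. Passing from finite rules to the limit will need a compactness and Fatou-type argument uniform in $z$, integrated using dominated convergence with the $\overline V_f$ and integrability bounds.
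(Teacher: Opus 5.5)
Your achievability half follows the paper's route: Proposition~\ref{prop:second-dual-track-performance}, partitions with nonincreasing slopes, measurable intercepts, and diagonalization. The converse, however, has a genuine gap at exactly the step you flag as the main obstacle.

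\textbf{The common-intercept claim fails.} You plan to show that every limit menu is a family $(q,a+Hu)$ with a single intercept $a$, so that $\int_0^\infty q(x,t)\,dt=K(P)(x-a)_+$, and then to obtain $J_z^\star$ by optimizing over $a$. For an arbitrary rule this is false. Distinct contracts carry distinct intercepts, and they may leave the boundary type a utility of order $1/H_k(z)$, which shifts the effective intercept further. The limiting map $x\mapsto\int_0^\infty q(x,t)\,dt$ is then a nondecreasing piecewise-linear function with several kinks, not a hinge. For example, take contracts $(c(u_1),a_1+H_k(z)u_1)$ and $(1,a_2+H_k(z)u_*)$ with suitable $a_1<a_2$. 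The slope is $c(u_1)/u_1$ on an initial range and reaches $K(\{0,u_1,u_*\})$ only later.

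\textbf{How the paper handles this.} It never identifies the menu; it bounds slopes instead. It writes $\bar A_k'(z,x)$ as a sum over switching abilities $b_j$ of $g_k(b_j\mid z)\,\Delta q_j\,\Delta c'/\Delta(uc')$. Along subsequences on which an excluded group's scaled payoff is bounded below, mean allocation is $O(s_k^{-1})$. Hence switches to allocations above $\eta$ sit near the upper endpoint, where the common utility vanishes. Their efforts then form a partition in $\mathcal P$, which gives $\limsup_k s_k[\bar A_k(z,x')-\bar A_k(z,x)]\leq\beta(z)K_c(x'-x)$. Switches below $\eta$ are negligible by $c'(0)=0$ and $\Theta\geq\underline\theta$. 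The envelope bound makes the scaled rent $s_k\bar B_k(z,\cdot)$ asymptotically flat, so the rent enters only through $(m(z)-\lambda)s_k\bar B_k(z,0)\leq0$. Consequently the scaled cost limit $\widehat C$ is nonnegative and $\beta(z)K_c$-Lipschitz. The intercept and $J_z^\star$ then come from a single-crossing argument, not from the menu. Set $b_z=v(z,\cdot)-\lambda-1$, $\xi=\inf\{x:b_z(x)>0\}$, $L=\beta(z)K_c$ and $a=\xi-\widehat C(\xi)/L$. Then $\widehat C\geq L(\cdot-a)_+$ below $\xi$ and $\widehat C\leq L(\cdot-a)_+$ above it, so $\E[b_z(X)\widehat C(X)\mid z]\leq LJ_z^\star$.

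\textbf{Two further missing ingredients in the converse.}
\begin{enumerate}
\item Groups with $m(z)>\lambda$ are not mere displacement, because a rule may screen inside them. Their capacity-adjusted payoff exceeds $m(z)-\lambda$ by at most $I_k(z)=\E[(v(z,X)-m(z))A_k\mid z]$, and you must show $I_k(z)=o(s_k^{-1})$. Lemma~\ref{lem:second-localization} does this as follows. If the scaled excess payoff is bounded below, then $\E[C_k\mid z]$ and $\E[1-A_k\mid z]$ are $O(s_k^{-1})$. This drives selected efforts uniformly to zero, which kills the scaled sensitivity.
\item The group-wise limsup bounds can be integrated over $z$ only with a rule-independent integrable dominator. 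The paper obtains $s_k[\mathcal L_{k,z}(Q_k)-(m(z)-\lambda)_+]\leq K_0\Sigma(z)$ from the uniform bound $0\leq\bar A_k(z,x')-\bar A_k(z,x)\leq K_0(x'-x)/s_k$ of Lemma~\ref{lem:second-sensitivity}. That bound comes from integrating the switch formula by parts against the ability density and using its total variation. The paper then applies reverse Fatou. Your ``compactness and Fatou-type argument'' does not supply such a dominator.
\end{enumerate}

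\textbf{Smaller repairs.}
\begin{itemize}
\item $J_z^\star$ is integrable without truncation. Since $m(z)-\lambda-1<0$, one has $J_z(a)\leq\operatorname{Cov}(v(z,X),(X-a)_+\mid z)\leq\Sigma(z)$.
\item $K_c\leq c'(u_*)$ follows directly from $\Delta c_i\leq\Delta(uc')_i$.
\item Truncating $z$ does not make near-maximizing intercepts bounded. Instead maximize over finite rational grids, and keep $J_z(a(z))\geq0$ on $\mathcal S$.
\item Along the diagonal, $M_k=O(s_k^{-1})$ needs a constant that is uniform across designs. This follows from $\E[(X-a(z))_+\mid z]\leq\Sigma(z)$ on $\mathcal S$.
\end{itemize}
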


The coefficient $D$ is positive exactly when $\Pr\{\beta(Z)J_Z^\star>0\}>0$, meaning that a positive mass of applicants belongs to observable groups with both positive endpoint density and positive screening value. 
If the conditional ability distribution has unbounded upper support in almost every observable group, then $\beta(z)=0$ in those groups by definition. Consequently, $D=0$, and $W_k^\star-V^0=o(s_k^{-1})$, so observable-only ranking is itself second-order optimal at scale $1/s_k$. The theorem does not rule out positive welfare gains of smaller order.

\prettyref{thm:second-order} allows a finite menu of admission probabilities in the screening track. A single report requirement for certain admission has screening yield $1/u_*$. It attains the maximal screening yield exactly when $K_c=1/u_*$. The following corollary characterizes this equality in terms of the manipulation-cost function.
\begin{corollary}[When one hurdle suffices]
\label{cor:second-single-ordeal}
Under the assumptions and asymptotic regime of Theorem~\ref{thm:second-order}, let $e_c(u)\triangleq uc'(u)/c(u)$ denote the elasticity of manipulation cost for $u>0$. Then
\begin{equation}\label{eq:second-single-condition}
K_c=1/u_*
\quad\Longleftrightarrow\quad
e_c(u)\geq e_c(u_*)
\quad\text{for every }0<u<u_*.
\end{equation}
\begin{itemize}
    \item If $K_c=1/u_*$, there exists a second-order optimal dual-track sequence with a single report requirement $r_k(z)=H_k(z)u_*+a_k(z)$ for certain admission in each eligible group, and direct admission above $\lambda+\delta_k$.

    \item If $\E[\beta(Z)J_Z^\star]>0$, the equality $K_c=1/u_*$ is also necessary for a feasible sequence using a single report requirement for certain admission in each screened group to be second-order optimal.
\end{itemize}

\end{corollary}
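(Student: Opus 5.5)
The plan is to reduce everything to an algebraic identity for the screening yield $K(P)$ of \eqref{eq:second-cost-coefficient}, and then read off the two bullets from Theorem~\ref{thm:second-order} and Proposition~\ref{prop:second-dual-track-performance}.

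\textbf{Lower bound.} For the trivial partition $P_1=(0,u_*)$ we have $d_1=c'(u_*)/(u_*c'(u_*))=1/u_*$, and $c(u_*)=1$. Hence $K(P_1)=1/u_*$, and so $K_c\geq 1/u_*$ always.

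\textbf{The mediant identity.} Write $A_i\triangleq u_ic'(u_i)$ and $B\triangleq A_N=u_*c'(u_*)$. Telescoping gives
\[
\frac1{u_*}=\frac{c'(u_*)}{B}=\sum_{i=1}^N\frac{A_i-A_{i-1}}{B}\,d_i ,
\]
because $d_i=\Delta c'(u_i)/\Delta A_i$. Therefore
\[
K(P)-\frac1{u_*}=\sum_{i=1}^N\Delta\varphi_i\,d_i,
\qquad
\varphi_i\triangleq c(u_i)-\frac{A_i}{B},\quad \varphi_0=\varphi_N=0 .
\]
Since $c(u_*)=1$, the sign of $\varphi_i$ is governed by the elasticity: $\varphi_i\leq0$ if and only if $e_c(u_i)\geq e_c(u_*)$.

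\textbf{Necessity of the elasticity condition.} Take a two-point partition $(0,u,u_*)$. Abel summation gives
\[
K-\frac1{u_*}=\varphi_1(d_1-d_2)=\Bigl(c(u)-\frac{uc'(u)}{B}\Bigr)\Bigl(\frac1u-d_2\Bigr).
\]
A direct computation shows $\frac1u-d_2$ has the sign of
\[
c'(u)B-uc'(u)c'(u_*)=c'(u)c'(u_*)(u_*-u)>0 ,
\]
using $c'>0$ on $(0,\infty)$ by strict convexity and $c'(0)=0$. So $K>1/u_*$ exactly when $e_c(u)<e_c(u_*)$. This proves the direction ``$\Rightarrow$'' of \eqref{eq:second-single-condition}.

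\textbf{Sufficiency: monotone slopes.} For a general $P$, Abel summation gives
\[
K(P)-\frac1{u_*}=\sum_{i=1}^{N-1}(-\varphi_i)(d_{i+1}-d_i).
\]
Under the elasticity condition every $-\varphi_i\geq0$. If the slopes are nonincreasing, the sum is therefore $\leq0$.

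\textbf{Main obstacle: non-monotone slopes.} I must handle partitions whose slopes $d_i$ fail to be nonincreasing. I would try an ironing argument. Suppose $d_i<d_{i+1}$ for two adjacent cells. Deleting $u_i$ replaces them by a single cell whose slope is the $\Delta A$-weighted mediant of $d_i$ and $d_{i+1}$. The two-cell version of the computation above expresses the change in $K$ as a product of two factors:
\begin{itemize}
\item a weight factor of the form $\Delta c$ minus a $\Delta A$-share, whose sign is controlled by local elasticity comparisons;
\item the factor $(d_i-d_{i+1})$.
\end{itemize}
I would try to show this change is nonnegative, so that repeated merging reaches a monotone partition without lowering $K$. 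If that local sign argument fails, the fallback is to bound $K(P)$ through a convex-envelope (concavification) argument on the curve $u\mapsto(A(u),c(u))$. In that picture, $K(P)-1/u_*$ is a pairing of the piecewise-linear interpolant of $\varphi$ against the discrete slope changes.

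\textbf{First bullet.} Assume $K_c=1/u_*$. Apply Proposition~\ref{prop:second-dual-track-performance} with the trivial partition, which trivially has monotone slopes. As in the construction behind Theorem~\ref{thm:second-order}, choose bounded intercepts $a$ and eligible sets $\mathcal S$ such that $\E[\beta(Z)J_Z(a(Z))\mathbf1\{Z\in\mathcal S\}]$ approaches $\E[\beta(Z)J_Z^\star]$. A diagonal choice of these designs along $k$ then gives
\[
s_k\bigl(W_k^{\mathrm{DT}}-V^0\bigr)\to\frac1{u_*}\E[\beta(Z)J_Z^\star]=D .
\]
Combined with \eqref{eq:second-value}, this is second-order optimality.

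\textbf{Second bullet.} I would show that any feasible single-hurdle sequence satisfies
\[
\limsup_k s_k(W_k-V^0)\leq\frac1{u_*}\E[\beta(Z)J_Z^\star].
\]
The route is the leading-order computation of $\bar A_k$ preceding \eqref{eq:second-marginal-value}, which applies to a certain-admission hurdle with arbitrary group-specific intercepts. The fixed-intercept expansion there has to be made uniform in the intercept; I would use the total-variation bound and the lower bound $\underline\theta$ in Assumption~\ref{ass:second-order-regularity} for this. Suppose now that $K_c>1/u_*$ and $\E[\beta(Z)J_Z^\star]>0$. Then this bound is strictly below $D$, so such a sequence misses the optimum $W_k^\star=V^0+D/s_k+o(s_k^{-1})$ by a gap of order $1/s_k$.
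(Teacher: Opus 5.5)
Your lower bound, the two-point computation for ``$\Rightarrow$'' (it reproduces \eqref{eq:second-two-contract-gain}), the Abel-summation bound for nonincreasing slopes, and the first bullet all match the paper.

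\textbf{The gap.} It is the step you flag yourself. $K_c$ is a supremum over \emph{all} finite partitions, so ``$\Leftarrow$'' is not proved until you handle partitions with an inversion $d_i<d_{i+1}$. Your plan for signing the merge also points the wrong way. Write $h(u)\triangleq uc'(u)$ (your $A$). The weight factor in the merge has the sign of $\Delta c_i/\Delta h_i-\Delta c_{i+1}/\Delta h_{i+1}$. This is a comparison of secant ratios on adjacent cells, which locally behaves like $c'/(c'+uc'')$. It is not determined by $e_c$, so the hypothesis $e_c(u)\ge e_c(u_*)$ will not sign it.

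What signs it is the inversion itself, together with convexity alone. Let $t=u_i>0$ be the junction point. Since $(h-c)'(u)=uc''(u)\ge0$, with $u\le t$ on the left cell and $u\ge t$ on the right cell, you get
\[
\frac{\Delta c_i}{\Delta h_i}\ \ge\ 1-td_i\ >\ 1-td_{i+1}\ \ge\ \frac{\Delta c_{i+1}}{\Delta h_{i+1}},
\qquad
K(\widetilde P)-K(P)=\frac{(d_{i+1}-d_i)(\Delta c_i\Delta h_{i+1}-\Delta c_{i+1}\Delta h_i)}{\Delta h_i+\Delta h_{i+1}}>0,
\]
where $\widetilde P$ deletes $u_i$. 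Each deletion lowers $N$, so repeated deletion terminates at a partition with nonincreasing slopes and at least the same yield. This is the paper's Proposition~\ref{prop:second-cost-frontier}, which uses no elasticity condition and which you can simply cite. With it, your monotone-slope bound gives $K(P)\le 1/u_*$ for every $P$. The concavification fallback is too unspecified to fill this step.

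\textbf{Second bullet.} Here you take a different route from the paper. The paper never recomputes $\bar A_k$ for hurdles. In a menu containing only rejection and certain admission, every switch occurs at zero utility. The efforts at a switch are therefore $0$ and $u_*$, and the ratio $\Delta c'/\Delta(uc')$ equals exactly $1/u_*$. Lemma~\ref{lem:second-localization} and the proof of Proposition~\ref{prop:second-conditional-bound} then go through with $K_c$ replaced by $1/u_*$. Finite-submenu approximation extends this to arbitrary binary rules.

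Your direct computation is more elementary and is exact for thresholds $H_k(z)u_*+a_k(z)$. However, it covers only single thresholds, and the uniformity you defer is real work:
\begin{itemize}
\item Intercepts may drift or diverge with $k$. Along any subsequence on which $s_k\E[A_k\mid Z=z]$ diverges in an excluded group, the scaled payoff tends to $-\infty$, since $I_k(z)\le K_0\Sigma(z)/s_k$ while $(m(z)-\lambda)\E[A_k\mid Z=z]$ dominates. Otherwise you can pass to converging intercepts and use continuity of $J_z$.
\item Groups with $m(z)>\lambda$ or $\overline\theta(z)=\infty$ must be bounded separately.
\item Aggregating the groupwise bounds needs an integrable dominator, such as $K_0\Sigma(Z)$ from \eqref{eq:second-global-dominator}, for reverse Fatou.
\end{itemize}
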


The condition requires cost elasticity to attain its
minimum on $(0,u_*]$ at $u_*$.
It holds for power costs satisfying the maintained smoothness assumptions, including quadratic costs, and for some nonhomogeneous costs such as $c(u)=2u^2/(1+u)$.
Strict convexity alone does not imply the condition.
If $e_c(u)<e_c(u_*)$ for some $u\in(0,u_*)$ and $\E[\beta(Z)J_Z^\star]>0$, allowing intermediate admission probabilities strictly increases the best attainable second-order welfare gain relative to a single report requirement for certain admission.

\subsubsection{Proof of Theorem~\ref{thm:second-order}}

Proposition~\ref{prop:second-dual-track-performance} calculates the welfare coefficient of each fixed design.
We now optimize over these designs and bound the performance of arbitrary screening rules. 
Throughout, we work under the assumptions and asymptotic regime of Theorem~\ref{thm:second-order}.

The following proposition supplies the menu properties needed to optimize the fixed-design coefficient.
\begin{proposition}[Menu optimization]
\label{prop:second-cost-frontier}
The maximal screening yield satisfies
\begin{equation}\label{eq:second-cost-bounds}
\frac{1}{u_*}\leq K_c\leq c'(u_*).
\end{equation}
The supremum defining $K_c$ can be approached using partitions whose slopes in \eqref{eq:second-frontier-slopes} are nonincreasing.
\end{proposition}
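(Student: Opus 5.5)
The three claims are best handled separately: the lower bound by the trivial partition, the upper bound by a term-by-term comparison that telescopes, and the slope property by a local surgery on partitions.

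\textbf{Lower bound.} Take the one-cell partition $P=(0,u_*)$. Since $c'(0)=0$ by Assumption~\ref{ass:second-order-regularity}, the slope in \eqref{eq:second-frontier-slopes} is $d_1=c'(u_*)/(u_*c'(u_*))=1/u_*$. Since $c(u_*)=1$ and $c(0)=0$, this gives $K(P)=1/u_*$, so $K_c\geq 1/u_*$. This partition trivially has nonincreasing slopes.

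\textbf{Upper bound.} Write $w(u)\triangleq uc'(u)$, and for a cell of $P$ write $\Delta_i c$, $\Delta_i c'$, $\Delta_i w$ for the increments over $[u_{i-1},u_i]$. Then each term of \eqref{eq:second-cost-coefficient} is $\Delta_i c\,\Delta_i c'/\Delta_i w$. The function $h(u)\triangleq uc'(u)-c(u)$ has $h'(u)=uc''(u)\geq0$ because $c$ is $C^2$ and convex, so $h$ is nondecreasing. Hence $\Delta_i w-\Delta_i c=\Delta_i h\geq 0$, that is, $\Delta_i c\leq\Delta_i w$. In addition, $\Delta_i w>0$ and $\Delta_i c'\geq0$. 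Therefore each term is at most $\Delta_i c'$. Summing telescopes to
\[
K(P)\leq c'(u_*)-c'(0)=c'(u_*).
\]
Taking the supremum over $P$ gives \eqref{eq:second-cost-bounds}.

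\textbf{Nonincreasing slopes.} Given an arbitrary partition $P$, I will produce a partition $\tilde P$ with nonincreasing slopes and $K(\tilde P)\geq K(P)-\varepsilon$, by removing adjacent violations $d_i<d_{i+1}$ one at a time.

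The basic operation is merging cells $i$ and $i+1$ by deleting $u_i$. The merged slope is the $\Delta w$-weighted average of $d_i$ and $d_{i+1}$, with $\alpha\triangleq\Delta_iw/(\Delta_iw+\Delta_{i+1}w)$. A direct computation gives
\[
K(\text{merged})-K(P)=\frac{(d_{i+1}-d_i)\left(\Delta_{i+1}w\,\Delta_ic-\Delta_iw\,\Delta_{i+1}c\right)}{\Delta_iw+\Delta_{i+1}w}.
\]
So merging a violating pair weakly helps whenever $\Delta_ic/\Delta_iw\geq\Delta_{i+1}c/\Delta_{i+1}w$. Each merge reduces the number of cells, so the procedure terminates. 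This is a pool-adjacent-violators step.

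\textbf{Main obstacle.} The difficulty is the opposite ratio ordering, where merging hurts. In that case I would try moving the interior point $u_i$ instead of deleting it. I would treat $K$ as a continuous function of $u_i\in(u_{i-1},u_{i+1})$ and use a first-order condition at an interior maximizer to show that $d_i\geq d_{i+1}$ there. If the maximizer is at an endpoint, the cell degenerates, which amounts to a merge. I expect that verifying this first-order condition relation, using the $C^2$ structure of $c$, $w$ and $h$, is where the real work lies.

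As a fallback, I would pass to the continuum limit of fine partitions. There $K$ becomes a functional of the slope profile $d(u)=c''/(c'+uc'')$ along $[0,u_*]$, and the nonincreasing rearrangement attaining the supremum can be read off directly. That rearrangement would then be discretized with an $\varepsilon$ error.
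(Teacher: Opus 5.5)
Your lower and upper bounds are correct and follow the paper's argument. The lower bound uses the single-cell partition. The upper bound uses $\Delta c\le\Delta w$ because $(uc'-c)'=uc''\ge0$, followed by telescoping.

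The gap is in the slope claim. You correctly derive the merge identity
\[
K(\text{merged})-K(P)=\frac{(d_{i+1}-d_i)\,\bigl(\Delta_{i+1}w\,\Delta_ic-\Delta_iw\,\Delta_{i+1}c\bigr)}{\Delta_iw+\Delta_{i+1}w}.
\]
You then call the case $d_i<d_{i+1}$ with $\Delta_ic/\Delta_iw<\Delta_{i+1}c/\Delta_{i+1}w$ the main obstacle and leave it unresolved. Your remedies are not carried out, so as written the second sentence of the proposition is not proved. Neither remedy is straightforward. Moving $u_i$ changes $d_i$ and $d_{i+1}$ and can create new violations with the neighboring cells, so termination is unclear. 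In the fine-mesh limit the slope profile $c''/(c'+uc'')$ is fixed by $c$, so there is nothing to rearrange.

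The missing idea is that the bad case never occurs. Let $t=u_i>0$ be the shared endpoint. On each cell, $\Delta w-\Delta c=\int uc''(u)\,du$ and $\Delta c'=\int c''(u)\,du$. Since $c''\ge0$, $u\le t$ on the left cell, and $u\ge t$ on the right cell, you get $\Delta_iw-\Delta_ic\le t\,\Delta_ic'$ and $\Delta_{i+1}w-\Delta_{i+1}c\ge t\,\Delta_{i+1}c'$. Equivalently,
\[
\frac{\Delta_ic}{\Delta_iw}\ge 1-t\,d_i,
\qquad
\frac{\Delta_{i+1}c}{\Delta_{i+1}w}\le 1-t\,d_{i+1}.
\]
A violation $d_i<d_{i+1}$ therefore forces $\Delta_ic/\Delta_iw>\Delta_{i+1}c/\Delta_{i+1}w$, and your own identity then shows that the merge strictly increases $K$. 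Pooling adjacent violators thus always works and stops after at most $N-1$ merges. It yields a partition with nonincreasing slopes and $K(\tilde P)\ge K(P)$ exactly, with no $\varepsilon$ and no need for either fallback. This is precisely the paper's proof.
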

We next bound the welfare gain within each observable group.
Write
\[
\Sigma(z)\triangleq \operatorname{Cov}\left(v\left(z,X\right),X\mid Z=z\right).
\]
Monotonicity of $v(z,\cdot)$ implies $\Sigma(z)\geq0$, and the second-moment condition in Assumption~\ref{ass:conditional-independence} gives $\E[\Sigma(Z)]<\infty$.
Define the capacity-adjusted payoff of group $z$ by
\[
\mathcal L_{k,z}(Q_k)
\triangleq \E\left[(v\left(z,X\right)-\lambda)A_k-C_k\mid Z=z\right].
\]
Observable-only ranking gives this group payoff
$(m(z)-\lambda)_+$.

\begin{proposition}[Conditional welfare bound]
\label{prop:second-conditional-bound}
Suppose Assumptions~\ref{ass:conditional-independence}, \ref{ass:type-monotonicity}, \ref{ass:manipulation-cost}, and \ref{ass:second-order-regularity} hold.
Under \eqref{eq:scaled-ability}, with $s_k\to\infty$ as $k\to\infty$, every sequence of admissible screening rules and reporting equilibria satisfies, for almost every $z$,
\begin{equation}\label{eq:second-global-dominator}
s_k\left[
\mathcal L_{k,z}(Q_k)-(m(z)-\lambda)_+
\right]
\leq K_0\Sigma(z)
\qquad\text{for every }k,
\end{equation}
and
\begin{equation}\label{eq:second-conditional-bound}
\limsup_{k\to\infty}
s_k\left[
\mathcal L_{k,z}(Q_k)-(m(z)-\lambda)_+
\right]
\leq\beta(z)K_cJ_z^\star,
\end{equation}
where $K_0$ is a finite constant depending only on $c$, $\underline\theta$, and $\overline V_f$, uniformly over markets, observable groups, and screening rules.
Moreover, $0\leq J_z^\star\leq\Sigma(z)$ almost surely and $D<\infty$.
\end{proposition}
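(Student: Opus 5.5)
Fix a group $z$ with $m(z)<\lambda$ first; groups with $m(z)>\lambda$ will be handled afterwards by a symmetric argument. Since rejection is available at zero cost, the baseline group payoff is $(m(z)-\lambda)_+$. I will write $\mathcal L_{k,z}(Q_k)-(m(z)-\lambda)_+$ as $\E[(v(z,X)-\lambda)A_k-C_k\mid Z=z]$ when $m(z)<\lambda$. When $m(z)>\lambda$, I will write it as $-\E[(v(z,X)-\lambda)(1-A_k)+C_k\mid Z=z]$.

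The key structural input is the applicant's indirect utility $U(x,\theta_k)=\sup_r\{Q_k(z,r)-c((r-x)_+/\theta_k)\}$. It is nondecreasing in $x$, it is nondecreasing in $\theta_k$ (Lemma~\ref{lem:collapse-comparison}), and it is bounded in $[0,1]$. By an envelope argument in $\theta_k$, $\partial_{\theta_k}U=c'(u)u/\theta_k$, where $u$ is the normalized effort. This yields the identity $A_k-C_k=U$. Together with $U\ge 0$ this gives $C_k\ge A_k-U$, so costs are large unless admission is ``bought'' through accumulated rents.

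The plan for the uniform dominator \eqref{eq:second-global-dominator} is to bound admission differences across latent types with the same $z$. The comparison uses Lemma~\ref{lem:collapse-comparison} together with the bounded-variation density. Integrating over $\theta$ on the reference scale, I expect $|\bar A_k(z,x)-\bar A_k(z,x')|\le K_0|x-x'|/s_k$. The reasoning is that an ability gap of order $s_k$ is needed to offset a latent gap $|x-x'|$ without prohibitive cost, and $\operatorname{TV}(f)\le\overline V_f$ together with $\Theta\ge\underline\theta$ controls the mass in windows of width $|x-x'|/(s_k\underline\theta)$. This Lipschitz-in-$x$ bound with constant $O(1/s_k)$ is what needs proof. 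I would derive it by the pairwise-IC argument in the proof of Lemma~\ref{lem:collapse-comparison}, applied at ability $\theta$ versus $\theta+h$, and then average $h$ against $f$. Plugging it into the symmetrization identity \eqref{eq:collapse-information}$(b)$ gives $\tfrac12\E[\Delta v\,|\Delta\bar A|]\le (K_0/2s_k)\E[\Delta v\,\Delta X]=K_0\Sigma(z)/s_k$. Dropping $-\E[C_k]\le0$ and the nonpositive term $(m(z)-\lambda)\E[A_k]$ (for $m(z)<\lambda$; the case $m(z)>\lambda$ is similar) then yields \eqref{eq:second-global-dominator}. The same symmetrization applied to $(X-a)_+$ versus $X$ gives $J_z(a)\le\Sigma(z)$-type bounds, hence $J_z^\star\le\Sigma(z)$. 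Finally, $D\le K_c\sup\beta\,\E\Sigma<\infty$, using $\beta\le\overline V_f$ and Proposition~\ref{prop:second-cost-frontier}.

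For the sharp limit \eqref{eq:second-conditional-bound}, the step I expect to be the main obstacle, I would rescale. Set $t=s_k(\overline\theta(z)-\theta)\ge0$ (nonpositive contributions arise when $\overline\theta=\infty$), and define rescaled allocations $\alpha_k(x,t)=s_k$-weighted allocations near the endpoint. Admissions with $t$ unbounded carry vanishing net value. The reason is that low-ability applicants who are admitted must incur cost close to the full benefit, since $c'(0)=0$ and $\theta\ge\underline\theta$ make small-probability contracts negligible. Hence $(v-\lambda)A_k-C_k$ has net density close to $(v-\lambda-1)$ per unit of allocation, plus a rent term. Writing the net welfare per group as $\int\!\!\int[(v(z,x)-\lambda)A-C]\,g\,f$, I would use $A-C=U$ and the envelope formula to express $\E[C]$ through $U$. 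In the limit, $U$ becomes a function $\phi(x-\text{const}\cdot t)$-like profile that is nondecreasing in $x$ and nonincreasing in $t$, with slope bounded via the cost function. The problem then reduces to a one-dimensional calculus of variations in which the density is frozen at $\beta(z)$. The menu part of the optimum is exactly $K(P)$, i.e.\ the cost-frontier optimization of Proposition~\ref{prop:second-cost-frontier} with supremum $K_c$, and the $x$-part is $\sup_a J_z(a)$. To justify replacing $f$ by $\beta(z)$ on windows of width $O(1/s_k)$, I would use the BV assumption, and I would pass to the $\limsup$ by Fatou, dominated via \eqref{eq:second-global-dominator}.
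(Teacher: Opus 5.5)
Your proposal has a genuine gap at its key step, and the sharp bound is asserted rather than derived.

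\textbf{The uniform Lipschitz bound.} The decisive step of your dominator is $|\bar A_k(z,x)-\bar A_k(z,x')|\le K_0|x-x'|/s_k$, and the route you propose cannot produce it.

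Apply Lemma~\ref{lem:collapse-comparison} to types $(z,x',b)$ and $(z,x,b+h)$, with $\delta\triangleq x'-x>0$. It gives $A_k(z,x',b)\le A_k(z,x,b+h)+c(\delta/h)$. Averaging over market ability $b$, whose density $g_k$ satisfies $\operatorname{TV}(g_k)\le\overline V_f/s_k$, yields only $\bar A_k(z,x')-\bar A_k(z,x)\le h\overline V_f/s_k+c(\delta/h)$.

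\begin{itemize}
\item For this to be at most $K\delta/s_k$ you need $\delta/h\ge\overline V_f/K$. But then $c(\delta/h)\ge c(\overline V_f/K)>0$, which does not vanish.
\item Optimizing over $h$ gives order $(\delta/s_k)^{2/3}$ for quadratic cost.
\item Drawing the ability gap from an independent copy, as in Lemma~\ref{lem:collapse-averaging}, gives only order $(\delta/s_k)^{1/2}$.
\end{itemize}

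Either way, $s_k$ times the bound diverges. So \eqref{eq:second-global-dominator}, and everything you build on it, is not obtained. A single pairwise comparison charges the whole cost $c(\delta/h)$ at once, whereas the variation of $\bar A_k$ is spread over many cheap switches.

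The missing ingredient is an infinitesimal version of the comparison. For a finite menu, the selected contract changes at ability thresholds $b_j(x)$. These move at rate $\Delta_jc'/\Delta_j[uc']$, where $\Delta_j$ is the change between the two efforts at switch $j$. The allocation jumps satisfy $\Delta_jc\le\Delta_j[uc']$ because $(uc'(u)-c(u))'=uc''(u)\ge0$. Hence $\partial_x\bar A_k\le\sum_jg_k(b_j)\Delta_jc'$.

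The selected marginal cost $c'(u(b))$ rises only at switches and falls at rate $uc''(u)/b$ between them. Integrating by parts against $g_k$ therefore bounds the sum by $c'(u_*)\operatorname{TV}(g_k)+\max_{0\le u\le u_*}uc''(u)\int g_k(b)b^{-1}\,db=O(1/s_k)$. This is exactly where the variation bound and $\Theta\ge\underline\theta$ enter. You then still need a finite-submenu approximation to reach arbitrary admissible rules.

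\textbf{The sharp bound.} For \eqref{eq:second-conditional-bound}, your rescaling sketch states the conclusion: a frozen-density variational problem that separates into $K(P)$ and $\sup_aJ_z(a)$. For arbitrary menus with rents, that separation is the whole difficulty, and two steps are missing.

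First, consider subsequences whose scaled group payoff is bounded below; otherwise the bound is trivial. Along these you must show that rents are negligible at scale $1/s_k$, i.e.\ $s_k[\bar B_k(z,x')-\bar B_k(z,x)]\to0$. You must also show that the scaled allocation slope is at most $\beta(z)K_c$.
\begin{itemize}
\item Switches with allocation at most $\eta$ contribute an amount that vanishes as $\eta\downarrow0$. This follows from the variation bound on the restricted menu together with $c'(0)=0$.
\item Monotonicity in ability pushes switches with larger allocation to reference abilities converging to $\overline\theta(z)$. There utility is at most $B_k(x,H_k(z))\to0$.
\item Their efforts are therefore close to $c^{-1}(q_{j-1})$ and $c^{-1}(q_j)$. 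Their slopes are then approximately those of an effort partition, uniformly in menu size, so their yield is at most $K_c$.
\end{itemize}

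Second, the limit of $s_k\bar C_k(z,\cdot)$ is a nondecreasing $L$-Lipschitz profile $\widehat C$, with $L=\beta(z)K_c$. You must reduce it to the hinge family. Let $\xi$ be the point where the nondecreasing function $v(z,\cdot)-\lambda-1$ turns positive, and set $a=\xi-\widehat C(\xi)/L$. Then $\widehat C(x)\ge L(x-a)_+$ for $x\le\xi$ and $\widehat C(x)\le L(x-a)_+$ for $x\ge\xi$. Hence $\E[(v(z,X)-\lambda-1)\widehat C(X)\mid Z=z]\le LJ_z(a)$.

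\textbf{Further gaps.}
\begin{itemize}
\item The case $m(z)>\lambda$ is not a mirror image. You must show that selected efforts tend to zero uniformly, so that the scaled allocation slope vanishes.
\item For $\overline\theta(z)=\infty$, your claim that contributions are nonpositive requires a tail-variation argument.
\end{itemize}

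Your treatment of $J_z^\star\le\Sigma(z)$ and $D<\infty$ is fine.
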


\begin{proof}[Proof of Theorem~\ref{thm:second-order}]
For any feasible sequence with welfare $W_k$, exact capacity gives
\[
W_k-V^0
=\E\left[\mathcal L_{k,Z}(Q_k)-(m(Z)-\lambda)_+\right].
\]
Proposition~\ref{prop:second-conditional-bound} bounds the scaled integrand above by the integrable function $K_0\Sigma(Z)$. Reverse Fatou's inequality and the proposition's conditional bound therefore give
\[
\limsup_{k\to\infty}s_k(W_k-V^0)\leq D.
\]
Choosing feasible rules with $W_k\geq W_k^\star-s_k^{-2}$ yields
\[
\limsup_{k\to\infty}s_k(W_k^\star-V^0)\leq D.
\]

For the lower bound, we construct dual-track rules whose scaled welfare gain converges to $D$.
Because the design will change across markets, we first bound screening admissions by a multiple of $1/s_k$ that does not depend on the chosen design.
Fix a design $(P,a,\mathcal S)$ satisfying the conditions of Proposition~\ref{prop:second-dual-track-performance}, and set $L_c\triangleq\max_{0\leq u\leq u_*}c''(u)<\infty$.
For $z\in\mathcal S$, suppose an applicant selects a positive contract $j$. The participation condition gives $x-a(z)\geq u_j[H_k(z)-\theta_k]$.
For each $i=1,\ldots,j$, we have $u_i\leq u_j$, so
\[
0\leq H_k(z)-\theta_k
\leq\frac{(x-a(z))_+}{u_i}.
\]
Her selected admission probability satisfies
\[
c(u_j)=\sum_{i=1}^{j}[c(u_i)-c(u_{i-1})],
\]
since $c(u_0)=0$. Thus the increment $c(u_i)-c(u_{i-1})$ contributes exactly when a contract $j\geq i$ is selected.
Rejected applicants contribute zero.

Conditional independence and the ability-density bound $\overline V_f/s_k$ therefore give
\[
\begin{aligned}
s_k\bar A_k\left(z,x\right)
&\leq\overline V_f(x-a(z))_+
\sum_{i=1}^N
\frac{c(u_i)-c(u_{i-1})}{u_i}\\
&\leq\overline V_f L_cu_*(x-a(z))_+ \,,
\end{aligned}
\]
where the first inequality bounds the probability of selecting a contract $j\geq i$ by the density bound times the width $(x-a(z))_+/u_i$, and the second inequality uses $c'(0)=0$ and $c'(t)\leq L_ct$, which imply $c(u_i)-c(u_{i-1})\leq L_cu_i(u_i-u_{i-1})$.

Also, $J_z(a(z))\geq0$ on $\mathcal S$ implies
\[
\begin{aligned}
[\lambda-m(z)+1]\E\left[(X-a(z))_+\mid Z=z\right]
&\leq\operatorname{Cov}\left(
v\left(z,X\right),(X-a(z))_+\mid Z=z
\right)\\
&\leq\Sigma(z) \,,
\end{aligned}
\]
where the last inequality follows from the covariance identity, since the positive-part function is nondecreasing and $1$-Lipschitz.
Because $\lambda-m(z)+1>1$ on $\mathcal S$, averaging the allocation bound over latent quality and eligible observable groups gives
\[
s_kM_k
\leq\overline V_f L_cu_*
\E[\Sigma(Z)\mathbf1\{Z\in\mathcal S\}]
\leq\overline V_f L_cu_*\E[\Sigma(Z)]
\triangleq C_M<\infty.
\]
The constant $C_M$ does not depend on $P$, $a$, $\mathcal S$, or $k$.

Now let $n=1,2,\ldots$ index a sequence of fixed designs. By Proposition~\ref{prop:second-cost-frontier},
choose partitions $P_n$ with nonincreasing slopes and $K(P_n)\to K_c$.
For intercepts, use nested finite rational grids whose union is dense in $\mathbb R$. On the $n$th grid, maximize $J_z(a)$ and let $\mathcal S_n$ consist of groups with $m(z)<\lambda$, $\beta(z)>0$, and a positive maximum.
Fixed tie-breaking gives a measurable maximizing intercept $a_n(z)$ on $\mathcal S_n$. Set $a_n(z)=0$ elsewhere.
Then $\mathcal S_n$ is measurable and $a_n$ is bounded.
By continuity of $J_z(a)$ in $a$, the weighted selected payoffs increase pointwise to $\beta(z)J_z^\star$. 
Monotone convergence and $K(P_n)\to K_c$ give
\[
D_n\triangleq K(P_n)\E\left[
\beta(Z)J_Z(a_n(Z))\mathbf1\{Z\in\mathcal S_n\}
\right]\longrightarrow D.
\]

For each fixed design,
Proposition~\ref{prop:second-dual-track-performance} gives feasible rules with welfare
$W_k^{\mathrm{DT}}(P_n,a_n,\mathcal S_n)$ satisfying
\[
s_k\left(W_k^{\mathrm{DT}}\left(P_n,a_n,\mathcal S_n\right)-V^0\right)
\longrightarrow D_n.
\]
Choose strictly increasing market indices $k_1<k_2<\cdots$ such that design $n$ is feasible and, for every $k\geq k_n$,
\[
\left|s_k\left(W_k^{\mathrm{DT}}\left(P_n,a_n,\mathcal S_n\right)-V^0\right)-D_n\right|
\leq\frac1n.
\]
Use design $n$ in markets $k_n\leq k<k_{n+1}$ and observable-only ranking for $k<k_1$.
Let $W_k^{\mathrm{DT}}$ and $M_k$ denote the resulting welfare and screening admission mass. Then
\[
s_k(W_k^{\mathrm{DT}}-V^0)\longrightarrow D,
\qquad s_kM_k\leq C_M.
\]
In particular, $M_k=O(s_k^{-1})$. 
The capacity equation and the continuity and strict increase of the distribution function of $m(Z)$ near $\lambda$ give $\delta_k\to0$. Each rule has the required menu and eligibility properties by construction.
Since $W_k^{\mathrm{DT}}\leq W_k^\star$, the upper bound and the constructed sequence prove \eqref{eq:second-value} and $W_k^\star-W_k^{\mathrm{DT}}=o(s_k^{-1})$.
\end{proof}

\subsection{An illustration of dual-track screening}

This example illustrates two implications of the preceding results. Screening can have positive value at scale $1/s_k$ even when the population ability distribution is unbounded, because the relevant endpoints are conditional on observable quality. 
Holding the type distribution fixed, the manipulation cost function determines whether a single report requirement suffices or intermediate admission probabilities improve welfare.

Let $X\sim\operatorname{Unif}[0,1]$ and $Z\sim N(0,1)$ be independent, and let $\Phi$ denote the standard normal cumulative distribution function.
Set $v(z,x)=\Phi(z)+4x$ and $\rho=1/2$.
Conditional on $Z=z<0$, let reference ability be uniform on $[1,2]$. 
Conditional on $Z=z\geq0$, let $\Theta=1+E$, where $E$ is exponentially distributed with mean one. In both groups, ability is independent of $X$ conditional on $Z$.
The population ability distribution is unbounded, while the conditional densities have a common finite total-variation bound.

Since $\Phi(Z)$ is uniform on $[0,1]$, we have $m(z)=\Phi(z)+2$, $\lambda=5/2$, and $V^0=11/8$.
For excluded groups $z<0$, $\beta(z)=1$ and maximization in \eqref{eq:second-marginal-value} gives
\[
a_z=\frac34-\frac{\Phi(z)}2,
\qquad
J_z^\star=\frac13\left(\frac14+\frac{\Phi(z)}2\right)^3,
\qquad
\E[\beta(Z)J_Z^\star]=\frac5{512}.
\]

For $c(u)=u^2$, we have $u_*=1$ and $K_c=1$.
The screening track therefore offers the single hurdle
\[
r_k(z)=\ell_k+2s_k+\frac34-\frac{\Phi(z)}2
\qquad(z<0),
\]
while the direct-admission track uses the slightly raised cutoff for $m(z)$. 
Optimal welfare satisfies
\[
W_k^\star=\frac{11}{8}+\frac5{512s_k}+o(s_k^{-1})
\qquad\text{as }k\to\infty.
\]

For $c(u)=(u^2+u^4)/2$, a common menu still suffices, but a single hurdle does not.
The partition $P=\{0,1/2,1\}$ offers admission probabilities $5/32$ and $1$ and has
\[
K(P)
=\frac{5/32}{1/2}
+\frac{27}{32}\frac{3-3/4}{3-3/8}
=\frac{29}{28}>1.
\]
Hence $K_c\geq29/28>1$, confirming the gain from intermediate admission probabilities.
The two report requirements are $(\ell_k+2s_k)/2+a_z$ and $\ell_k+2s_k+a_z$.
The first contract attracts additional applicants who do not choose the full-admission contract.
Both requirements use the same intercept $a_z$.

\subsection{A numerical illustration of the mechanisms}\label{subsec:numerical-preview}

A simple example illustrates the value and limitations of using manipulable reports in hiring.
Suppose an applicant's observable quality $Z$ and latent quality $X$ are independent and uniformly distributed on $[0,1]$.
The employer values an applicant at $Z+4X$ and hires half the applicants.
Applicants can report a higher latent quality at a cost that depends on privately known manipulation ability, which is independent of both productive qualities.
Specifically, inflating a report from $x$ to $r>x$ costs $(r-x)^2/(2\theta^2)$, where $\theta$ is uniform on $[0.5,1]$; reporting $r\leq x$ is costless.
Figures~\ref{fig:numerical-preview}(a)--(d) compare four allocations.\footnote{Although the employer does not observe $X$, we use it as the horizontal axis to show which underlying applicant types are selected.
Each point shows hiring probability at a given pair $(x,z)$, averaged over manipulation ability.}

\begin{figure}[!htbp]
\centering
\refstepcounter{figure}
\label{fig:numerical-preview}

\begin{minipage}[t]{0.48\textwidth}
\centering
\includegraphics[width=\linewidth]{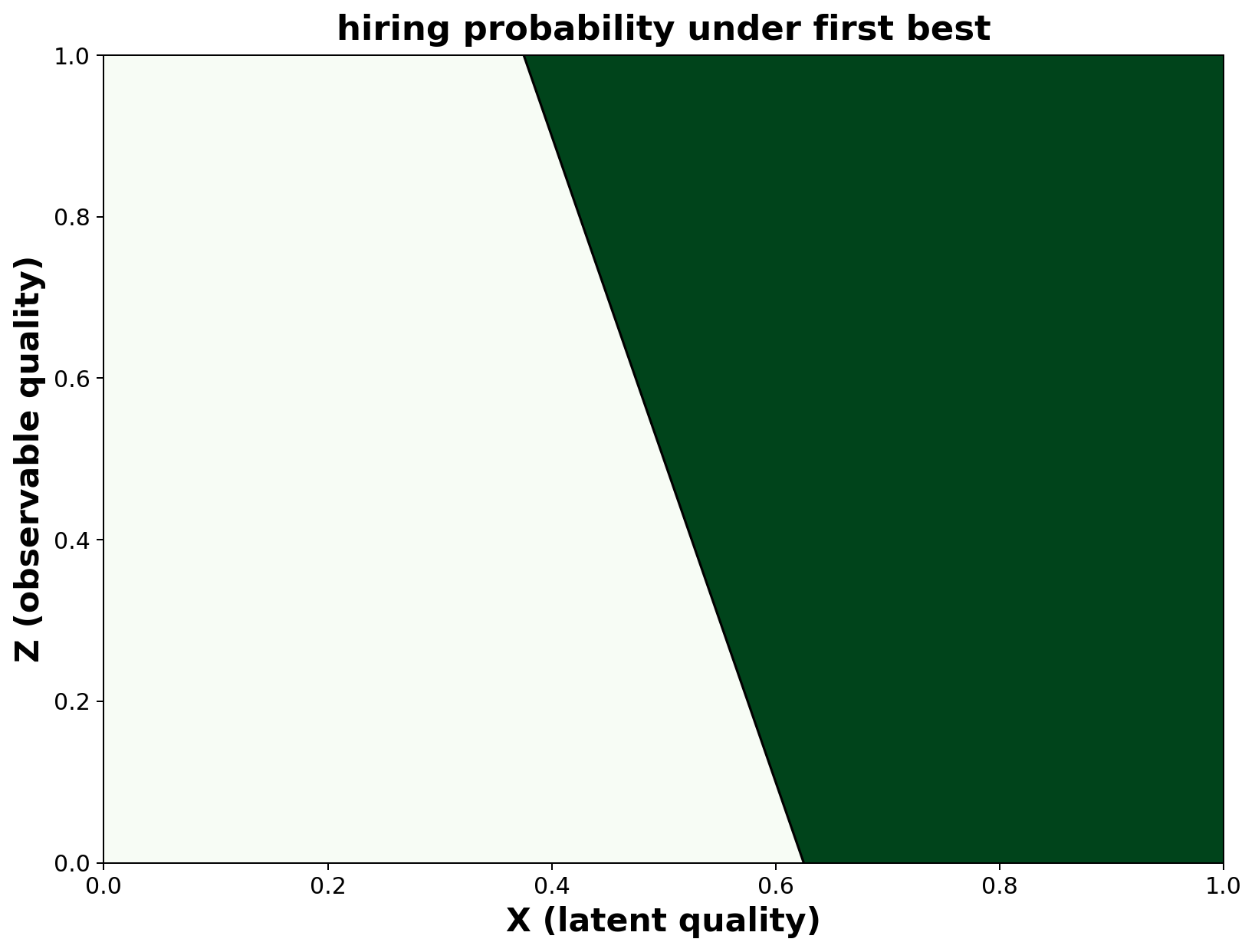}
\par\smallskip
{\footnotesize \textbf{Figure~\thefigure(a).} First best: the ideal allocation if both qualities were observable, which is unattainable because latent quality is private information.\par}
\end{minipage}\hfill
\begin{minipage}[t]{0.48\textwidth}
\centering
\includegraphics[width=\linewidth]{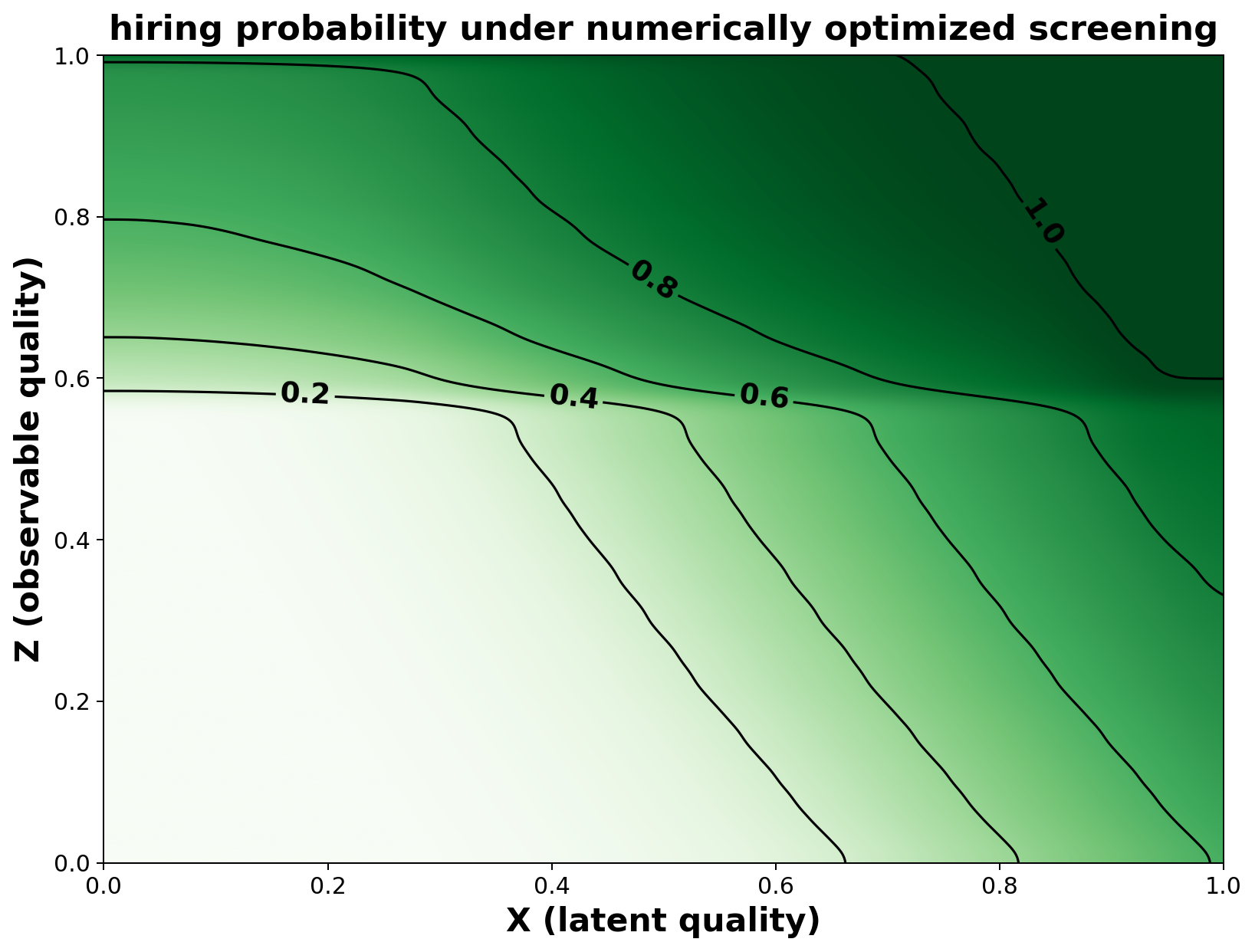}
\par\smallskip
{\footnotesize \textbf{Figure~\thefigure(b).} Numerically optimized screening (approximate second best): relative to first best, some applicants with high latent quality but low observable quality are rejected, while others with low latent quality but high observable quality are admitted.\par}
\end{minipage}

\medskip

\begin{minipage}[t]{0.48\textwidth}
\centering
\includegraphics[width=\linewidth]{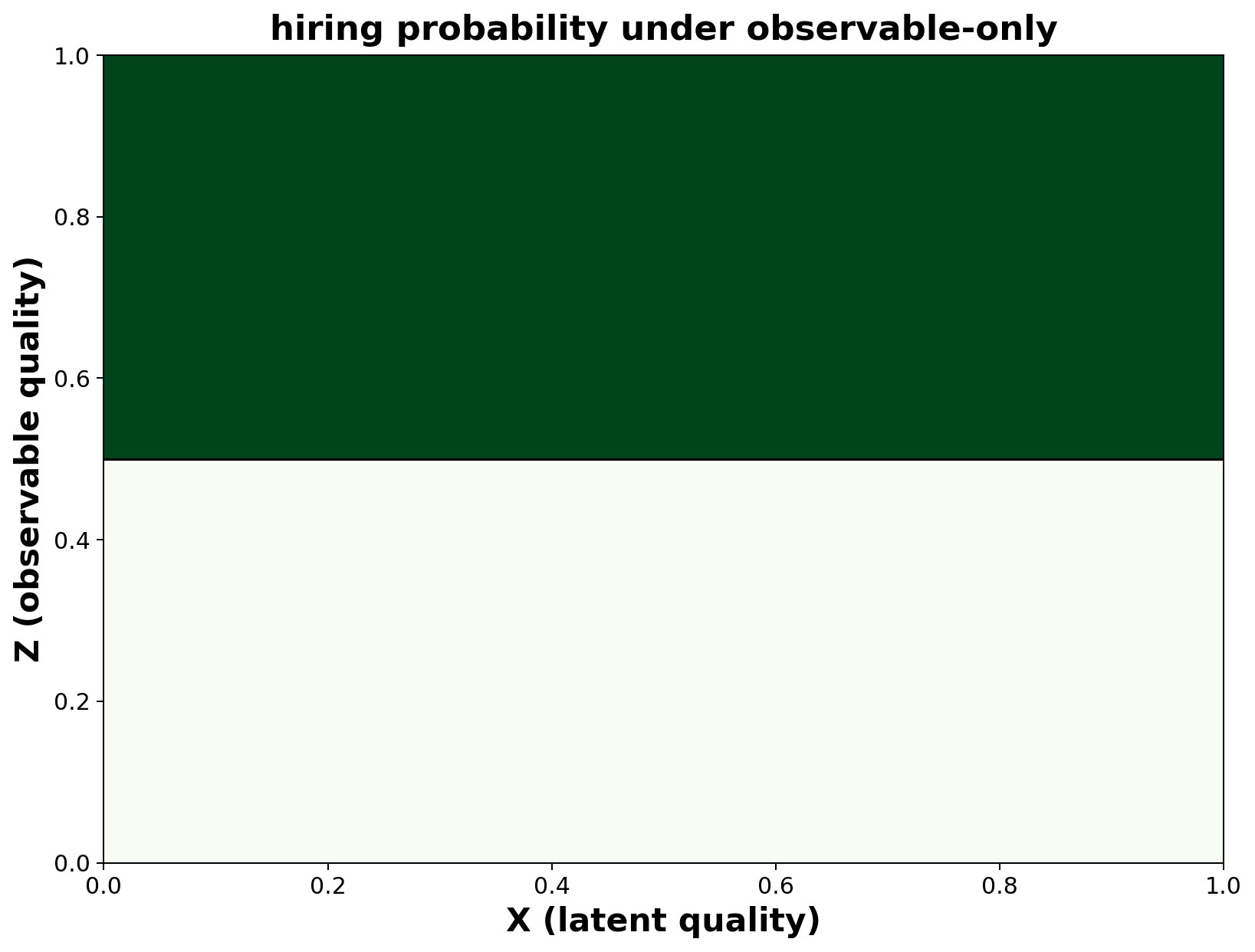}
\par\smallskip
{\footnotesize \textbf{Figure~\thefigure(c).} Observable-only: hiring depends entirely on observable quality, regardless of latent quality.\par}
\end{minipage}\hfill
\begin{minipage}[t]{0.48\textwidth}
\centering
\includegraphics[width=\linewidth]{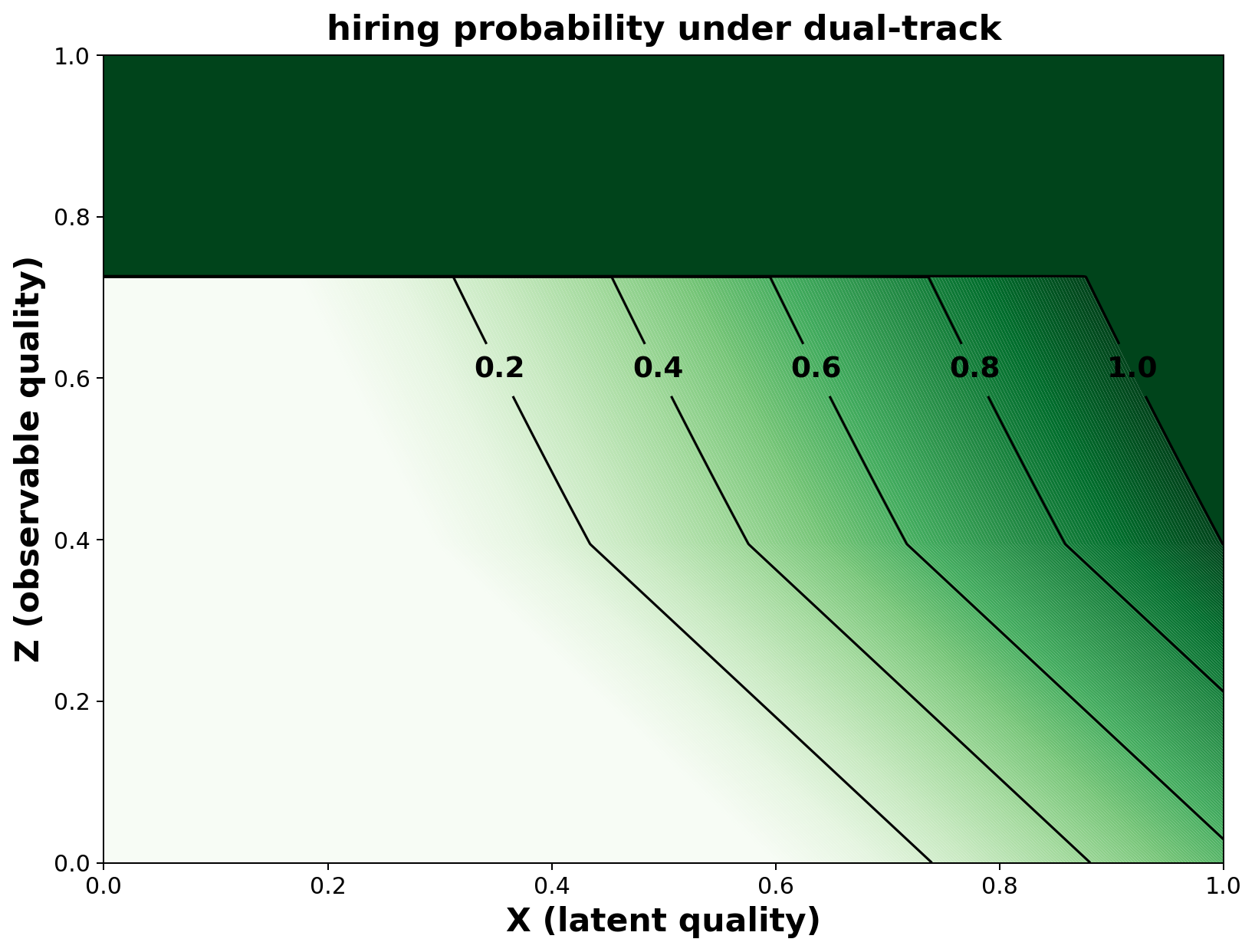}
\par\smallskip
{\footnotesize \textbf{Figure~\thefigure(d).} Dual-track: applicants with high observable quality are admitted directly, while others can qualify through a report threshold that identifies some applicants with high latent but low observable quality. The resulting hiring pattern resembles that under numerically optimized screening.
\par}
\end{minipage}

\end{figure}

The numerically optimized screening rule (upper right) partially restores selection on latent quality.\footnote{We approximate unrestricted screening by optimizing over finite menus of report requirements and possibly randomized admission probabilities, using nonlinear optimization.}
The dual-track rule (lower right) offers a much simpler alternative, directly admitting applicants with high observable quality and offering others a report-based route to admission.\footnote{For this finite-market illustration, we optimize over a broader class of dual-track mechanisms than Policy~1. The report-based track extends up to the raised direct-admission cutoff. This additional flexibility can improve welfare in finite markets, although the more restrictive mechanism in Policy~1 already achieves second-order optimality.}
These patterns illustrate how reports can improve targeting compared with observable-only hiring.

\section{The External Cost of Private Screening}\label{sec:externality}

The preceding sections evaluate screening by match value net of
applicants' manipulation costs. A private employer maximizes match
value alone. Theorem~\ref{thm:strategic-feature-collapse} gives $W_k^\star\to V^0$ and $U_k\to V^0$, but applicants' costs under privately optimal rules need not vanish. We characterize the resulting welfare loss and construct an economy with a sequence of exactly privately optimal rules whose costs approach applicants' full aggregate admission benefit.

Recall that $\pi_k$ is the employer's expected match value and
$W_k=\pi_k-\E[C_k]$ is normalized social welfare under a specified
feasible rule and reporting equilibrium.
\begin{proposition}[Private screening and its external cost]
\label{prop:private-externality}
Under the assumptions of Theorem~\ref{thm:strategic-feature-collapse},
every feasible rule and reporting equilibrium satisfies
$\E[C_k]\leq\rho$ and
\begin{equation}\label{eq:externality-upper}
0\leq W_k^\star-W_k\leq\rho+\left(U_k-\pi_k\right).
\end{equation}
For every feasible sequence with $U_k-\pi_k\to0$,
\begin{equation}\label{eq:private-collapse-necessity}
W_k^\star-W_k=\E[C_k]+o(1)\qquad\text{as }k\to\infty.
\end{equation}
In particular, $\limsup_{k\to\infty}(W_k^\star-W_k)\leq\rho$.
\end{proposition}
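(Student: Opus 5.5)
The cost bound comes first. Since reporting $r=x$ is costless and guarantees payoff $Q(z,x)\geq0$, optimality of $R_{Q_k}$ gives
\[
A_k-C_k\geq Q_k(Z,X)\geq 0
\]
pointwise. Hence $C_k\leq A_k$, and taking expectations with exact capacity \eqref{eq:capacity} yields $\E[C_k]\leq\E[A_k]=\rho$. This sharpens the bound $C_k\leq1$ used in the proof of Proposition~\ref{prop:collapse-rate}.

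For \eqref{eq:externality-upper}, the lower bound $W_k^\star-W_k\geq0$ is immediate from the definition of $W_k^\star$ as a supremum over feasible pairs. For the upper bound I will write
\[
W_k^\star-W_k=(W_k^\star-U_k)+(U_k-\pi_k)+\E[C_k].
\]
Proposition~\ref{prop:collapse-rate} gives $W_k^\star\leq U_k$, so the first term is nonpositive. The cost bound gives $\E[C_k]\leq\rho$, and together these yield $W_k^\star-W_k\leq\rho+(U_k-\pi_k)$.

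For \eqref{eq:private-collapse-necessity}, I will use the same decomposition and show that $W_k^\star-U_k\to0$. Proposition~\ref{prop:collapse-rate} gives
\[
0\leq U_k-W_k^\star\leq (V^0+\varepsilon_k)-V^0=\varepsilon_k,
\]
and $\varepsilon_k\to0$; equivalently, Theorem~\ref{thm:strategic-feature-collapse} gives both $W_k^\star\to V^0$ and $U_k\to V^0$. Together with the hypothesis $U_k-\pi_k\to0$, this gives
\[
W_k^\star-W_k-\E[C_k]=(W_k^\star-U_k)+(U_k-\pi_k)\to0,
\]
which is \eqref{eq:private-collapse-necessity}. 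The final claim then follows from $\E[C_k]\leq\rho$ by taking $\limsup$.

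I expect no real obstacle. The only point needing care is the pointwise inequality $C_k\leq A_k$, which should be stated explicitly rather than only the weaker $C_k\leq1$, since the proposition's bound by $\rho$ depends on it. The rest is bookkeeping from the results already established in Section~\ref{sec:collapse}.
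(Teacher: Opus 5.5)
Your proof is correct and follows the paper's argument essentially step for step. Both use the costless truthful report to get $C_k\leq A_k$, then the identity $W_k^\star-W_k=(W_k^\star-\pi_k)+\E[C_k]$ together with $W_k^\star\leq U_k$, and then the convergence $U_k-W_k^\star\to0$, which you obtain from the sandwich $0\leq U_k-W_k^\star\leq\varepsilon_k$ and the paper obtains equivalently from $W_k^\star,U_k\to V^0$.
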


\begin{proof}
Reporting $r=x$ is costless and gives nonnegative utility.
Optimal reporting therefore implies $C_k\leq A_k$ almost surely,
so exact capacity gives $\E[C_k]\leq\E[A_k]=\rho$.
The identity
$W_k^\star-W_k=W_k^\star-\pi_k+\E[C_k]$ and
$W_k^\star\leq U_k$ from Proposition~\ref{prop:collapse-rate}
give \eqref{eq:externality-upper}.
Theorem~\ref{thm:strategic-feature-collapse} gives
$W_k^\star\to V^0$ and $U_k\to V^0$.
If $U_k-\pi_k\to0$, then $\pi_k\to V^0$ as well. 
Substituting these limits into the same identity proves
\eqref{eq:private-collapse-necessity}.
\end{proof}

Each admission is worth one to its recipient, so $\rho$ is the applicants' aggregate benefit from admission before manipulation costs. 
The next theorem attains this bound in the limit with privately optimal rules that select the same applicants as costless observable-only ranking.
\begin{theorem}[The private externality can be maximal]
\label{thm:maximal-private-externality}
For every $\rho\in(0,1)$, there exist primitives satisfying Assumptions~\ref{ass:conditional-independence}, \ref{ass:type-monotonicity}, \ref{ass:manipulation-cost}, and \ref{ass:second-order-regularity}, with unbounded conditional ability support and $v(z,\cdot)$ strictly increasing for almost every $z$, such that $U_k=W_k^\star=V^0$ in every market under \eqref{eq:scaled-ability}.
Along any such path with $s_k\to\infty$ and $\ell_k/s_k\to\infty$, there exists a feasible sequence of screening rules and reporting equilibria that is exactly privately optimal, implements $A_k=a^0(Z)$ almost surely in every market, and satisfies
\begin{equation}\label{eq:maximal-private-externality}
\E[C_k]\longrightarrow\rho,
\qquad
W_k^\star-W_k\longrightarrow\rho
\qquad\text{as }k\to\infty.
\end{equation}
Costless observable-only ranking is also exactly privately optimal in every market.
\end{theorem}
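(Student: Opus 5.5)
The plan is to build an economy in which reports carry no selection value at all for either objective, and then exhibit a privately optimal rule that burns almost all admission surplus. Take $Z\sim N(0,1)$ and $X$ independent of $Z$ with a fixed distribution, say $X\sim\operatorname{Unif}[0,1]$. Let $v(z,x)=\Phi(z)+\epsilon x$ for small $\epsilon>0$, so $v(z,\cdot)$ is strictly increasing, and set $c(u)=u^2$. Take reference ability $\Theta=1+E$ with $E$ exponential, independent of $(Z,X)$. Then $\underline\theta=1$, the total variation is bounded, the support is unbounded, and $\beta\equiv0$. This gives the required assumptions.

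\emph{First step: show $U_k=V^0$ exactly in every market, not just asymptotically.} Proposition~\ref{prop:collapse-rate} only gives $U_k\le V^0+\varepsilon_k$, so I need an exact argument. I would use the decomposition $\pi_k-V^0=\E[(v-m)A_k]-L_k$ and try to make $v-m$ so small relative to $m(Z)-\lambda$ that $L_k$ always dominates. The natural first attempt is to give latent quality value only in groups far above the cutoff, and use a $z$-dependent weight $\epsilon(z)$ with $\epsilon(z)=0$ near and below $z_0$. Even so, inside a group with $m(z)>\lambda$ strictly, admitting everyone already attains the maximum, since $v\ge\lambda$ there when $\epsilon(z)x\ge0$. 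Below the cutoff, $v(z,x)=m(z)\le\lambda$, so displacing any above-cutoff admission cannot help. More carefully, the argument should show that for every feasible $A_k$ the value $\E[vA_k]\le V^0$. I would do this by a pointwise exchange: a unit of capacity moved from a group with $v\ge\lambda$ to a group with $v\le\lambda$ never helps.

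This requires $v(z,x)\le\lambda$ whenever $z<z_0$ and $v(z,x)\ge\lambda$ whenever $z>z_0$, uniformly in $x$. Choosing $v(z,x)=\Phi(z)+\epsilon(z)x$ with $\epsilon(z)=0$ for $z\le z_0$ and $\epsilon(z)>0$ for $z>z_0$ gives this. It fails strict monotonicity on the negligible-but-positive-mass set $z\le z_0$, so instead I would take $\epsilon(z)=(\Phi(z)-\Phi(z_0))^2$ with $X$ supported in $[-1,0]$ for $z<z_0$. Because $X$ may depend on $Z$, I am free to choose $g(x\mid z)$ so that $v\le\lambda$ below the cutoff and $v\ge\lambda$ above. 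With these choices $U_k=W_k^\star=V^0$, where $W_k^\star\ge V^0$ is immediate.

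\emph{Second step: construct the costly rule.} For $z>z_0$, admit only on the report $r\ge T_k(z)$, with $T_k$ chosen so that every applicant prefers meeting it. The hardest applicant has ability $\theta_k=\ell_k+s_k$, the lowest in market $k$, and the largest shortfall $T_k-x_{\min}(z)$. Set $T_k=\underline x(z)+\ell_k+s_k$, so the minimal normalized effort is at most one and every applicant weakly participates, with admission at indifference. For $z\le z_0$, reject. Then $A_k=a^0(Z)$ almost surely, feasibility holds, and by the first step $\pi_k=V^0=U_k$.

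The cost paid by type $(x,\theta_k)$ is $\bigl((T_k-x)/\theta_k\bigr)^2$ with $\theta_k=\ell_k+s_k\Theta$. Since $\ell_k/s_k\to\infty$, the ratio $(\ell_k+s_k)/(\ell_k+s_k\Theta)\to1$ for each fixed $\Theta$, and the bounded $x$-terms vanish relative to $\ell_k\to\infty$. By dominated convergence the cost tends to one for every admitted applicant, since the cost is bounded by one. Hence $\E[C_k]\to\rho$, and $W_k^\star-W_k=V^0-(V^0-\E[C_k])\to\rho$.

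The main obstacle is the exactness of $U_k=V^0$. This needs bounded latent support per group, so that $T_k$ is finite and the sign of $v-\lambda$ is uniform on each side of $z_0$. It also needs the delicate choice of $g$ and $\epsilon$ to keep strict monotonicity and the moment conditions.
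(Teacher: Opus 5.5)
Your proposal is correct and follows essentially the paper's route. You choose primitives in which $v(z,x)-\lambda$ has the sign of $m(z)-\lambda$ for every $x$, so that $\E[v(Z,X)A]-V^0=\E[(v(Z,X)-\lambda)(A-a^0(Z))]\le 0$ gives $U_k=W_k^\star=V^0$ exactly, and then impose a hurdle on the above-cutoff groups that every applicant can meet at cost at most one but whose cost tends to one when $\ell_k/s_k\to\infty$. The paper's primitives are a bit simpler: it keeps $X\sim\operatorname{Unif}[0,1]$ independent of $Z$, uses the latent term $|\Phi(z)-(1-\rho)|(x-\tfrac12)$ so that $m(z)=\Phi(z)$ exactly, and uses the hurdle $u_*\ell_k$ for any admissible cost, whereas you still need to pin down $g(x\mid z)$ (e.g.\ uniform on $[-1,0]$ below $z_0$ and on $[0,1]$ above) and check that the resulting $m$ is continuous and strictly increasing, which it is.
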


\begin{proof}
Fix $\rho\in(0,1)$ and take mutually independent $Z\sim N(0,1)$, $X\sim\operatorname{Unif}[0,1]$, and $E\sim\operatorname{Exp}(1)$. 
Set $\Theta\triangleq1+E$.
Let $\Phi$ denote the standard normal cumulative distribution function and set
\[
v\left(z,x\right)
\triangleq \Phi(z)+\left|\Phi(z)-\left(1-\rho\right)\right|
                 \left(x-\tfrac12\right).
\]
Then $m(z)=\Phi(z)$, and uniformity of $\Phi(Z)$ on $[0,1]$ gives $\lambda=1-\rho$. 
The density of $Z$ is continuous and positive on $\mathbb R$, and $v(z,\cdot)$ is strictly increasing except at the single cutoff group. 
The reference ability density has lower endpoint one, unbounded upper support, and total variation two after extension by zero. 
Any cost function satisfying Assumption~\ref{ass:manipulation-cost} and $c'(0)=0$  can be used if it is twice continuously differentiable on $[0,\infty)$.
Boundedness of $X$ and $v(Z,X)$ gives the required second moments.
These primitives satisfy all the stated assumptions.

For $x\in[0,1]$, the latent-quality term has absolute value at most $|m(z)-\lambda|/2$. Thus $v(z,x)>\lambda$ when $m(z)>\lambda$, and $v(z,x)<\lambda$ when $m(z)<\lambda$.
For any allocation $A\in[0,1]$ with $\E[A]=\rho$, exact capacity gives
\[
\E\left[v\left(Z,X\right)A\right]-V^0
=
\E\left[
\left(v\left(Z,X\right)-\lambda\right)
\left(A-a^0(Z)\right)
\right]\leq0.
\]
Observable-only ranking attains this match value at zero cost, so $U_k=W_k^\star=V^0$ in every market.

Recall that $u_*=c^{-1}(1)$. Impose the report requirement
$r\geq u_*\ell_k$ on the applicants selected by observable-only ranking:
\[
Q_k\left(z,r\right)
\triangleq
\mathbf1\left\{m(z)>\lambda,\ r\geq u_*\ell_k\right\}.
\]
For every type, a measurable best response is obtained by comparing the utilities from reports $x$ and $\max\{x,u_*\ell_k\}$, with a fixed tie-breaking order.
For applicants in the population, $X\geq0$ and $\Theta_k=\ell_k+s_k\Theta>\ell_k$ almost surely.
Meeting the requirement therefore takes normalized effort less than $u_*$ and costs less than $c(u_*)=1$ in every market.
Thus $A_k=a^0(Z)$ almost surely, and the rule is feasible and exactly privately optimal. Its manipulation cost is
\[
C_k=a^0(Z)c\left(
\frac{\left(u_*\ell_k-X\right)_+}{\ell_k+s_k\Theta}
\right).
\]

Along the stated path, the common improvement grows faster than the dispersion scale: $\ell_k\to\infty$ and $s_k/\ell_k\to0$.
For all sufficiently large $k$ and every fixed $x$ and $\theta>0$,
\[
\frac{\left(u_*\ell_k-x\right)_+}{\ell_k+s_k\theta}
=
\frac{\left(u_*-x/\ell_k\right)_+}
     {1+\left(s_k/\ell_k\right)\theta}
\longrightarrow u_*.
\]
Continuity of $c$ and $0\leq C_k\leq a^0(Z)$ therefore give $\E[C_k]\to\E[a^0(Z)]=\rho$ by dominated convergence.
Since $\pi_k=V^0=W_k^\star$ in every market, $W_k^\star-W_k=\E[C_k]$, proving \eqref{eq:maximal-private-externality}.
\end{proof}

The costly rule and costless observable-only ranking select the same applicants and generate the same match value. The employer is therefore indifferent between them.
Choosing the costless implementation eliminates the welfare loss in this example without changing who is admitted.

\bibliographystyle{chicago}
\bibliography{references}

\clearpage
\appendix

\section{Notation table}

\begingroup
\small
\renewcommand{\arraystretch}{1.15}
\begin{longtable}{@{}>{\raggedright\arraybackslash}p{0.23\textwidth}
                  >{\raggedright\arraybackslash}p{0.72\textwidth}@{}}
\caption{Main notation}\label{tab:notation}\\
\toprule
Symbol & Meaning \\
\midrule
$Z$, $X$
& Observable quality and latent quality; $z$ and $x$ denote their realized values.\\
$\Theta$, $\Theta_k$
& Manipulation ability and ability in market $k$.
  Under the location--scale specification, $\Theta$ is reference ability.\\
$\ell_k$, $s_k$
& Location and scale parameters:
  $(\Theta_k\mid Z=z)\stackrel{d}{=}
  (\ell_k+s_k\Theta\mid Z=z)$.\\
$v(z,x)$
& Match value of an applicant with qualities $(z,x)$.\\
$m(z)$
& Expected match value given observable quality:
  $m(z)=\mathbb{E}[v(z,X)\mid Z=z]$.\\
$c(u)$; $u_*$
& Cost of normalized manipulation effort $u$;
  $u_*=c^{-1}(1)$ is the effort whose cost equals one.\\
$\rho$
& Admission capacity, with $\mathbb{E}[A_k]=\rho$.\\
$Q_k(z,r)$
& Admission probability assigned in market $k$ to an applicant with observable quality $z$ and report $r$.\\
$A_k$, $C_k$
& Admission probability and manipulation cost induced by  a specified rule and reporting equilibrium in market $k$.\\
$\bar A_k(z,x)$
& Admission probability averaged over manipulation ability,
  holding observable and latent quality fixed at $(z,x)$.\\
$\lambda$
& Expected-match-value cutoff for observable-only ranking:
  $\Pr\{m(Z)>\lambda\}=\rho$.\\
$a^0(z)$
& Observable-only allocation:
  $a^0(z)=\mathbf{1}\{m(z)>\lambda\}$.\\
$V^0$
& Welfare of observable-only ranking:
  $V^0=\mathbb{E}[m(Z)a^0(Z)]$.\\
$\pi_k$
& Employer's expected match value under a specified rule
  and equilibrium: $\pi_k=\mathbb{E}[v(Z,X)A_k]$.\\
$U_k$
& Supremum of the employer's expected match value over
  feasible rules and reporting equilibria in market $k$.\\
$W_k$
& Normalized social welfare under a specified rule and
  equilibrium: $W_k=\pi_k-\mathbb{E}[C_k]$.\\
$W_k^\star$
& Supremum of normalized social welfare over feasible
  rules and reporting equilibria in market $k$.\\
$W_k^{\mathrm{DT}}$
& Normalized social welfare of the specified dual-track
  rule and reporting equilibrium in market $k$.\\
$\overline\theta(z)$; $H_k(z)$
& Conditional essential upper endpoint of reference ability,
  and the market endpoint
  $H_k(z)=\ell_k+s_k\overline\theta(z)$.\\
$\beta(z)$
& Reference density at a finite conditional upper endpoint;
  zero for an infinite endpoint.\\
$J_z(a)$; $J_z^\star$
& $J_z(a)=\mathbb E[(v(z,X)-\lambda-1)(X-a)_+\mid Z=z]$;
  $J_z^\star$ is its nonnegative optimal value in excluded
  groups and zero elsewhere.\\
$K(P)$; $K_c$
& Screening yield of an effort partition $P$;
  $K_c=\sup_{P\in\mathcal P}K(P)$.\\
$D$
& Optimal welfare coefficient at scale $1/s_k$:
  $D=K_c\mathbb E[\beta(Z)J_Z^\star]$.\\
\bottomrule
\end{longtable}
\endgroup

\addtocontents{toc}{\protect\setcounter{tocdepth}{1}}

\section{Omitted proofs in \prettyref{sec:second-order}}

\subsection{Proof of Proposition~\ref{prop:second-dual-track-performance}}

We first calculate the allocation generated by a fixed menu.

Fix a group $z$ with a finite ability endpoint, an intercept $a$, and a finite partition
\[
P=\left(0=u_0<u_1<\cdots<u_N=u_*\right),
\]
all independently of $k$. 
Contract $i$ offers admission probability $c(u_i)$ at report requirement $a+H_k(z)u_i$.

For fixed latent quality $x$ and endpoint distance $t\geq0$, set $\theta_k=H_k(z)-t$.
For all sufficiently large $k$, this ability is positive and each positive contract's report requirement exceeds $x$.
Its normalized manipulation is then
\[
\frac{H_k(z)u_i+a-x}{\theta_k}
=
u_i+\frac{u_i t-(x-a)}{\theta_k}.
\]
Taylor expansion and $H_k(z)/\theta_k\to1$ give
\begin{equation}\label{eq:second-local-contract}
H_k(z)\left[
c(u_i)-
c\left(\frac{H_k(z)u_i+a-x}{\theta_k}\right)
\right]
\longrightarrow
c'(u_i)(x-a-u_i t)
\qquad\text{as }k\to\infty.
\end{equation}
Multiplying all utilities by $H_k(z)>0$ preserves their ranking.
For this fixed finite menu, the limiting payoffs and the zero outside option therefore identify the selected contract for all sufficiently large $k$ whenever the limiting maximizer is unique.

The next lemma uses these limiting choices to calculate the screening yield defined in
\eqref{eq:second-cost-coefficient}.

\begin{lemma}[Allocation under a fixed menu]
\label{lem:second-menu-allocation}
Fix an intercept $a$ and a finite partition $P$ whose slopes in \eqref{eq:second-frontier-slopes} satisfy $d_1\geq\cdots\geq d_N$.
Let $q(x,t)$ denote the admission probability selected from the limiting utilities in \eqref{eq:second-local-contract}, with rejection available.
Then
\begin{equation}\label{eq:second-integrated-tangent}
\int_0^\infty q\left(x,t\right)\, \mathrm{d}t
=
K(P)(x-a)_+.
\end{equation}
\end{lemma}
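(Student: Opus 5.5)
The plan is to compute the limiting choice $q(x,t)$ explicitly as a function of $t$ and then integrate. Write $y\triangleq x-a$, and let $\ell_i(t)\triangleq c'(u_i)(y-u_it)$ denote the limiting utility of contract $i$ from \eqref{eq:second-local-contract}, with $\ell_0\equiv0$ for rejection. This convention is consistent because $u_0=0$ and $c'(0)=0$ by Assumption~\ref{ass:second-order-regularity}. Ties between these affine functions of $t$ occur at finitely many $t$, so they do not affect the integral and the tie-breaking rule is irrelevant.

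\emph{Case $y\le0$.} For every $t>0$ and $i\ge1$ we have $\ell_i(t)=c'(u_i)(y-u_it)<0=\ell_0(t)$, since $c'(u_i)>0$ by strict convexity and $c'(0)=0$. Rejection is therefore chosen for almost every $t$, so $q(x,t)=0$ almost everywhere and both sides of \eqref{eq:second-integrated-tangent} vanish.

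\emph{Case $y>0$.} The key step is a pairwise comparison of adjacent contracts. Solving $\ell_i(t)=\ell_{i-1}(t)$ gives the crossing point
\[
t=y\,\frac{c'(u_i)-c'(u_{i-1})}{u_ic'(u_i)-u_{i-1}c'(u_{i-1})}=y\,d_i .
\]
The denominator is positive because $u\mapsto uc'(u)$ is strictly increasing. Moreover, $\ell_i$ has the strictly steeper negative slope, since $u_ic'(u_i)>u_{i-1}c'(u_{i-1})$. Hence $\ell_i(t)>\ell_{i-1}(t)$ if and only if $t<yd_i$. For $i=1$ this reads $d_1=1/u_1$, so contract $1$ beats rejection exactly when $t<y/u_1$. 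Set $d_{N+1}\triangleq0$ and $d_0\triangleq+\infty$.

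I will show that for $t\in(yd_{i+1},yd_i)$ the limiting maximizer is contract $i$, where $i=0$ corresponds to $t>yd_1$. This follows by chaining the adjacent comparisons using $d_1\ge\cdots\ge d_N$.
\begin{itemize}
\item For $j\le i$, we have $t<yd_i\le yd_j$, so $\ell_j>\ell_{j-1}$. Thus $\ell_i>\ell_{i-1}>\cdots>\ell_0$.
\item For $j\ge i$, we have $t>yd_{i+1}\ge yd_{j+1}$, so $\ell_j>\ell_{j+1}$. Thus $\ell_i>\ell_{i+1}>\cdots>\ell_N$.
\end{itemize}
Contract $i$ therefore strictly beats every alternative on that interval. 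Some intervals may be empty when consecutive slopes coincide; this is harmless. This envelope identification is the only nontrivial step. The monotone-slope hypothesis is exactly what guarantees that no contract is skipped, and hence that the non-adjacent comparisons reduce to the adjacent ones.

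Integrating, with $q=c(u_i)$ on an interval of length $y(d_i-d_{i+1})$, gives
\[
\int_0^\infty q(x,t)\,\mathrm dt=y\sum_{i=1}^N c(u_i)\,(d_i-d_{i+1}).
\]
Abel summation, using $c(u_0)=0$ and $d_{N+1}=0$, rewrites this as
\[
y\sum_{i=1}^N\bigl[c(u_i)-c(u_{i-1})\bigr]d_i=K(P)\,y .
\]
Since $y=(x-a)_+$ in this case, this is \eqref{eq:second-integrated-tangent}.
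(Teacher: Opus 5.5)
Your proof is correct and follows essentially the same route as the paper: both reduce to the adjacent limiting payoff differences $\Delta h_i\,[d_i(x-a)-t]$, use $d_1\geq\cdots\geq d_N$ to get a single sign change in the contract index (so the selected contract is $j$ exactly when the first $j$ differences are positive), and then integrate over $t$. The only difference is bookkeeping: the paper writes $q(x,t)=\sum_{i=1}^N \Delta c_i\,\mathbf 1\{t<d_i(x-a)_+\}$ directly, which amounts to performing your Abel summation before integrating rather than after.
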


Fix a design $(P,a,\mathcal S)$ satisfying the proposition, and use the screening menu in \eqref{eq:second-optimal-menu} on $\mathcal S$.
Each applicant need only compare the report thresholds above $x$ with the costless report $x$.
A fixed tie-breaking order therefore gives a measurable best response, as required by Definition~\ref{def:reporting-equilibrium}.

For $z\in\mathcal S$, participation in contract $i$ requires $H_k(z)-\Theta_k\leq(x-a(z))/u_i$.
Thus positive allocation is confined, up to a null boundary event, to $x>a(z)$ and the ability band $0\leq H_k(z)-\Theta_k\leq(x-a(z))/u_1$.
The density bound implies
\[
0\leq s_k\bar C_k\left(z,x\right)\leq s_k\bar A_k\left(z,x\right)
\leq\frac{\overline V_f}{u_1}(x-a(z))_+.
\]
For fixed $x,z$, the density of the endpoint distance $t\triangleq H_k(z)-\Theta_k$ is $f(\overline\theta(z)-t/s_k\mid z)/s_k$.
On the preceding fixed band, its scaled value converges to $\beta(z)$ for $t>0$.
By \eqref{eq:second-local-contract}, at each fixed $t$ outside finitely many switching boundaries, allocation and cost both converge to the limiting selected admission probability.
Dominated convergence and
\eqref{eq:second-integrated-tangent} therefore give
\begin{equation}\label{eq:second-recovery-allocation}
\begin{aligned}
\lim_{k\to\infty}s_k\bar A_k\left(z,x\right)
&=\lim_{k\to\infty}s_k\bar C_k\left(z,x\right)\\
&=\beta(z)K(P)(x-a(z))_+,
\qquad z\in\mathcal S.
\end{aligned}
\end{equation}
The same limits can be integrated over $(X,Z)$: bounded intercepts and the second moments make
$(|v(Z,X)-\lambda|+1)(|X|+\|a\|_\infty)$ integrable.
In particular,
\[
\lim_{k\to\infty}s_kM_k
=K(P)\E\left[
\beta(Z)(X-a(Z))_+\mathbf1\{Z\in\mathcal S\}
\right]<\infty,
\]
so $M_k=O(s_k^{-1})$.

For all sufficiently large $k$, choose $\delta_k\geq0$ to satisfy \eqref{eq:second-capacity-calibration}, and admit directly when $m(z)>\lambda+\delta_k$.
The distribution function of $m(Z)$ is continuous and strictly increasing near $\lambda$, so such a choice exists
and $\delta_k\to0$ as $k\to\infty$.
Because $\mathcal S$ lies below $\lambda$, screening admissions are unchanged and total admission is exactly
$\rho$.
The match value displaced beyond $\lambda M_k$ satisfies
\[
0\leq
\E\left[(m(Z)-\lambda)
\mathbf1\{\lambda<m(Z)\leq\lambda+\delta_k\}\right]
\leq M_k\delta_k.
\]
Since $M_k=O(s_k^{-1})$ and $\delta_k\to0$, this loss
is $o(s_k^{-1})$ as $k\to\infty$.
Write $W_k^{\mathrm{DT}}(P,a,\mathcal S)$ for the welfare of this calibrated rule. Exact capacity gives
\[
\begin{aligned}
W_k^{\mathrm{DT}}\left(P,a,\mathcal S\right)-V^0
&=\E\left[
\left((v\left(Z,X\right)-\lambda)A_k-C_k\right)
\mathbf1\{Z\in\mathcal S\}\right]\\
&\quad-\E\left[(m(Z)-\lambda)
\mathbf1\{\lambda<m(Z)\leq\lambda+\delta_k\}\right].
\end{aligned}
\]
Using \eqref{eq:second-recovery-allocation} in the first term and the preceding displacement bound in the second proves \eqref{eq:second-fixed-design-limits}, by the definition of $J_z(a(z))$.
Use observable-only ranking in any remaining initial markets.
This proves the proposition.
\subsubsection{Proof of Lemma~\ref{lem:second-menu-allocation}}

Write $\Delta c_i\triangleq c(u_i)-c(u_{i-1})$ and $\Delta h_i\triangleq u_i c'(u_i)-u_{i-1}c'(u_{i-1})>0$.
By \eqref{eq:second-local-contract}, adjacent limiting payoffs differ by
\[
\Delta h_i[d_i(x-a)-t],
\]
where contract $0$ represents rejection, using $c'(0)=0$.

For $x>a$, the ordering $d_1\geq\cdots\geq d_N$ makes these differences change sign at most once, from positive to negative, as $i$ increases.
Away from the finitely many switching distances, if contract $j$ is selected, precisely the first $j$ differences are positive.
For $x\leq a$, rejection is optimal for every $t>0$. 
Consequently, for almost every $t\geq0$,
\[
q\left(x,t\right)=\sum_{i=1}^N
\Delta c_i\,\mathbf1\{t<d_i(x-a)_+\}.
\]
Integrating over $t$ gives
\[
\int_0^\infty q\left(x,t\right)\, \mathrm{d}t
=(x-a)_+\sum_{i=1}^N\Delta c_i d_i
=K(P)(x-a)_+,
\]
which proves \eqref{eq:second-integrated-tangent}.

\subsection{Proof of Proposition~\ref{prop:second-cost-frontier}}

Write $q(u)\triangleq c(u)$, $p(u)\triangleq c'(u)$, and $h(u)\triangleq uc'(u)$,
all zero at $u=0$, and let $\Delta$ denote increments over partition intervals.
Since $(h-q)'(u)=uc''(u)\geq0$,
\[
K(P)
=\sum_i\frac{\Delta q_i\Delta p_i}{\Delta h_i}
\leq\sum_i\Delta p_i
=p(u_*).
\]
The partition $\{0,u_*\}$ gives $1/u_*$, proving
\eqref{eq:second-cost-bounds}.

Suppose two adjacent intervals meeting at $t$ have slopes $d_1<d_2$, and let $\widetilde{P}$ remove $t$.
Then
\[
\begin{aligned}
\frac{\Delta q_1}{\Delta h_1}
&\geq1-td_1>1-td_2
\geq\frac{\Delta q_2}{\Delta h_2},\\
K(\widetilde{P})-K(P)
&=\frac{
(\Delta q_1\Delta h_2-\Delta q_2\Delta h_1)(d_2-d_1)
}{\Delta h_1+\Delta h_2}>0.
\end{aligned}
\]
The first line uses $\Delta h-\Delta q=\int uc''(u)\,du$, with $u\leq t$ on the first interval and $u\geq t$ on the second.
The second line follows by expanding the partition values. Repeating this deletion gives nonincreasing slopes without decreasing $K(P)$.

\subsection{Proof of Proposition~\ref{prop:second-conditional-bound}}

Write $B_k\triangleq A_k-C_k$ for applicant utility. Costless truthful reporting implies $B_k\geq0$, and hence $0\leq C_k\leq A_k\leq1$.
Recall that $\bar A_k(z,x)$ averages allocation over
$\Theta_k\mid Z=z$.
Similarly, let $\bar C_k(z,x)$ denote average manipulation cost and let
\[
\bar B_k(z,x)\triangleq \bar A_k(z,x)-\bar C_k(z,x)
\]
denote average applicant utility.
Write $b_z(x)\triangleq v(z,x)-\lambda-1$.
We first use a bound that is uniform over screening rules.

\begin{lemma}[Uniform allocation sensitivity]
\label{lem:second-sensitivity}
Set
\[
M_c\triangleq \max_{0\leq u\leq u_*}u c''(u),
\qquad
K_0\triangleq 2\overline V_fc'(u_*)+\frac{M_c}{\underline\theta}.
\]
For every admissible screening rule, almost every $z$, and $x<x'$,
\begin{equation}\label{eq:second-lipschitz}
0\leq\bar A_k(z,x')-\bar A_k(z,x)
\leq\frac{K_0}{s_k}(x'-x).
\end{equation}
The function $\bar B_k(z,\cdot)$ is nondecreasing and Lipschitz with constant
$c'(u_*)/(s_k\underline\theta)$.
Consequently,
\begin{equation}\label{eq:second-covariance-bound}
0\leq I_k(z)
\triangleq\E[(v(z,X)-m(z))A_k\mid Z=z]
\leq\frac{K_0\Sigma(z)}{s_k}.
\end{equation}
\end{lemma}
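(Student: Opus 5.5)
The plan has three parts, in this order: the two easy claims (applicant utility, and the covariance bound given \eqref{eq:second-lipschitz}), then the allocation bound \eqref{eq:second-lipschitz} itself, which is the real work. Throughout, fix $k$, an admissible rule, and a group $z$ with atomless ability distribution. Write $\theta_k$ for market ability and $U(x,\theta_k)\triangleq\sup_r u_Q(r;z,x,\theta_k)$ for the indirect utility, so that $B_k=U$ pointwise. Let $u\triangleq(R_Q-x)_+/\theta_k$ denote the chosen normalized effort. Since $C_k\leq A_k\leq1$, we have $u\leq u_*$, and by Assumption~\ref{ass:second-order-regularity}, $\theta_k\geq\ell_k+s_k\underline\theta\geq s_k\underline\theta$.

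\emph{Step 1 (applicant utility).} For fixed $r$, the map $x\mapsto u_Q(r;z,x,\theta_k)$ is nondecreasing. On the relevant range $(r-x)_+\leq\theta_k u_*$, it is Lipschitz with constant $c'(u_*)/\theta_k$. Taking the supremum over reports gives the same two properties for $U(\cdot,\theta_k)$; we only need to restrict to reports with cost at most one, which is harmless since the rest are dominated by $r=x$. Averaging over $\Theta_k\mid Z=z$ then shows that $\bar B_k(z,\cdot)$ is nondecreasing and Lipschitz with constant $c'(u_*)/(s_k\underline\theta)$.

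\emph{Step 2 (monotonicity of $\bar A_k$).} The lower bound in \eqref{eq:second-lipschitz} is a single-crossing argument. For $x<x'$, convexity of $c$ makes $r\mapsto c((r-x)_+/\theta_k)-c((r-x')_+/\theta_k)$ nondecreasing. A revealed-preference comparison, as in the proof of Lemma~\ref{lem:collapse-comparison}, then gives $A_k(z,x',\theta_k)\geq A_k(z,x,\theta_k)$ pointwise. Averaging preserves this.

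\emph{Step 3 (Lipschitz bound for $\bar A_k$).} This is the main obstacle. Write $\bar A_k=\bar B_k+\bar C_k$. Step 1 already controls $\bar B_k$, so it remains to control the increase of average cost. The idea is to trade a latent-quality increase for an ability increase. For a type $(x+d,\theta_k)$ choosing effort $u$, the envelope relations
\[
\partial_xU=c'(u)/\theta_k,\qquad \partial_{\theta}U=u\,c'(u)/\theta_k
\]
suggest that raising $x$ by $d$ acts like raising $\theta_k$ by roughly $d/u$. On the report ranges that matter, I would compare type $(x+d,\theta_k)$ with type $(x,\theta_k+\tau)$, with $\tau$ of order $d$. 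Two ingredients then enter.
\begin{itemize}
\item The mismatch in normalized efforts between the two types is a second-order term. It is controlled by the curvature quantity $M_c=\max_{u\leq u_*}uc''(u)$ divided by $\theta_k$, which should give the $M_c/(s_k\underline\theta)$ part of $K_0$.
\item The remaining term is $\int[A_k(z,x,\theta+\tau)-A_k(z,x,\theta)]f_k(\theta)\,d\theta$. Here $A_k(z,x,\cdot)$ is nondecreasing in ability (Lemma~\ref{lem:collapse-comparison}) and takes values in $[0,1]$, and the market density $f_k$ has total variation at most $\overline V_f/s_k$. Summation by parts along ability should bound this term by $\tau\,\overline V_f/s_k$.
\end{itemize}
I expect the choice of $\tau$, in particular its scaling as $c'(u_*)d$ rather than $d/u$ when $u$ is small, to be the delicate point. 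The factor $2$ in $K_0$ should come from handling the allocation and cost pieces separately, each bounded by $\overline V_f c'(u_*)d/s_k$. If this direct coupling fails, I would fall back on splitting each type's choice by effort level and using $c'(0)=0$ to kill contributions from small $u$.

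\emph{Step 4 (covariance bound).} This step just assembles the previous ones. By conditional independence, $I_k(z)=\E[(v(z,X)-m(z))\bar A_k(z,X)\mid Z=z]$. Symmetrizing as in \eqref{eq:collapse-information} gives
\[
I_k(z)=\tfrac12\E\left[(v(z,X)-v(z,X'))(\bar A_k(z,X)-\bar A_k(z,X'))\mid Z=z\right].
\]
Because $v(z,\cdot)$ and $\bar A_k(z,\cdot)$ are both nondecreasing, each product is nonnegative. By \eqref{eq:second-lipschitz}, each is at most $(K_0/s_k)(v(z,X)-v(z,X'))(X-X')$. Taking expectations yields $0\leq I_k(z)\leq K_0\Sigma(z)/s_k$, which is \eqref{eq:second-covariance-bound}.
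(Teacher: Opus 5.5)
Your Steps 1, 2 and 4 are sound. Steps 1 and 4 are essentially the paper's arguments. Your pointwise single-crossing proof that $A_k(z,\cdot,\theta_k)$ is nondecreasing is a valid and simpler substitute for reading $\bar A_k'\geq0$ off a derivative formula. To cover every selection rather than just some monotone one, use that the cost difference is \emph{strictly} increasing for $r>x$, which strict convexity and monotonicity of $c$ give.

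The gap is Step~3, which carries the entire content of \eqref{eq:second-lipschitz}. As written, it contains no mechanism that yields a bound linear in $x'-x$ at scale $1/s_k$, and the coupling you suggest does not supply one. The natural way to implement it is Lemma~\ref{lem:collapse-comparison} applied to $(x+d,\theta_k)$ and $(x,\theta_k+\tau)$. This gives $A_k(z,x+d,\theta_k)-A_k(z,x,\theta_k+\tau)\leq c(d/\tau)$, and the ability shift costs at most $\tau\overline V_f/s_k$ after averaging, so
\[
\bar A_k(z,x+d)-\bar A_k(z,x)\leq c(d/\tau)+\tau\,\overline V_f/s_k .
\]
With $\tau$ of order $d$ the first term is of order one. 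Optimizing over $\tau$ gives only a sublinear bound, of order $(d\overline V_f/s_k)^{2/3}$ for quadratic cost. So the effort mismatch is not a second-order term controlled by $M_c/\theta_k$. It is precisely the contract-dependent exchange rate $d/u$ that you flag, and no single $\tau$ handles it. Writing $\bar A_k=\bar B_k+\bar C_k$ does not reduce the difficulty either. Increments of $\bar C_k$, which need not be monotone, are the same problem, and the split introduces a constant $c'(u_*)/\underline\theta$ that is not part of $K_0$.

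The missing idea is an exact computation of $\partial_x\bar A_k$ through the abilities $b_j(x)$ at which the selected contract switches. Let $g_k$ denote the market-$k$ ability density. For a finite menu with dominated contracts removed, choices are monotone in ability. Indifference at a switch between normalized efforts $u_0<u_1$ gives $db_j/dx=-\Delta p_j/\Delta h_j$, where $p=c'(u)$ and $h=uc'(u)$, and $c'(0)=0$ covers switches from rejection. Hence
\[
\bar A_k'(z,x)=\sum_j g_k(b_j\mid z)\,\Delta q_j\,\frac{\Delta p_j}{\Delta h_j}\leq\sum_j g_k(b_j\mid z)\,\Delta p_j ,
\]
because $\Delta q_j=c(u_1)-c(u_0)\leq\Delta h_j$, as $(uc'-c)'=uc''\geq0$. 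This weighting resolves your small-effort worry. A low-effort switch moves fast in ability, $|db_j/dx|\approx1/u_1$, but carries a jump $\Delta q_j\approx c(u_1)$ that is correspondingly small.

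To bound the last sum, follow $p(b)=c'(u(b))$ along ability. It jumps up by $\Delta p_j$ at each switch and decreases at rate $uc''(u)/b$ between switches. Integrating by parts against $g_k$ gives
\[
\sum_j g_k(b_j\mid z)\Delta p_j\leq c'(u_*)\operatorname{TV}(g_k)+M_c\int g_k(b\mid z)b^{-1}\,db\leq\left(\overline V_fc'(u_*)+M_c/\underline\theta\right)/s_k .
\]
That is where both pieces of $K_0$ come from. The argument produces coefficient one on $\overline V_fc'(u_*)$, so the factor $2$ is slack rather than an allocation/cost split.

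Finally, this computation is for finite menus. You must pass to an arbitrary admissible rule by approximating with finite submenus and showing that selected efforts, and hence allocations, converge for almost every ability. Your proposal does not address this step. A working coupling on arbitrary rules would have avoided it, but you do not have one.
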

The constant in this lemma supplies $K_0$ in the proposition.
Since
\[
\mathcal L_{k,z}(Q_k)
=I_k(z)+(m(z)-\lambda)\E[A_k\mid z]-\E[C_k\mid z],
\]
$0\leq A_k\leq1$ and $C_k\geq0$ imply
\eqref{eq:second-global-dominator}.

The following refinement gives the sharp coefficient for sequences whose conditional payoff is within order $1/s_k$ of the observable-only benchmark from below.

\begin{lemma}[Local allocation limits]
\label{lem:second-localization}
For almost every $z$, consider a sequence of admissible
screening rules $(Q_k)$.
\begin{itemize}
    \item If $m(z)<\lambda$ and $s_k\mathcal L_{k,z}(Q_k)$ is bounded below, then $s_k\bar A_k(z,x)$ is bounded uniformly on each compact $x$-interval.
    Under these conditions, for every fixed $x<x'$,
    \begin{equation}\label{eq:second-sharp-comparison}
    \begin{aligned}
    \limsup_{k\to\infty}
    s_k[\bar A_k(z,x')-\bar A_k(z,x)]
    &\leq\beta(z)K_c(x'-x),\\
    s_k[\bar B_k(z,x')-\bar B_k(z,x)]
    &\longrightarrow0
    \qquad\text{as }k\to\infty.
    \end{aligned}
    \end{equation}
    \item If $m(z)>\lambda$ and $s_k[\mathcal L_{k,z}(Q_k)-(m(z)-\lambda)]$ is bounded below, then, for every fixed $x<x'$,
    \[
    s_k[\bar A_k(z,x')-\bar A_k(z,x)]
    \longrightarrow0
    \qquad\text{as }k\to\infty.
    \]
\end{itemize}

\end{lemma}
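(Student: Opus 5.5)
The plan is to derive everything from Lemma~\ref{lem:second-sensitivity} and the cross-type comparison of Lemma~\ref{lem:collapse-comparison}, working group by group. Fix a group $z$ outside the null set where the lemmas hold.

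\emph{First bullet, boundedness.} I will use the identity
\[
\mathcal L_{k,z}(Q_k)=I_k(z)+(m(z)-\lambda)\E[A_k\mid z]-\E[C_k\mid z].
\]
Combined with $I_k(z)\le K_0\Sigma(z)/s_k$, $C_k\ge0$ and the hypothesis $s_k\mathcal L_{k,z}\ge -M$, it gives
\[
(\lambda-m(z))\,s_k\E[A_k\mid z]\le K_0\Sigma(z)+M,
\qquad
s_k\E[C_k\mid z]\le K_0\Sigma(z)+M .
\]
Hence the conditional allocation and cost masses are $O(s_k^{-1})$. To pass to pointwise bounds, I use that $\bar A_k(z,\cdot)$ is nondecreasing. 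For $x\le x_2$ I would write
\[
\bar A_k(z,x)\le \bar A_k(z,x_2)\le \frac{\E[A_k\mid z]}{\Pr(X\ge x_2\mid z)}
\]
when this probability is positive. Otherwise I pick a point $x_1$ in the support below $x_2$ and add the Lipschitz increment $K_0(x_2-x_1)/s_k$ from \eqref{eq:second-lipschitz}.

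\emph{First bullet, utility increments.} For this I would use an envelope argument. Fix the equilibrium report of type $(z,x,\theta_k)$ and let type $(z,x',\theta_k)$ copy it. Her utility then rises by at most $\int_x^{x'}c'(U_k(y))/\theta_k\,dy$, where $U_k$ is the normalized effort and $\theta_k\ge s_k\underline\theta$. Optimality of the copied report only helps, so
\[
s_k[\bar B_k(z,x')-\bar B_k(z,x)]\le \underline\theta^{-1}\int_x^{x'}\E[c'(U_k)\mid z,y]\,dy .
\]
The key point is $c'(0)=0$. Given $\epsilon>0$, choose $\delta$ with $c'(u)\le\epsilon$ for $u\le\delta$. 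Efforts above $\delta$ cost at least $c(\delta)$, so Markov's inequality gives $\Pr(U_k>\delta\mid z,y)\le \bar C_k(z,y)/c(\delta)\le \bar A_k(z,y)/c(\delta)=O(s_k^{-1})$ uniformly on $[x,x']$ by the boundedness step. Therefore
\[
\limsup_k s_k[\bar B_k(z,x')-\bar B_k(z,x)]\le \epsilon\,(x'-x)/\underline\theta ,
\]
and $\epsilon$ is arbitrary. Monotonicity of $\bar B_k$ gives the matching lower bound $0$.

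\emph{First bullet, the sharp allocation bound (main obstacle).} I would use a layer-cake decomposition. By Lemma~\ref{lem:collapse-comparison}, $A_k(z,x,\cdot)$ is nondecreasing in ability, so $\{\theta_k:A_k\ge q\}$ is an upper interval $[\tau_q(x),\infty)$. I will then show $\tau_q(x)-\tau_q(x')\lesssim (x'-x)/u_q$. The reason is that a type with latent quality $x'$ can reproduce the report of a type with quality $x$ and slightly higher ability, because relevant reports have size about $H_k(z)u$.

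Since $q\,\Pr(\Theta_k\ge\tau_q\mid z)\le \E[A_k\mid z]=O(s_k^{-1})$, every level $q$ bounded away from $0$ has $\tau_q$ converging, in reference coordinates, to the endpoint $\overline\theta(z)$. There the density is $\beta(z)/s_k$ up to $o(1)$. If $\overline\theta(z)=\infty$, the tail density vanishes by bounded variation, and the bound is $0=\beta(z)K_c(x'-x)$. Small levels $q$ are negligible, using $c'(0)=0$ as above.

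Integrating over $q$ then gives
\[
\limsup_k s_k[\bar A_k(z,x')-\bar A_k(z,x)]\le \beta(z)\,(x'-x)\,\sup \sum_i \Delta q_i/u_i^{\mathrm{eff}},
\]
where $u_i^{\mathrm{eff}}$ is the effective normalized effort at level $i$. Incentive compatibility across adjacent levels, as in \eqref{eq:second-local-contract}, forces the pairs (level, effort) onto a frontier of the form used in Lemma~\ref{lem:second-menu-allocation}. I then expect to identify this supremum with $\sup_P K(P)=K_c$, after discretizing the level set and using Proposition~\ref{prop:second-cost-frontier}. Making this comparison rigorous for arbitrary (non-menu) rules is the step I expect to be hardest. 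I would first try to approximate the induced level/threshold map by a finite partition and bound the error uniformly in $k$.

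\emph{Second bullet.} Now $m(z)>\lambda$, and the same identity gives
\[
s_k(m(z)-\lambda)\,\E[1-A_k\mid z]+s_k\E[C_k\mid z]\le K_0\Sigma(z)+M .
\]
So both rejection and cost masses are $O(s_k^{-1})$, and pointwise bounds follow by monotonicity as before. The envelope argument then applies verbatim: $s_k$-scaled increments of $\bar B_k$ vanish. Writing $\bar A_k=\bar B_k+\bar C_k$, it remains to show that $s_k[\bar C_k(z,x')-\bar C_k(z,x)]\to0$. I would prove this from $\bar C_k=O(s_k^{-1})$ pointwise together with the $c'(0)=0$ splitting: costs are carried by efforts at most $\delta$, or by an $O(s_k^{-1})$-probability set. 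The Lipschitz bound of Lemma~\ref{lem:second-sensitivity} shows the contribution of the latter is $o(s_k^{-1})$ per unit of $x$.
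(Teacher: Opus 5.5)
Your easy steps match the paper: the $O(s_k^{-1})$ conditional masses from the payoff identity, the pointwise bound on $s_k\bar A_k$, and the vanishing scaled utility increments via the envelope formula with a small-effort split. One small correction there: letting type $x'$ copy type $x$'s report bounds the utility gain from below. The upper bound you need is the envelope formula itself, or $x$ copying $x'$'s report.

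\textbf{First bullet: the sharp coefficient is missing.} The core of the lemma, the coefficient $\beta(z)K_c$, is only conjectured, and your sketched route does not reach it.
\begin{itemize}
\item A threshold shift of $(x'-x)/u_q$ with $u_q=c^{-1}(q)$ is the rate of a zero-utility switch from rejection. Integrated over levels it gives $\beta(z)(x'-x)\int_0^{u_*}c'(u)u^{-1}\,du$. For $c(u)=u^2$ this coefficient is $2$, while $K_c=1$.
\item At a general switch, differentiating the indifference condition gives the exact rate $\Delta p/\Delta h$ in \eqref{eq:second-switch-derivative}. It depends on both efforts and is unbounded at positive-utility switches with small efforts; for $c(u)=u^2$ it equals $1/(u_0+u_1)$.
\item Your Markov argument controls $\E[c'(U_k)]$, that is, utility slopes. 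It does not control the density-weighted allocation jumps at small levels, so ``small levels are negligible'' does not follow.
\end{itemize}
The paper's argument needs four ingredients you do not have:
\begin{itemize}
\item[(i)] The derivative formula \eqref{eq:second-allocation-derivative}.
\item[(ii)] For switches with upper allocation at most $\eta$, the integration-by-parts bound \eqref{eq:second-weighted-variation} applied to the menu restricted to allocations $\le\eta$. This gives $2\overline V_fc'(u_\eta)+\max_{u\le u_\eta}uc''(u)/\underline\theta$.
\item[(iii)] For the remaining switches, besides their location near $\overline\theta(z)$, the fact that utility at the market endpoint vanishes: $B_k(x,H_k)\to0$ uniformly on compacts. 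This follows from $\bar A_k=O(s_k^{-1})$, monotonicity in ability, and the ability-slope bound on utility.
\item[(iv)] The consequence of (iii): the two efforts at each relevant switch are pinned to $c^{-1}(q_j)$ and $c^{-1}(q_{j-1})$. Each rate is then a partition slope $d_j$ up to an error controlled by uniform continuity of $c''(u)/(c'(u)+uc''(u))$. The total error is at most $\epsilon_k\sum_j\Delta q_j\le\epsilon_k$, independent of menu size.
\end{itemize}
Completing the levels to a partition of $[0,1]$ then bounds the sum by $\beta(z)K(P)+o(1)\le\beta(z)K_c+o(1)$. The finite-menu approximation from Lemma~\ref{lem:second-sensitivity} extends this to arbitrary rules.

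\textbf{Second bullet: the last step fails.} Splitting $\bar C_k$ by effort is circular, because cost jumps at switches equal allocation jumps. You also claim that an $O(s_k^{-1})$-probability set of high-effort types contributes $o(s_k^{-1})$ per unit of $x$. Lemma~\ref{lem:second-sensitivity} gives only $O(s_k^{-1})$. A thin band of abilities can carry allocation sensitivity of exact order $s_k^{-1}$; that is precisely how the screening track works.

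What is needed is a uniform statement: selected efforts tend to zero uniformly over $|x|\le R$ and all abilities $b\ge\ell_k+s_k\underline\theta$. The paper gets this as follows.
\begin{itemize}
\item Take the contract selected at $(0,\ell_k+s_kT)$. It has $q_k\to1$ and effort $u_k\to0$.
\item Because $\Theta\ge\underline\theta$, this contract's normalized effort for every such type is at most $(T/\underline\theta)u_k+R/(s_k\underline\theta)\to0$. So every type's utility tends to one.
\item Every selected cost is then at most $1-B_k\to0$, so selected efforts go to zero uniformly.
\item Applying \eqref{eq:second-weighted-variation} over this shrinking effort range gives the vanishing scaled allocation increments.
\end{itemize}
Relatedly, the pointwise bound $\bar C_k=O(s_k^{-1})$ comes from the Lipschitz bound on $\bar B_k$, not from monotonicity, since $\bar C_k$ need not be monotone in $x$.
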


Fix a sequence of admissible rules and reporting equilibria, and an observable group $z$ outside the conditional null sets.
Suppose first that $m(z)<\lambda$, and take a subsequence along which $s_k\mathcal L_{k,z}(Q_k)$ approaches its upper limit. If this limit is $-\infty$, the desired bound is immediate. Otherwise, the scaled payoffs are bounded below along this subsequence, so Lemma~\ref{lem:second-localization} applies. Relabel the subsequence by $k$ and define the scaled average cost
\[
\widehat C_k(x)\triangleq s_k\bar C_k(z,x)
      =s_k[\bar A_k(z,x)-\bar B_k(z,x)].
\]
Since $0\leq\widehat C_k(x)\leq s_k\bar A_k(z,x)$, these functions are uniformly bounded on compact intervals.
The Lipschitz bounds for $\bar A_k$ and $\bar B_k$ give a common Lipschitz constant for $\widehat C_k$, so a further subsequence converges uniformly on compact intervals to a nonnegative function $\widehat C$.
By \eqref{eq:second-sharp-comparison}, utility differences vanish at this scale, so $\widehat C$ is nondecreasing and Lipschitz with constant $\beta(z)K_c$.

Since $A_k=C_k+B_k$ and $b_z(x)=v(z,x)-\lambda-1$,
\[
\begin{aligned}
s_k\mathcal L_{k,z}(Q_k)
&=\E[b_z(X)\widehat C_k(X)\mid z]
  +[m(z)-\lambda]s_k\bar B_k(z,0)\\
&\quad+\E\left[(v(z,X)-\lambda)s_k
      [\bar B_k(z,X)-\bar B_k(z,0)]\mid z\right]\\
&\leq\E[b_z(X)\widehat C_k(X)\mid z]+o(1)
\qquad\text{as }k\to\infty.
\end{aligned}
\]
The second term is nonpositive. The last term tends to zero by the utility comparison and dominated convergence: $s_k|\bar B_k(z,x)-\bar B_k(z,0)| \leq c'(u_*)|x|/\underline\theta$.
The common linear-growth bound for $\widehat C_k$ likewise gives $\E[b_z(X)\widehat C_k(X)\mid z]\to\E[b_z(X)\widehat C(X)\mid z]$ as $k\to\infty$.
Both uses of dominated convergence follow from the conditional second moments.

To bound this limiting payoff, write $L\triangleq \beta(z)K_c$.
If $L=0$, then $\widehat C$ is a nonnegative constant and
$\E[b_z(X)\widehat C(X)\mid z]\leq0$ because $\E[b_z(X)\mid z]<0$.
The same bound holds if $b_z(X)\leq0$ almost surely.
Otherwise, monotonicity and the negative mean of $b_z$ give a finite threshold $\xi\triangleq\inf\{x:b_z(x)>0\}$.
Set $a\triangleq\xi-\widehat C(\xi)/L$.
Nonnegativity and the Lipschitz bound imply $\widehat C(x)\geq L(x-a)_+$ for $x\leq\xi$ and $\widehat C(x)\leq L(x-a)_+$ for $x\geq\xi$.
The two functions agree at $\xi$, so multiplying by $b_z(x)$ gives
\[
\begin{aligned}
\E[b_z(X)\widehat C(X)\mid z]
&\leq L\E[b_z(X)(X-a)_+\mid z]\\
&\leq LJ_z^\star=\beta(z)K_cJ_z^\star.
\end{aligned}
\]
This proves the conditional upper bound for excluded groups.

For $m(z)>\lambda$, again take a subsequence realizing the upper limit of the scaled excess payoff. Unless this limit is $-\infty$, Lemma~\ref{lem:second-localization} applies. 
For an independent copy $X'$ conditional on $Z=z$, the conditional covariance identity gives
\[
s_k I_k(z)
=\frac12\E\left[
(v(z,X)-v(z,X'))s_k
[\bar A_k(z,X)-\bar A_k(z,X')]\mid z\right]
\longrightarrow0
\qquad\text{as }k\to\infty.
\]
Here the allocation difference tends to zero after scaling, and \eqref{eq:second-lipschitz} bounds the integrand in absolute value by $K_0|v(z,X)-v(z,X')|\,|X-X'|$, which is integrable.
Since $\mathcal L_{k,z}(Q_k)-(m(z)-\lambda)\leq I_k(z)$, the conditional upper bound is zero.
The event $m(Z)=\lambda$ has probability zero because $Z$ is atomless and $m$ is strictly increasing.

The coefficient is finite. For an excluded group, set $Y_a\triangleq (X-a)_+$. Its monotonicity and unit Lipschitz constant
give
\[
\begin{aligned}
\E[b_z(X)Y_a\mid z]
&=\operatorname{Cov}(v(z,X),Y_a\mid z)
  +[m(z)-\lambda-1]\E[Y_a\mid z]\\
&\leq\operatorname{Cov}(v(z,X),Y_a\mid z)
\leq\Sigma(z),
\end{aligned}
\]
where the last inequality follows from the pairwise covariance identity.
Taking the supremum over $a$ gives $0\leq J_z^\star\leq\Sigma(z)$ in excluded groups.
For $m(z)\geq\lambda$, the same bound holds because $J_z^\star=0$ by definition.
Continuity of $J_z(a)$ in $a$ allows its supremum to be taken over rational intercepts, so $J_Z^\star$ is measurable.
Finally, the density variation bound gives $\beta(z)\leq\overline V_f$, and
Proposition~\ref{prop:second-cost-frontier} gives $K_c\leq c'(u_*)$. Hence
\[
0\leq D
=K_c\E[\beta(Z)J_Z^\star]
\leq c'(u_*)\overline V_f\E[\Sigma(Z)]<\infty.
\]
This completes the proof of the proposition.

\subsubsection{Proof of Lemma~\ref{lem:second-sensitivity}}

Fix $k$ and an observable group $z$ for which the assumptions hold. 
Write $b$ for market-$k$ ability, with density $g_k(b\mid z)\triangleq s_k^{-1}f((b-\ell_k)/s_k\mid z)$.
Let $B(x,b)$ denote optimal applicant utility.
Since reporting $r=x$ is costless, $B\geq0$ and every selected contract has effort $u\leq u_*$.
Replacing contract utilities by their positive parts leaves $B$ unchanged. These positive parts have $x$-slopes bounded by $c'(u_*)/b$ and ability-slopes bounded by $u_*c'(u_*)/b$.
Their supremum is therefore locally Lipschitz in ability, and nondecreasing and Lipschitz in $x$. Averaging and using $b\geq s_k\underline\theta$ proves the stated bound for $\bar B_k$.

First consider a finite menu including rejection, and remove dominated contracts so that reports and allocations are strictly ordered. Lemma~\ref{lem:collapse-comparison} makes the selected allocation nondecreasing in ability.
At a switch from $(q_0,r_0)$ to $(q_1,r_1)$, put $u_i\triangleq (r_i-x)_+/b$, $p_i\triangleq c'(u_i)$, and $h_i\triangleq u_i c'(u_i)$, all zero for rejection. Writing $\Delta$ for the change between contracts, indifference and differentiation give
\begin{equation}\label{eq:second-switch-derivative}
\Delta q=c(u_1)-c(u_0),
\qquad
\frac{db}{dx}=-\frac{\Delta p}{\Delta h}<0.
\end{equation}
Here $\Delta h>0$, and $c'(0)=0$ handles the positive part
at $r_i=x$. These pairwise thresholds are locally Lipschitz, extending by zero when the higher report becomes costless. Each allocation-level cutoff is a finite combination of their minima and maxima. Since $g_k$ is bounded, $\bar A_k(z,\cdot)$ is locally absolutely continuous, with
\begin{equation}\label{eq:second-allocation-derivative}
\bar A_k'(z,x)
=\sum_j g_k(b_j\mid z)\Delta q_j
\frac{\Delta p_j}{\Delta h_j}
\end{equation}
for almost every $x$. Density discontinuities exclude only a null set of $x$: there are countably many, and each strictly decreasing pairwise threshold meets each positive ability level at most once.

Since $[u c'(u)-c(u)]'=u c''(u)\geq0$, we have $\Delta h_j\geq\Delta q_j$. Fix $x$ outside the preceding null set and write $p(b)\triangleq c'(u(b))$ for the selected effort.
Between switches, $p'(b)=-u(b)c''(u(b))/b$. At switches, $p$ jumps upward by $\Delta p_j$.
Integration by parts therefore gives
\begin{equation}\label{eq:second-weighted-variation}
\begin{aligned}
\sum_j g_k(b_j\mid z)\Delta p_j
&=-\int_0^\infty p(b)\,d g_k(b\mid z)
  +\int_0^\infty
  \frac{g_k(b\mid z)u(b)c''(u(b))}{b}\,db\\
&\leq c'(u_*)\operatorname{TV}(g_k(\cdot\mid z))
  +M_c\int_0^\infty\frac{g_k(b\mid z)}b\,db.
\end{aligned}
\end{equation}
Zero extension includes support-boundary jumps, while $g_k=0$ near zero and $g_k\to0$ at infinity eliminate endpoint terms. The inequality uses $p\leq c'(u_*)$ and $u c''(u)\leq M_c$.
Since $\operatorname{TV}(g_k)\leq\overline V_f/s_k$ and $\int_0^\infty g_k(b\mid z)b^{-1}\,db \leq(s_k\underline\theta)^{-1}$,
\eqref{eq:second-allocation-derivative} gives $0\leq\bar A_k'(z,x)\leq K_0/s_k$.
Integrating proves \eqref{eq:second-lipschitz} for finite menus.

For an arbitrary admissible rule, take increasing finite submenus dense in its contract image, including rejection.
Fix $x$ and write $B_n(b)\uparrow B(b)$ for their utilities. This convergence follows from continuity of the cost.
The local Lipschitz property makes $B$ differentiable at almost every $b$. At such a $b$, take any convergent subsequence of selected efforts, $u_n\to u$ in $[0,u_*]$.
For every $t>0$, comparing the same contracts at ability $t$ and taking the limit along this subsequence gives
\[
B(t)\geq B(b)+c(u)-c\left(\frac bt u\right).
\]
Equality at $t=b$ implies $bB'(b)=u c'(u)$. Strict monotonicity of $u\mapsto u c'(u)$ determines a unique limit, and the same comparison identifies it with the effort under every optimal report in the full rule.
Thus $A_n(b)=B_n(b)+c(u_n)$ converges to the original allocation for almost every ability.
The ability distribution has a density and $0\leq A_n\leq1$, so dominated convergence passes the averaged allocations, and their common bound, to the original rule for every fixed $x$.

Finally, let $X'$ be an independent copy conditional on $Z=z$. The conditional symmetrization used in \eqref{eq:collapse-information}, monotonicity, and \eqref{eq:second-lipschitz} give
\[
0\leq I_k(z)
\leq\frac{K_0}{2s_k}\E\left[
(v(z,X)-v(z,X'))(X-X')\mid Z=z\right]
=\frac{K_0\Sigma(z)}{s_k},
\]
where the equality is the conditional covariance identity.

\subsubsection{Proof of Lemma~\ref{lem:second-localization}}

Fix an observable group $z$ outside the conditional null sets with $m(z)\ne\lambda$, and first consider finite menus.
Write $B_k(x,b)$ for applicant utility at market-$k$ ability $b$.
Under the corresponding payoff hypothesis, the conditional payoff identity and Lemma~\ref{lem:second-sensitivity} give
\[
\begin{aligned}
&|m(z)-\lambda|\E[|A_k-a^0(z)|\mid z]+\E[C_k\mid z]\\
&\qquad=I_k(z)-\left[\mathcal L_{k,z}(Q_k)
                  -(m(z)-\lambda)_+\right]
=O(s_k^{-1}).
\end{aligned}
\]

Suppose first that $m(z)<\lambda$. Mean allocation is $O(s_k^{-1})$, and \eqref{eq:second-lipschitz} gives
\[
s_k\bar A_k(z,x)
\leq s_k\E[A_k\mid z]+K_0\E[|X-x|\mid z],
\]
which is bounded uniformly on compact $x$-intervals.
For $0<\eta<1$, put $u_\eta\triangleq c^{-1}(\eta)$.
Writing $u_k\triangleq u_k(x,\Theta_k)$ for selected effort, the utility
envelope formula, $b\geq s_k\underline\theta$, and $c(u_k)\leq A_k$ imply
\[
0\leq s_k\bar B_k'(z,x)
\leq\frac{\E[c'(u_k)\mid z]}{\underline\theta}
\leq\frac{1}{\underline\theta}
\left[c'(u_\eta)
      +\frac{c'(u_*)}{\eta}\bar A_k(z,x)\right]
\]
for almost every $x$, where the last inequality splits according to whether $c(u_k)\leq\eta$. 
Integrating over $[x,x']$ and taking $k\to\infty$ followed by $\eta\downarrow0$ proves the utility limit in \eqref{eq:second-sharp-comparison}.

To bound allocation sensitivity, retain only contracts with allocation at most $\eta$, including rejection. Every original switch whose upper allocation is at most $\eta$ remains a switch in this restricted menu, whose selected efforts are at most $u_\eta$.
Equations~\eqref{eq:second-allocation-derivative} and \eqref{eq:second-weighted-variation} therefore bound these switches' contribution to $s_k\bar A_k'(z,x)$ by
\[
2\overline V_f c'(u_\eta)
+\frac{\max_{0\leq u\leq u_\eta}u c''(u)}
       {\underline\theta},
\]
which tends to zero as $\eta\downarrow0$.

At every remaining switch $b_j$, monotonicity in ability gives $\Pr\{\Theta_k\geq b_j\mid z\}\leq\bar A_k(z,x)/\eta
=O(s_k^{-1})$.
Thus, among switches with nonzero density weight, reference abilities approach the upper endpoint uniformly on compact $x$-intervals.
If that endpoint is infinite, apply \eqref{eq:second-weighted-variation} with the density restricted to reference abilities above $T$.
Both $\operatorname{TV}(f\mathbf1_{[T,\infty)})$ and $\int_T^\infty f(\theta\mid z)/\theta\,d\theta$ tend to zero as $T\to\infty$: finite variation and integrability imply that the density and its tail variation vanish.
Taking $k\to\infty$ and then $T\to\infty$ shows that the remaining contribution vanishes.

For a finite endpoint, write $H_k\triangleq \ell_k+s_k\overline\theta(z)$.
For sufficiently small fixed $e>0$, monotonicity of allocation
and the ability-slope bound for utility give
\[
0\leq B_k(x,H_k)
\leq\frac{\bar A_k(z,x)}
              {\Pr\{\Theta>\overline\theta(z)-e\mid z\}}
   +\frac{u_*c'(u_*)e}{\underline\theta}.
\]
Here the first term bounds utility at $H_k-s_ke$, and the second bounds its increase over the remaining ability interval.
The denominator is positive by the definition of the upper endpoint. Taking $k\to\infty$ and then $e\downarrow0$ shows that $B_k(x,H_k)\to0$ uniformly on compact $x$-intervals.

At each remaining contributing switch, its common utility $B_j$ therefore tends to zero, its efforts are $c^{-1}(q_j-B_j)$ and $c^{-1}(q_{j-1}-B_j)$, and $s_kg_k(b_j\mid z)\to\beta(z)$.
For upper allocation exceeding $\eta$, the ratio $\Delta c'/\Delta[uc']$ consequently converges uniformly to its zero-utility counterpart. This follows from its continuous extension at equal positive efforts, with value $c''(u)/(c'(u)+u c''(u))$. The upper effort stays bounded away from zero.
For fixed $\eta>0$, this continuous extension is uniformly continuous on the relevant compact set. Each ratio error is therefore at most $\epsilon_k\to0$, and the total error is bounded by $\epsilon_k\sum_j\Delta q_j\leq\epsilon_k$, independently of menu size.
Complete these allocation intervals to a partition of $[0,1]$ and apply $c^{-1}$ to its endpoints. This gives an effort partition in $\mathcal P$, so their contribution is at most $\beta(z)K_c+o(1)$.
Combining both parts, integrating over $[x,x']$, and taking $k\to\infty$ followed by $\eta\downarrow0$ proves the allocation bound in \eqref{eq:second-sharp-comparison}.

Now suppose that $m(z)>\lambda$. The opening identity gives $\E[B_k\mid z]\to1$, and the utility Lipschitz bound implies $\bar B_k(z,x)\to1$ uniformly on compact $x$-intervals.
Choose a finite $T\geq\underline\theta$ with $\Pr\{\Theta\leq T\mid z\}>0$. Monotonicity in ability implies $B_k(0,\ell_k+s_kT)\to1$. A contract $(q_k,r_k)$ selected there has $q_k\to1$ and $u_k=(r_k)_+/(\ell_k+s_kT)\to0$.
For every fixed $R>0$, this same report satisfies
\[
\frac{(r_k-x)_+}{b}
\leq\frac{T}{\underline\theta}u_k
    +\frac{R}{s_k\underline\theta}
\]
for $|x|\leq R$ and $b\geq\ell_k+s_k\underline\theta$. The right-hand side tends to zero as $k\to\infty$, so this contract provides utility tending uniformly to one throughout the set. Every selected cost is at most $1-B_k(x,b)$, so selected efforts tend uniformly to zero.
Applying \eqref{eq:second-weighted-variation} over this shrinking effort range and integrating proves $s_k[\bar A_k(z,x')-\bar A_k(z,x)]\to0$ as $k\to\infty$.

Finally, in the finite-menu approximation from the proof of Lemma~\ref{lem:second-sensitivity}, effort convergence also gives cost convergence. Fubini's theorem gives almost-everywhere convergence under the conditional joint distribution of $(X,\Theta_k)$. Domination by $|v(z,X)-\lambda|+1$ then gives convergence in conditional payoff. The common Lipschitz bounds also give uniform convergence on compact $x$-intervals in averaged allocation and utility.
In market $k$, choose the payoff error and both uniform errors on $[-k,k]$ to be at most $1/(ks_k)$.
All preceding bounds are independent of menu size, so the conclusions pass to arbitrary admissible rules.

\subsection{Proof of Corollary~\ref{cor:second-single-ordeal}}

We first prove \eqref{eq:second-single-condition}.
By Proposition~\ref{prop:second-cost-frontier}, $K_c\geq1/u_*$, and partitions with nonincreasing slopes suffice for the supremum defining $K_c$.

Write $h(u)\triangleq uc'(u)$. Suppose $e_c(u)\geq e_c(u_*)$ for every $0<u<u_*$.
Since $c(u_*)=1$, this condition is equivalent to $c(u)\leq h(u)/h(u_*)$.
For a partition with nonincreasing slopes, summation by parts gives
\[
\begin{aligned}
K(P)
&=c(u_*)d_N+\sum_{i<N}c(u_i)(d_i-d_{i+1})\\
&\leq
\frac{
h(u_*)d_N+\sum_{i<N}h(u_i)(d_i-d_{i+1})
}{h(u_*)}\\
&=\frac{c'(u_*)}{h(u_*)}
=\frac1{u_*}.
\end{aligned}
\]
The inequality uses $d_i-d_{i+1}\geq0$. The next equality follows from $[h(u_i)-h(u_{i-1})]d_i=c'(u_i)-c'(u_{i-1})$
and $c'(0)=0$.
Taking the supremum and using the lower bound gives $K_c=1/u_*$.

Conversely, if the condition fails at some $u\in(0,u_*)$, the partition $\{0,u,u_*\}$ satisfies
\begin{equation}\label{eq:second-two-contract-gain}
K(\left\{0,u,u_*\right\})-\frac1{u_*}
=
\frac{
(u_*-u)c(u)[e_c(u_*)-e_c(u)]
}{
u_*u[u_*c'(u_*)-uc'(u)]
}>0.
\end{equation}
Thus $K_c>1/u_*$, proving
\eqref{eq:second-single-condition}.

We next establish the welfare implications. Keep the single-hurdle partition $\{0,u_*\}$ fixed in the construction of
Proposition~\ref{prop:second-dual-track-performance}. Choosing intercepts and eligible groups, and then a sequence of designs, as in the proposition and the proof of Theorem~\ref{thm:second-order}, gives feasible rules satisfying
\[
s_k(W_k^{\mathrm{DT}}-V^0)
\longrightarrow
\frac{\E[\beta(Z)J_Z^\star]}{u_*}
\qquad\text{as }k\to\infty.
\]
If the elasticity condition holds, this limit equals $D$, so Theorem~\ref{thm:second-order} implies $W_k^\star-W_k^{\mathrm{DT}}=o(s_k^{-1})$ as $k\to\infty$.

For necessity, consider a finite menu offering only rejection and certain admission. Only the lowest-report certain-admission contract can be undominated.
At any switch from rejection to admission, utility is zero, so the two efforts are $0$ and $u_*$.
Consequently,
\[
\frac{\Delta c'}{\Delta[uc']}
=
\frac{c'(u_*)}{u_*c'(u_*)}
=
\frac1{u_*}.
\]
The localization argument and the proof of Proposition~\ref{prop:second-conditional-bound} therefore apply with $K_c$ replaced by $1/u_*$.
Finite submenus preserve binary allocations, so the approximation in Lemma~\ref{lem:second-sensitivity} extends the bound to arbitrary admissible binary rules.
Every feasible sequence of such rules consequently satisfies
\[
\limsup_{k\to\infty}s_k(W_k-V^0)
\leq
\frac{\E[\beta(Z)J_Z^\star]}{u_*}.
\]
If $\E[\beta(Z)J_Z^\star]>0$ and the elasticity condition fails,
\eqref{eq:second-two-contract-gain} gives $K_c>1/u_*$.
The binary-rule bound is then strictly below
$D=K_c\E[\beta(Z)J_Z^\star]$, proving necessity.

\end{document}